\newcommand{\quot}[1]{``#1''}
\newcommand{\E}[1]{\ensuremath{\mathrm{E}_{#1}}} 
\newcommand{\SO}[1]{\ensuremath{\mathrm{SO}(#1)}}
\newcommand{\SU}[1]{\ensuremath{\mathrm{SU}(#1)}}
\newcommand{\SL}[2]{{\ensuremath{\mathrm{SL}({#1},\mathds{#2})}}}
\newcommand{\PSL}[2]{{\ensuremath{\mathrm{PSL}({#1},\mathds{#2})}}}
\newcommand{\GL}[2]{\ensuremath{\mathrm{GL}({#1},\mathds{#2})}}
\newcommand{\ex}[1]{\ensuremath{\mathrm{e}^{#1}}}
\newcommand{\hypmes}[1]{\ensuremath{\frac{d^2 #1}{{#1_2}^2}}\,}
\newcommand{\f}[2]{\ensuremath{\ln\left(\frac{T_2}{#1 #2}\left|\eta\left(\frac{T}{#1 #2}\right)\right|^4\,\frac{#1 U_2}{#2}\left|\eta\left(\frac{#1U}{#2}\right)\right|^4\right)}}
\newcommand{\sumfrac}[1]{\ensuremath{\frac{u_{#1}}{v_{#1}}}}
\newcommand{\sumfraca}[1]{\ensuremath{\frac{u_{#1}}{v_{#1}}}}
\newcommand{\sumfracb}[1]{\ensuremath{\frac{v_{#1}}{u_{#1}}}}
\newcommand{\sumset}[5]{\ensuremath{\hspace*{0.2em}^{#1}_{#2}\hspace*{-0.1em}{#5}^{#4}_{#3}}}
\newcommand{\I}{\mathrm{i}}
\DeclareMathOperator{\lcm}{lcm}
\DeclareMathOperator{\diag}{diag}
\numberwithin{equation}{section}
  \newtheoremstyle{style1}
  {5pt}
  {5pt}
  {}
  {}
  {\normalfont\bfseries}
  {}
  {.5em}
  {}
  \newtheorem{thm}{Theorem}[section]
  \newtheorem{lemdef}[thm]{Lemma and Definition}
  \newtheorem{lem}[thm]{Lemma}
  \newtheorem{cor}[thm]{Corollary}
  \newtheorem{defini}[thm]{Definition}
  \newtheorem*{claim}{Claim}
  \newtheorem{prop}[thm]{Proposition}
 \newtheorem*{remarks}{\textbf{Remarks}}
 \newtheorem*{remark}{\textbf{Remark}}
\begin{document}
\thispagestyle{empty}

\begin{center}
{\Large\bf 
The Computation of One-Loop Heterotic String Threshold Corrections for General Orbifold Models with Discrete Wilson Lines
}

\vspace{1cm}

\textbf{
Michael A. Klaput\footnote[1]{Email: \texttt{Michael.Klaput@physics.ox.ac.uk}}$^{,\,a,b}$,
Christian Paleani\footnote[2]{Email: \texttt{christian.paleani@math.lmu.de}}$^{,\,c}$
}
\\[5mm]
\textit{\small
$^{a}$%
Rudolf Peierls Center for Theoretical Physics\\ 1 Keble Road, Oxford OX1 3NP, UK\\[2mm]
$^{b}$%
St. John's College\\ St. Giles, Oxford OX1 3JP, UK\\[2mm]
$^{c}$%
Mathematisches Institut, Ludwig--Maximilians-Universit\"at M\"unchen, \\
Theresienstr. 39, D-80333 M\"unchen, Germany
}
\end{center}
\vspace{0.3cm}

\begin{abstract}
We calculate the moduli dependent part of string one-loop threshold corrections to gauge couplings for the heterotic string theory compactified on abelian toroidal orbifolds, allowing for arbitrary discrete Wilson lines. We show that the knowledge of threshold corrections for any such compactification is equivalent to solving a class of integrals. We solve a sub-class of these integrals and show how any model can be mapped onto this class by fractional linear transformations of its fixed plane moduli. Modular symmetries of the final expression are discussed.
\end{abstract}

\tableofcontents

\pagenumbering{arabic}

\section{Introduction}

In the past decades, string theory emerged as one of the most promising candidates for a unified description of nature---providing a quantum theory of all known forces including gravity. However, it still remains an open problem to find a vacuum of the theory which consistently describes the well-known standard model of elementary particle physics and enables us to predict physics beyond the standard model---hopefully leading to a connection of string theory to experiment. 

The setting for our work will be the heterotic string, for which numerous models are known with the minimal supersymmetric standard model (MSSM) spectrum and gauge group. As compactification spaces we will be looking at abelian toroidal orbifolds. These have been investigated thoroughly since the early works \cite{stringsonorbifolds1} \cite{stringsonorbifolds2} and have attracted several interest recently, see for example \cite{ratz1} \cite{ratz2} \cite{ratz3}.

Besides the difficulty of obtaining the correct massless spectrum and standard model gauge group, there exist several other problems. The most severe of these possibly being the discrepancy between the expected energy scale for the unification of all forces from experimental data (the GUT scale) and the scale suggested by string theory (the string scale) at tree level. These two scales differ by an order of magnitude.

There are various possibilities of how to resolve this problem: new particles at intermediate mass scales, non-standard affine levels and string one-loop effects. See \cite{Dienes} for a review.

String one-loop effects can alter the unification scale through threshold corrections to the gauge couplings. It is essential that, for the reconciliation of GUT scale and string scale, it suffices to know the part of these threshold corrections which depends on the moduli of the compactifying space. This part can be obtained from the general formula determined in \cite{Kaplunovsky}, valid for any vacuum of the heterotic string. In ref. \cite{Dixon_et_al2} this formula has been applied to abelian toroidal orbifold compactifications. It was possible to evaluate formulas for a special sub-class of the orbifold compactifications, in which the string modes contributing to the threshold corrections are localized on a complete two dimensional sub-torus in the absence of discrete Wilson lines. This result was further generalized in \cite{Stieberger4} where the moduli dependent part of the threshold corrections has been evaluated for special orbifold geometries, in which the string states contributing to the threshold corrections are not localized in a complete two-dimensional sub-torus. However, the inclusion of discrete Wilson lines has remained a problem ever since. Possible effects of non-vanishing discrete Wilson lines on modular symmetries have been first considered in \cite{ErlerSpalinski} \cite{ErlerTdual} \cite{BailinCoxeter}. There it was realized that discrete Wilson lines can break the usual $\PSL{2}{Z}$ symmetry. Symmetry groups of various orbifold models with discrete Wilson lines were obtained in \cite{LoveDuality} \cite{LoveWilson}.

However, the calculation of threshold corrections in the presence of discrete Wilson lines remained as an open problem. The recent activity in orbifold model building has renewed interest in solving this problem, since all promising models posses non-vanishing discrete Wilson lines. Therefore, investigation of the unification scale in these models requires the knowledge of the moduli dependent part of the gauge coupling threshold corrections. Furthermore, many aspects of low-energy phenomenology rely on $\PSL{2}{Z}$ as the group of modular symmetries \cite{lustshapere}. The knowledge of the group of modular symmetries in the presence of Wilson lines is, therefore, useful to examine certain low-energy phenomena in these models.

This has been the main motivation behind the work presented here. In the following, we shall explain why we think that this work has solved the task of determining the moduli dependent part of gauge threshold corrections in arbitrary abelian toroidal orbifold compactification of the heterotic string (allowing for arbitrary discrete Wilson lines) in full generality: We will develop a constructive prescription of how to calculate an analytic expression for an arbitrary model.

The organisation of the paper is as follows. The first part of Chapter \ref{boundary} introduces the physical language we will be using throughout the work. The second part investigates the action of modular transformations on conformal field boundary conditions on the world sheet torus. The third part develops the notion of closed, minimally closed and generating sets of boundary conditions. Furthermore, their structure is analysed. Chapter \ref{genset} applies the structure uncovered in chapter \ref{boundary} to show that the calculation of threshold corrections simplifies significantly since the partition functions to different boundary conditions can be related to each other. This is used to reformulate the problem as an integral over the partition function associated to one specific boundary condition, with the domain of integration being the fundamental domain of the symmetry group of this partition function. Next, the fixed plane condition for the quantum numbers is considered. It is shown that this is a system of linear Diophantine equations which always possesses solutions and can be characterised by four rational numbers $\alpha,\beta,\gamma,\delta\in\mathds{Q}$. These four numbers determine a class of integrals which is equivalent to the calculation of one-loop threshold corrections. Chapter \ref{chapfour} starts with the calculation of the integrals of the classes $\alpha=\beta=\gamma=1$ and $\delta\in\mathds{Z}$ and the symmetry of the result is investigated. Afterwards, it is proven that all other classes can be reduced to the classes $\alpha=\beta=\gamma=1$ and $\delta\in\mathds{Z}$ via a redefinition of the complex structure and K\"ahler structure moduli of the compactification space.
\section{On Boundary Conditions, Closed Sets and Generating Elements}\label{boundary}

This section will follow several aims. Firstly, we will give a brief introduction into the physical background of our analysis. Secondly, we will show that the set of elements of boundary conditions which contribute to the one-loop gauge threshold corrections admit a certain structure. Thirdly, we will proof the existence of a generating system for these elements.

The main result of this section will be theorems \ref{closed}, \ref{speqex} and \ref{part}. The first states that the set of boundary conditions which contributes to $\Delta_a$ is closed under $\SL{2}{Z}$.\footnote{Therefore, the sum over all associated partition functions is modular invariant.} The second states that there is a natural choice of a generating system of $\mathcal{O}$. The third states that the partition functions of all boundary conditions of a generating set coincide if at least one of them is invariant under $\Gamma'\subset \Gamma$ for some finite index subgroup $\Gamma'$ of $\Gamma$. If a transformation is regarded as an action on boundary conditions, it is regarded as an element of $\SL{2}{Z}$. The definition of the action of such a transformation accounts for the fact that there exists an associated modular transformation, hence, an element of $\Gamma$, on the partition function which is associated to this boundary condition. We should mention that it is possible to generalize these results. One can look at closed sets under other groups than $\SL{2}{Z}$ and assign functions to those sets with crucial relation \eqref{219}. These sets do not necessarily have to be boundary conditions and the function not to be partition functions. But since we want to compute one-loop gauge threshold corrections, we will not develop this further in the present work.

\subsection{Physical terminology}

The aim of this subsection is to outline the physical motivation for our computation. Furthermore, we would like to establish a precise language for the class of string theory models we will be discussing. The reader without a background in string theory will find definitions for the physical terms used throughout this work. However, the definitions given are designed to suit this work and will most probably not be of great practicality in most other contexts.

Our starting point is the heterotic string theory. We compactify down to four dimensions by using the geometry $\mathds{M}^4 \times O$, where $O$ is a toroidal orbifold defined by \cite{stringsonorbifolds1} \cite{stringsonorbifolds2}
\begin{equation}
O = \mathrm{T}^6 / \mathcal{P}\;.
\end{equation}
This means that all points $x\in\mathds{T}^6$ are identified which are related by a group element $\theta\in\mathcal{P}$ as\footnote{Here $x$ and $y$ are elements of the \emph{torus}. If we view $x,y\in\mathds{R}^6$, they have to coincide only modulo lattice translations.}
\begin{equation}
x \sim  y \quad\Longleftrightarrow\quad y= Q(\theta)\, x
\end{equation}
where $Q(\theta)$ is a representation of $\theta$ on the torus lattice. The group $\mathcal{P}$ is called the point group of the orbifold. We will choose an abelian group $\mathcal{P}=\mathds{Z}_N$ which has been the choice in the vast majority of constructed orbifold models until today.

Note, that in contrast to earlier work, we do not impose further assumptions. We do not require the torus lattice to be decomposable into $\mathrm{T}^4 \times \mathrm{T}^2$ or $\mathrm{T}^2\times\mathrm{T}^2\times\mathrm{T}^2$ as in \cite{Dixon_et_al2}. Furthermore, we allow for arbitrary discrete Wilson lines, which can be understood as the $\mathcal{P}$ action in the $\E{8} \times \E{8}'$ or $\SO{32}$ gauge bundle over $O$ \cite{stringsonorbifolds2}.

This is of particular interest in string phenomenology since \quot{switching on} discrete Wilson lines can be used to lower the rank of the gauge group smoothly and reduce the number of generations in a given model without Wilson lines \cite{Ibanez:1986tp}, yielding numerous interesting models with three generations and the standard model gauge group, see \cite{ratz1} \cite{ratz2} \cite{ratz3} for recent constructions.

For our purposes, the following definition will be most suitable

\begin{defini} An (abelian toroidal) orbifold model (of the heterotic string) $(\Lambda,Q,\{\vec{A}_i\})$ is given by specifying a six-dimensional torus lattice $\Lambda$, the generator $Q$ of a $\mathcal{P}=\mathds{Z}_N$ representation on the torus lattice and a set of rational numbers $\{A_i^I\}=:\vec{A}_i$ $i=1,\ldots,6$, $I=1,\ldots,16$.

We call the generating element of $\mathds{Z}_N$ the (orbifold) twist, $N$ the order of the twist and the six sixteen dimensional vectors $\vec{A}_i$ the (discrete) Wilson lines of the model specified in that way.

An orbifold model $(\Lambda,Q,\{\vec{A}_i\})$ where all $\vec{A}_i = 0$ is called an orbifold model without Wilson lines.
\end{defini}
\begin{remarks}$ $
\begin{itemize}
\item The words in parentheses will be often ommitted in favour of brevity.
\item It has to be stressed again, that this definition is designed to suit our purposes. To construct a sensible physical model one has to impose additional restricitons on the twist and the Wilson lines \cite{stringsonorbifolds2}. However, since our results will remain true even for unphysical choices of $(\Lambda,Q,\{\vec{A}_i\})$, we appeal to this minimal definition in order to avoid a loss of generality. Obviously, every physical model is included in this definition.
\item Throughout the article, we will be explicitly using $\E{8} \times \E{8}'$ heterotic theory for concreteness. However, all of our results are most easily reformulated to apply to the $\SO{32}$ heterotic theory as well by simply choosing the Wilson lines to take values in the $\SO{32}$ root lattice and replacing the term $\E{8} \times \E{8}'$ by $\SO{32}$ wherever used in the text.
\item If $\theta$ is the generating element of $\mathds{Z}_N$, then $Q(\theta)$ will usually be called twist as well.
\end{itemize}
\end{remarks}

A special role is played by singular loci in the orbifold. These occur if the point group is not acting freely on $\mathrm{T}^6$. 

There are two concepts we will need. Firstly, the concept of fixed points in the orbifold. Secondly the notion of a fixed plane. The fixed points are given by
\begin{equation}
	\{x\in \mathds{R}^6 \;|\; Q(\theta)\,x\equiv x\,(\mathrm{mod}\,\Lambda)\,\}_{k\in\mathds{N}}\,.
\end{equation}
This means the set of all points which are equivalent under the action of the twist modulo some lattice translation. The fixed planes are relevant, since their intersection with the torus lattice is given by all states which are invariant under some power of the twist. They correspond to all states which are present after orbifolding and given by
\begin{equation}
	\{x\in \mathds{R}^6 \;|\; Q(\theta)^k\,x= x\,\}_{k\in\mathds{N}}\,.
\end{equation}
It can be shown \cite{Dixon_et_al2}, that this locus has either dimension $0$, $2$ or $6$ and accordingly defines an invariant point, an invariant plane (fixed plane) or acts trivially on the torus lattice. 
Fixed planes are given by the solutions of the equation
\begin{equation}
	Q(\theta)^k\,x=x
\end{equation}
for $k\in\mathds{N},\, k\not = 1$ and $Q^k \not = \mathds{1}$ \footnote{We will usually say that $\theta^k$ fixes a plane and call the plane the $\theta^k$ fixed plane.}. 

For our task at hand, the calculation of the moduli dependent part of a certain one-loop amplitude, it will be necessary to consider fields on a world-sheet torus of the string. The Hilbert space of fields decomposes into sectors of different boundary conditions along the fundamental cycles of the world-sheet torus.

\begin{defini} A field $\phi$ on the world-sheet torus $\mathrm{T}^2$ (parametrized as $\sigma_1 + \tau\,\sigma_2$, $\sigma_{1,2}\in\mathds{R}$, $\tau\in\mathds{C}$, $\sigma_{1,2} \cong  \sigma_{1,2} + 1$) is said to carry boundary conditions $(g,h)\in\mathcal{P}$ if
\begin{equation}\label{boundaryconds}
\begin{split}
\phi(\sigma_1 + 1,\sigma_2) &= g\cdot\phi(\sigma_1,\sigma_2)\\
\phi(\sigma_1 ,\sigma_2 + 1) &= h\cdot\phi(\sigma_1,\sigma_2)
\end{split}
\end{equation}
\end{defini}
\begin{remark} \quot{$\cdot$} denotes some action of $\mathcal{P}$ on the fields. In the case of $\mathcal{P}=\mathds{Z_N}$ this is given as follows: We can always choose a complex basis $\{z_i\}_{i=1,2,3}$ for our six real fields embedding the string into the orbifold. In these coordinates, a general irreducible representation of the twist is given by
$\diag(\theta_1,\theta_2,\theta_3)$
with $\theta_i\in\mathds{C}$, $\theta_i^N = 1$ and $|\theta_i|=1$.
\end{remark}

The set of boundary conditions is classified as follows:
\begin{defini} We call
\begin{equation}
\{(g,h)\in\mathcal{P}\times\mathcal{P}\,|\,g \text{ and } h \; \text{fix the same points}\}
\end{equation}
the $\mathcal{N}=1$ sector of the orbifold model,
\begin{equation}
\{(g,h)\in\mathcal{P}\times\mathcal{P}\,|\,g \text{ and } h \; \text{fix the same plane}\}
\end{equation}
the $\mathcal{N}=2$ sector of the orbifold model and
\begin{equation}
\{(g,h)\in\mathcal{P}\times\mathcal{P}\,|\,g \text{ and } h \; \text{leave the whole orbifold invariant}\}
\end{equation}
the $\mathcal{N}=4$ sector of the orbifold model.
\end{defini}
Let us define the main object of interest in the present work.
\begin{defini} We call $\Delta_a$ the (moduli dependent part of the) threshold correction, with
\begin{equation}
\label{objectofinterest}
	\Delta_a=\int_{R_\Gamma} \hypmes{\tau}\,\sum_{(g,h)\in\mathcal{O}}\,b_a^{(g,h)} \tau_2\,Z^{\text{1-\text{loop}}}_{(g,h)}(\tau)-R\,.
\end{equation}
Here $\mathcal{O}$ denotes the $\mathcal{N}=2$ sector of the orbifold model, $Z^{\text{1-\text{loop}}}_{(g,h)}(\tau)$ the partition function associated to the boundary conditions $(g,h)$ and $R_\Gamma$ the fundamental domain of the group $\Gamma$ in the upper half complex plane $\mathds{H}^+$.
\end{defini}

We will examine this expression in more detail in section \ref{genset}. For the discussion in the present section, it will be of importance that only the $\mathcal{N}=2$ sector contributes to the threshold correction. As we will see shortly, boundary conditions of this sector exhibit a useful structure with respect to modular transformations on the world-sheet torus.

\subsection{Transformation of Boundary Conditions}
Boundary conditions specify the transformation properties of the fields under translation of the world-sheet torus-lattice by fundamental cycles. It will be of special interest, how these properties behave under the action of a modular transformation on the torus. There exist different conventions and formulas in the literature of how boundary conditions on the world-sheet transform under modular transformations. To keep our work self-contained, we start with an examination how boundary conditions map onto each other under the action of $\SL{2}{Z}$. This is to avoid any confusion due to the different conventions used throughout the literature.

At the beginning we would like to state the result of this examination: The generators of the modular group, namely $S:\tau\mapsto-1/\tau$ and $T:\tau\mapsto\tau+1$ change the boundary conditions to 
\begin{equation}
\label{49}
\begin{split}
 &(\theta^k,\theta^l)\stackrel{S}{\to}(\theta^l,\theta^{N-k})\quad \text{and}\\
 &(\theta^k,\theta^l)\stackrel{T}{\to}(\theta^k,\theta^{l-k})\,,\quad\text{respectively.}
\end{split}
\end{equation}
Where $N$ denotes the order of the twist ($\theta^N=1$). Next, we will show why this is true.

Let us look at a modular transformation on the world-sheet torus. The torus is defined by a lattice $\Lambda$ in complex space via $T^2\cong \mathds{C}/\Lambda$. We will describe it by a single complex modular parameter $\tau$. The resulting object becomes a one dimensional complex manifold by choosing an atlas and a complex structure on it. Let us denote the complex variable on the torus (in local coordinates) by $\nu$. This $\nu$ is then parameterized by two real variables $\sigma_1$ and $\sigma_2$,
\begin{equation}
\label{52}
 \nu=\sigma_1+\sigma_2\,\tau\,.
\end{equation}
To see how the boundary conditions change under modular transformations one has to recall that two lattices $\Lambda$ and $\Lambda'$ define the same complex structure on a torus if and only if there exists a complex number $\xi\in\mathds{C}$ such that $\Lambda=\xi\,\Lambda'$. This condition is equivalent to the existence of a matrix $V\in\SL{2}{Z}$ which links the complex parameters $\tau \in \Lambda$, $\tau' \in \Lambda'$, by a modular transformation
\begin{equation}
\label{50}
 \tau'=\frac{a\,\tau+b}{c\,\tau+d}\quad\text{for}\quad V=\begin{pmatrix} a&b \\ c&d \end{pmatrix}\,.
\end{equation}
By inserting equation \eqref{50} into \eqref{52}, factoring out $c\,\tau+d$, re-expressing $\sigma_1$ and $\sigma_2$, and exploiting the definition of equivalence of complex structures, one can infer that this transformation corresponds to

\footnote{In the convention of \cite{GSW} equation \eqref{51} takes the form 
$(\sigma_1,\,\sigma_2)\longmapsto(\sigma'_1,\,\sigma'_2)=(a\,\sigma_1+b\,\sigma_2,c\,\sigma_1+d\,\,\sigma_2)\,.$ In our convention this would correspond to the modular transformation by the inverse matrix and, hence, this has no influence on the calculation of the thresholds.}
\begin{equation}
\label{51}
 (\sigma_1,\,\sigma_2)\longmapsto(\sigma'_1,\,\sigma'_2)=(d\,\sigma_1+b\,\sigma_2,c\,\sigma_1+a\,\,\sigma_2)\,.
\end{equation}
For fields $\phi(\sigma_1,\,\sigma_2)$ with boundary conditions \eqref{boundaryconds} it follows that a $V$ transformation on $\tau$ of the form \eqref{50} results in
\[
  \phi'(\sigma'_1+1,\,\sigma'_2)=\phi'(d\,\sigma_1+b\,\sigma_2+1\,,\,c\,\sigma_1+a\,\sigma_2)
\]
\[
 \,\qquad\qquad\qquad             =\phi'\left((d(\sigma_1+a)+b(\sigma_2-c)\,,\,c(\sigma_1+a)+a(\sigma_2-c)\right)
\]
\[
 \,\qquad\qquad\qquad             =V\,\phi(\sigma_1+a\,,\,\sigma_2-c)
\]
\[
 \,\qquad\qquad\qquad             =g^a\,h^{-c}\,V\,\phi(\sigma_1,\,\sigma_2)
\]
\[
 \,\qquad\qquad\qquad             =g^a\,h^{-c}\,\phi'(\sigma'_1,\,\sigma'_2)
\]
\begin{equation}
\label{47}
 \,\qquad\qquad\qquad             =g'\,\phi'(\sigma'_1,\,\sigma'_2)\,.
\end{equation}
And consequently $g'=g^a\,h^{-c}$. Here we used $[g,h]=0$. In analogy it follows that
\[
 h'=g^{-b}\,h^d\,.
\]
Therefore, the change of the boundary conditions under a modular transformation reads
\begin{equation}
\label{48}
 V:\quad (g,\,h)\longmapsto (g',h')=(g^a\,h^{-c},\,g^{-b}\,h^d)\,.
\end{equation}
If one applies \eqref{48} to twists, equation \eqref{49} is proven.

\subsection{Modular Subgroups, (Minimally) Closed Sets, Generating Sets and Partition Functions}

This section will provide necessary tools for the computation of one-loop gauge threshold corrections for general orbifold models with arbitrary discrete Wilson lines. In section \ref{genset} we will apply theorem \ref{part} to simplify the general expression for $\Delta_a$. Having established the notion of a generating set of boundary conditions in definition \ref{gensetdef}, this theorem tells us that all partition functions associated to elements of one generating set actually coincide if the generating set is defined with respect to the (modular) symmetry group $\Gamma' \subset \Gamma$ of one of these elements\footnote{Since the partition functions coincide, the symmetry group is the same for all these elements}. The examination whether generating sets of boundary conditions are unique is not important to our scope, since an immediate consequence of their definition is that the threshold correction does not depend on possible ambiguities. All this will be utilised together with lemma \ref{specialelement} to justify our ansatz to compute $\Delta_a$ later in section \ref{genset}.

We will start this issue by defining the notion of a group right action on a finite set. Afterwards we will define a binary operation which accounts for the transformation behaviour of the world-sheet torus partition function associated to the boundary conditions $(g,h)$ under modular transformations. This will give a group right action of modular transformations, viewed as elements of $\SL{2}{Z}$, on the (closed) set of boundary conditions. Then we will define closed and minimally closed sets and show under some assumptions that the $\mathcal{N}=2$ sector of the orbifold model $\mathcal{O}$ which contributes to $\Delta_a$ is closed. Having established their definition, the structure of closed sets will be examined. Afterwards, we will define generating sets and prove their existence within closed sets. At the end we will observe that it is possible to choose a generating set $\mathcal{O}_0$ of $\mathcal{O}$ such that the torus partition functions of $\mathcal{O}_0$ coincide.

In the proofs concerning boundary conditions of conformal fields on a world sheet torus we have taken into account that in our orbifold models we always have a commutative point group $\mathcal{P}=\mathds{Z}_N$. This has been done in order to simplify and shorten the proofs. However, many of the presented properties of sets of boundary conditions do not rely on the commutativity of the point group and the whole discussion can be generalised to the non-abelian case (using the fact, that in an element of boundary conditions $(g,h)$ physical consistency requires $[g,h]=0$). However, for the sake of brevity we will only consider abelian point groups since this is our concern in the following sections. Though, we tried to keep the notation as general as possible.

For the remainder of this section, if not stated differently, let $\mathcal{O}$ be a finite set, $(G,\cdot)$ be a group, $(G',\cdot)$ be a subgroup of $(G,\cdot)$ and $\ast$ be a right action of $(G,\cdot)$ on $\mathcal{O}$, i.e.
\begin{align}
\label{groupaction}
	&\forall\, x\in\mathcal{O}\,:\,\forall\, V_1,V_2 \in G\,:\,x\ast \left(V_1\cdot V_2\right)=\left(x\ast V_1\right) \ast V_2\\
	&\forall\,x\in\mathcal{O}\,:\, x\ast \mathds{1} =x
\end{align}
Now let us introduce a binary operation which accounts for the transformation of partition functions under modular transformations.
\begin{defini}\label{ast}
 Let $V\in \SL{2}{Z}$ and let $(g,h)\in\mathcal{P}^2$, with $\mathcal{P}$ denoting the point-group of an orbifold model. Furthermore, let $V$ be parameterised as
\begin{equation}
\label{ast1}
 V=\begin{pmatrix} a&b\\c&d \end{pmatrix}\,.
\end{equation}
 Then we define a binary operation $\mathcal{P}^2\times\SL{2}{Z}\to\mathcal{P}^2,\,\left((g,h),V\right)\longmapsto (g,h)\ast V$ via
\begin{equation}
\label{ast2}
 (g,h)\ast V:=(g^a\,h^{-c},g^{-b}\,h^d)\,.
\end{equation}
\end{defini}
This is motivated by \eqref{48} and further justified by the fact that
\begin{equation}
\label{219}
 V\cdot\tau_2 Z^{\text{1-loop}}_{(g,h)}(\tau)\equiv (\mathrm{Im}\, V \tau) Z^{\text{1-loop}}_{(g,h)}(V\tau)=\tau_2Z^{\text{1-loop}}_{(g,h)\ast V}(\tau)\,.
\end{equation}
This operation defines a group right action of $\SL{2}{Z}$ on the set of boundary conditions (and, hence, on the $\mathcal{N}=2$ sector $\mathcal{O}$), as stated in
\begin{prop}\label{inverse}
 Let $\mathcal{P}^2$ be the set of boundary conditions of an orbifold model and $\ast$ be defined in definition \ref{ast}. Then $\ast$ defines a group right action of $\SL{2}{Z}$ on $\mathcal{P}^2$.
\end{prop}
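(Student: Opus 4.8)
The plan is to verify directly the two axioms \eqref{groupaction} of a group right action for the operation $\ast$ of Definition \ref{ast}, after first noting that $\mathcal{P}^2$ is indeed closed under $\ast$. Closure is immediate: for $(g,h)\in\mathcal{P}^2$ and $V=\begin{pmatrix}a&b\\c&d\end{pmatrix}\in\SL{2}{Z}$ the entries $a,b,c,d$ are integers, hence $g^a h^{-c}$ and $g^{-b}h^d$ are again elements of the group $\mathcal{P}$ and $(g,h)\ast V\in\mathcal{P}^2$. The identity axiom is trivial, since $\mathds{1}=\begin{pmatrix}1&0\\0&1\end{pmatrix}$ gives $(g,h)\ast\mathds{1}=(g^1 h^0,g^0 h^1)=(g,h)$.

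For the compatibility axiom I would write $V_1=\begin{pmatrix}a_1&b_1\\c_1&d_1\end{pmatrix}$ and $V_2=\begin{pmatrix}a_2&b_2\\c_2&d_2\end{pmatrix}$, compute $V_1 V_2=\begin{pmatrix}a_1 a_2+b_1 c_2 & a_1 b_2+b_1 d_2\\ c_1 a_2+d_1 c_2 & c_1 b_2+d_1 d_2\end{pmatrix}$, and expand both sides of the claimed identity. On one side, $(g,h)\ast(V_1 V_2)=(g^{a_1 a_2+b_1 c_2}\,h^{-(c_1 a_2+d_1 c_2)},\,g^{-(a_1 b_2+b_1 d_2)}\,h^{c_1 b_2+d_1 d_2})$. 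On the other side, set $(g',h'):=(g,h)\ast V_1=(g^{a_1}h^{-c_1},g^{-b_1}h^{d_1})$ and compute $(g',h')\ast V_2=((g')^{a_2}(h')^{-c_2},(g')^{-b_2}(h')^{d_2})$. Substituting and expanding, e.g. $(g^{a_1}h^{-c_1})^{a_2}(g^{-b_1}h^{d_1})^{-c_2}$, and using that $g$ and $h$ commute --- which holds in $\mathcal{P}=\mathds{Z}_N$, and more generally is guaranteed by the physical constraint $[g,h]=0$ imposed on any element of boundary conditions --- the powers of $g$ and of $h$ collect to give exactly $g^{a_1 a_2+b_1 c_2}h^{-(c_1 a_2+d_1 c_2)}$; the second component is handled identically. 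This establishes $((g,h)\ast V_1)\ast V_2=(g,h)\ast(V_1 V_2)$.

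There is no real obstacle in this argument; the only point deserving care is that both the passage $(xy)^n=x^n y^n$ and the reordering of the collected factors rely on $\mathcal{P}$ being abelian, so the step invoking commutativity should be stated explicitly rather than glossed over. As a corollary --- presumably the reason for the name of the proposition --- each map $(g,h)\mapsto(g,h)\ast V$ is a bijection of $\mathcal{P}^2$ onto itself with inverse $\ast\,V^{-1}$, because $((g,h)\ast V)\ast V^{-1}=(g,h)\ast(V V^{-1})=(g,h)\ast\mathds{1}=(g,h)$ and likewise $((g,h)\ast V^{-1})\ast V=(g,h)$, so $\ast$ realizes $\SL{2}{Z}$ by permutations of $\mathcal{P}^2$.
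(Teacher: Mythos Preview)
Your proof is correct and follows essentially the same route as the paper's own argument: a direct verification of the identity and compatibility axioms by parameterising $V_1,V_2$, multiplying them out, expanding both sides of $((g,h)\ast V_1)\ast V_2=(g,h)\ast(V_1 V_2)$, and invoking commutativity of $g$ and $h$ to collect exponents. Your additional remarks on closure and on the induced bijection with inverse $\ast\,V^{-1}$ are fine but not present in the paper's proof.
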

\begin{proof}
	Let $x\in\mathcal{P}^2$. Then it has the form $x=(g,h)$. Obviously, it holds $(g,h)\ast \mathds{1}=(g,h)$. Now, we will show
	\begin{equation}
		\left( (g,h)\ast V_1 \right)\ast V_2=(g,h)\ast(V_1\cdot V_2)\,,
	\end{equation}
	where the dot denotes the ordinary matrix-multiplication.
	
	Let $V_1,V_2\in\SL{2}{Z}$, $(g,h)\in\mathcal{P}^2$ and let $V_1$, $V_2$ be parameterised as
	\begin{equation}
	\label{inverse2}
		V_1=\begin{pmatrix} a_1&b_1\\c_1&d_1 \end{pmatrix} \quad \text{and} \quad V_2=\begin{pmatrix} a_2&b_2\\c_2&d_2 \end{pmatrix}\,,\quad \text{respectively.}
	\end{equation}
	Then matrix multiplication yields
	\begin{equation}
	\label{inverse3}
		V_1\cdot V_2=\begin{pmatrix} a_1&b_1\\c_1&d_1 \end{pmatrix}\cdot \begin{pmatrix} a_2&b_2\\c_2&d_2 \end{pmatrix}=\begin{pmatrix} a_1\,a_2+b_1\,c_2&a_1\,b_2+b_1\,d_2 \\c_1\,a_2+d_1\,c_2&c_1\,b_2+d_1\,d_2 \end{pmatrix}\,.
	\end{equation}
	Therefore, the action of $V_1\cdot V_2$ on $(g,h)$ reads
	\begin{equation}
	\label{inverse4}
		(g,h)\ast (V_1\cdot V_2)=\left(g^{a_1 a_2+b_1 c_2}\,h^{-c_1 a_2-d_1 c_2},g^{-a_1 b_2-b_1 d_2 }\,h^{c_1 b_2+d_1 d_2}\right)
	\end{equation}
	Now let us look at $((g,h)\ast V_1)\ast V_2$. Since $g$ and $h$ commute, it is given by
	\begin{equation}
	\label{inverse5}
		\begin{split}
			\left((g,h)\ast V_1\right)\ast V_2&= \left(g^{a_1}\,h^{-c_1},g^{-b_1}\,h^{d_1}\right)\ast V_2\\&=\left(g^{a_1 a_2}\,h^{-c_1 a_2}\,g^{b_1 c_2}\,h^{-d_1 c_2},g^{-a_1 b_2}\,h^{c_1 b_2}\,g^{-b_1 d_2}\,h^{d_1 d_2}\right)\\&=\left(g^{a_1 a_2 +b_1 c_2}\,h^{-c_1 a_2-d_1 c_2},g^{-a_1 b_2-b_1 d_2}\,h^{c_1 b_2+d_1 d_2}\right)\,.
		\end{split}
	\end{equation}
	If we compare \eqref{inverse5} with \eqref{inverse4}, we can deduce that
	\begin{equation}
	\label{inverse6}
		\left((g,h)\ast V_1\right)\ast V_2=(g,h)\ast (V_1\cdot V_2)\,.
	\end{equation}
	Therefore, $\ast$ defines a group right action of $\SL{2}{Z}$ on $\mathcal{P}^2$, indeed.\footnote{It is clear that this group action defines also a group action on the $\mathcal{N}=2$ sector $\mathcal{O}$ of an orbifold model.} 
\end{proof}
This result can also be seen in another way: Let $Z^{\text{1-loop}}_{(g,h)}(\tau)$ be the one-loop partition function associated to the boundary conditions $(g,h)$ and $V_1,V_2\in\SL{2}{Z}$. Then it follows that
\begin{equation}
\label{inverse7}
(V_1\cdot V_2)\cdot \tau_2Z_{(g,h)}^{\text{1-loop}}(\tau)\equiv \mathrm{Im}(V_1\cdot V_2\tau)\,Z_{(g,h)}^{\text{1-loop}}((V_1\cdot V_2)\,\tau)=\tau_2Z_{(g,h)\ast (V_1\cdot V_2)}^{\text{1-loop}}(\tau)\;.
\end{equation}
On the other hand, it is
\begin{equation}
\mathrm{Im}(V_1\cdot V_2\tau)\,Z_{(g,h)}^{\text{1-loop}}(V_1\cdot V_2\,\tau) = \mathrm{Im}(V_2\tau )\,Z_{(g,h)\ast V_1}^{\text{1-loop}}(V_2\,\tau)=\tau_2Z_{((g,h)\ast V_1)\ast V_2}^{\text{1-loop}}(\tau)
\end{equation}
and, therefore, $\left((g,h)\ast V_1\right)\ast V_2=(g,h)\ast \left(V_1\cdot V_2\right)$. This is exactly the statement of proposition \ref{inverse}, hence, ensuring definition \ref{ast} to be consistent with \eqref{219}.

As an example let $x_1,x_2,x_3$ be boundary conditions $x_i=(g_i,h_i)\in\mathcal{P}^2$ and let $Z^{\text{1-loop}}_{x_i}(\tau)$ be the torus partition function which is associated to $x_i$. Then $V_{12}$, $V_{13}$ and $V_{23}$ are defined via the following (commutative) diagram:
\begin{equation}
\begin{xy}
  \xymatrix{
      x_1 \ar[rr]^{\ast V_{12}} \ar[rd]_{\ast V_{13}}  &     &  x_2 \ar[dl]^{\ast V_{23}} \\
                             &  x_3   &
  }
\end{xy}
\end{equation}
It is obvious that
\begin{equation}
 \tau_2Z^{\text{1-loop}}_{(g_3,h_3)}(\tau)=V_{12}\cdot V_{23}\, \left(\tau_2Z^{\text{1-loop}}_{(g_1,h_1)}(\tau)\right)\,.
\end{equation}
The computation of the threshold correction $\Delta_a$ relies on the concept of closed sets of boundary conditions. Their  definition is given by
\begin{defini}\label{minset}
Let $\mathcal{O}$ be a finite set and $\ast$ be a group right action of a group $(G,\cdot)$ on $\mathcal{O}$. Then we call this set closed under $G$ if
\begin{equation}
		\forall\, x\in\mathcal{O}\,:\,\forall \, V\in G\,:\, x\ast V\in\mathcal{O}\,.
\end{equation}
If there is no $\mathcal{O}'\subsetneq\mathcal{O}$ such that $\mathcal{O}'$ is closed, we call $\mathcal{O}$ a minimally closed set.
\end{defini}
As an example for a minimally closed set consider $\mathcal{O}=\{(1,\theta^2),(\theta^2,1),(\theta^2,\theta^2)\}$ for a $\mathds{Z}_4$ orbifold. An example for a non minimally closed set is the $\mathcal{N}=2$ sector of the $\SU{2}\times\SO{10}/\mathds{Z}_8$ orbifold in ref. \cite{Stieberger4}. An immediate corollary of definition \ref{minset} is
\begin{cor}\label{sym}
 Let $\mathcal{O}$ be minimally closed under $(G,\cdot)$. Then for all $x,x'\in\mathcal{O}$ there exists a $V\in G$ such that $x'=x \ast V$.
\end{cor}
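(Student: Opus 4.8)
The plan is to run the standard orbit argument and then invoke minimality. Fix $x\in\mathcal{O}$ and consider its orbit under the given right action, $\mathcal{O}_x:=\{\,x\ast V : V\in G\,\}$. The claim will follow once we show that $\mathcal{O}_x=\mathcal{O}$, because then any $x'\in\mathcal{O}$ lies in $\mathcal{O}_x$ and is therefore of the form $x\ast V$ for some $V\in G$.

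First I would note $\mathcal{O}_x\subseteq\mathcal{O}$: this is immediate from the hypothesis that $\mathcal{O}$ is closed under $G$ in the sense of Definition \ref{minset}, since every element of $\mathcal{O}_x$ has the form $x\ast V$ with $x\in\mathcal{O}$, $V\in G$. Next I would check that $\mathcal{O}_x$ is itself closed under $G$: given $y=x\ast V\in\mathcal{O}_x$ and an arbitrary $W\in G$, the group-action axiom \eqref{groupaction} gives $y\ast W=(x\ast V)\ast W=x\ast(V\cdot W)$, and $V\cdot W\in G$ since $G$ is a group, so $y\ast W\in\mathcal{O}_x$. Moreover $\mathcal{O}_x$ is nonempty, as $x=x\ast\mathds{1}\in\mathcal{O}_x$. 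Thus $\mathcal{O}_x$ is a nonempty closed subset of $\mathcal{O}$, and by minimality of $\mathcal{O}$ it cannot be a proper subset; hence $\mathcal{O}_x=\mathcal{O}$, which proves the corollary.

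I do not expect a genuine obstacle here. The only point requiring a little care is recognising the orbit $\mathcal{O}_x$ as a \emph{closed} set in the exact sense of Definition \ref{minset} — this is precisely where the group axioms (closure of $G$ under $\cdot$, and the presence of $\mathds{1}$) enter — so that minimality can legitimately be applied. One should also keep in mind the tacit convention that the subsets competing for minimality are nonempty (or simply use $\mathcal{O}_x\neq\emptyset$), so that the empty set is not read as a spurious proper closed subset.
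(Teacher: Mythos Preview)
Your argument is correct and is essentially the paper's own proof: both hinge on the fact that an orbit under the $G$-action is a nonempty closed subset of $\mathcal{O}$, so minimality forces it to be all of $\mathcal{O}$. The only cosmetic difference is that the paper phrases this by contradiction (assuming some $x'$ is not in the orbit of $x$ and then exhibiting $x'\ast G$ as a proper closed subset), whereas you argue directly with $\mathcal{O}_x$; the underlying idea is identical.
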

\begin{proof}
 Let $\mathcal{O}$ be minimally closed and $x\in\mathcal{O}$. Let us assume that there exists a $x'\in\mathcal{O}$ such that there is no $V\in G$ with $x'=x \ast V$. Then we can deduce that $\mathcal{O}':=x' \ast G$ is per definition closed under $G$ and $\mathcal{O}'\subsetneq\mathcal{O}$. Since $\mathcal{O}$ is minimally closed, this is a contradiction and the theorem is proven.
\end{proof}
Next, we will show 
\begin{thm}\label{closed}
 Let $\mathcal{O}$ be the $\mathcal{N}=2$ sector of an orbifold model. Then $\mathcal{O}$ is closed under $\SL{2}{Z}$ and in particular under all subgroups $G'$ of $\SL{2}{Z}$. All elements of a minimally closed set in $\mathcal{O}$ fix the same plane. 
\end{thm}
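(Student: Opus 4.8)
The plan is to reduce both claims to one structural observation: the operation $\ast V$ does not change the subgroup of $\mathcal{P}$ generated by the two entries of a boundary condition. Writing $x=(g,h)$ and, for $V=\left(\begin{smallmatrix} a&b\\ c&d\end{smallmatrix}\right)\in\SL{2}{Z}$, $x'=x\ast V=(g^{a}h^{-c},\,g^{-b}h^{d})$, both entries of $x'$ are words in $g$ and $h$, so $\langle g^{a}h^{-c},g^{-b}h^{d}\rangle\subseteq\langle g,h\rangle$. By Proposition \ref{inverse} we have $x'\ast V^{-1}=x$, and since $V^{-1}\in\SL{2}{Z}$ the same remark applied to $x'$ yields $\langle g,h\rangle\subseteq\langle g^{a}h^{-c},g^{-b}h^{d}\rangle$. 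Hence $x$ and $x\ast V$ generate the same subgroup of $\mathcal{P}$. Concretely, for $g=\theta^{k}$ and $h=\theta^{l}$ the exponent pair $(k,l)$ is transformed by the unimodular matrix $\left(\begin{smallmatrix} a&-c\\ -b&d\end{smallmatrix}\right)$, and a matrix of $\SL{2}{Z}$ preserves $\gcd(k,l)$, hence $\gcd(k,l,N)$ and the order of the generated cyclic group.

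Next I would translate the $\mathcal{N}=2$ condition into a statement about this subgroup. A point of $\mathds{R}^{6}$ is fixed by both $g$ and $h$ if and only if it is fixed by every element of $\langle g,h\rangle$, so the plane fixed by the pair $(g,h)$ is $\mathrm{Fix}(g)\cap\mathrm{Fix}(h)=\mathrm{Fix}(\langle g,h\rangle)$, and $(g,h)$ lies in the $\mathcal{N}=2$ sector $\mathcal{O}$ precisely when this locus is two-dimensional, which is a property of the subgroup $\langle g,h\rangle$ alone. Therefore, if $x\in\mathcal{O}$ and $V\in\SL{2}{Z}$, then $x\ast V$ generates the same subgroup and hence has the same two-dimensional fixed locus; in particular $x\ast V\in\mathcal{O}$, so $\mathcal{O}$ is closed under $\SL{2}{Z}$. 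Closure under an arbitrary subgroup $G'\subseteq\SL{2}{Z}$ is then automatic, since every $V\in G'$ is an element of $\SL{2}{Z}$.

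For the final assertion, let $\mathcal{O}'\subseteq\mathcal{O}$ be minimally closed. By Corollary \ref{sym}, $\mathcal{O}'$ is a single orbit: given $x,x'\in\mathcal{O}'$ there is a $V\in\SL{2}{Z}$ with $x'=x\ast V$. By the first paragraph $x$ and $x'$ generate the same subgroup of $\mathcal{P}$, so by the second paragraph the fixed loci of these subgroups agree; that is, $x$ and $x'$ fix the same (two-dimensional) plane. Hence all elements of $\mathcal{O}'$ fix one common plane.

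The argument is routine bookkeeping with the definitions except for one genuine point, which is also its conceptual heart: obtaining the reverse inclusion $\langle g,h\rangle\subseteq\langle g^{a}h^{-c},g^{-b}h^{d}\rangle$ uses that $V$ is invertible over $\mathds{Z}$, i.e. that we work inside the group $\SL{2}{Z}$ rather than with a mere monoid of integer matrices. This is the only step that is not immediate, and it is precisely where the hypothesis enters.
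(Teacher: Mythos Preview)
Your proof is correct and takes a cleaner route than the paper's own argument. Both proofs ultimately rest on the fact that the cyclic subgroup $\langle g,h\rangle=\langle\theta^{\gcd(k,l,N)}\rangle$ is unchanged by $\ast V$, but you isolate this as a single structural observation (using invertibility of $V$ over $\mathds{Z}$) and then note that membership in $\mathcal{O}$, and the identity of the fixed plane, are properties of this subgroup alone. The paper instead proceeds constructively: for each $k_0$ with $\theta^{k_0}$ fixing a plane it builds the set $\mathcal{O}(k_0,N)$ of all pairs with both exponents divisible by $\gcd(k_0,N)$ (minus $(1,1)$), checks closure under the generators $S$ and $T$ by hand, and obtains $\mathcal{O}$ as the union of these; the same-plane statement is then verified by writing $(\theta^k,\theta^l)\ast V$ out explicitly and arguing that products of powers of $\theta^k$ and $\theta^l$ fix at least the plane that $\theta^k$ and $\theta^l$ do (and cannot fix more without being trivial). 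Your argument buys brevity and makes the invariant transparent; as you implicitly note, it also carries over verbatim to non-abelian point groups with commuting $g,h$, which the paper remarks is possible but does not spell out. The paper's argument buys an explicit description of the minimally closed subsets as the sets $\mathcal{O}(k_0,N)$, though this extra information is not actually needed in the sequel.
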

\begin{proof} 
  Let an orbifold model be given and $\theta$ denote its twist of order $N$. Furthermore, let $\theta^{k_0}$, $k_0\in\mathds{Z}$, leave exactly one plane invariant. At the beginning, we will construct a closed set of boundary conditions out of the element $\theta^{k_0}$. Then we will use this construction to prove the theorem. To construct a closed set of boundary conditions out of $\theta^{k_0}$ we have to observe that there exist $m,n\in\mathds{Z}$ such that
\begin{equation}\label{closed2}
 k_0\,m-N\,n=\gcd(k_0,N)\,,
\end{equation}
which means
\begin{equation}\label{closed3}
 k_0\,m \equiv \gcd(k_0,N)\mod N\,.
\end{equation}
Thus, $\theta^{\gcd(k_0,N)}=\theta^{k_0 m}$ and, obviously, $\theta^{\gcd(k_0,N)}$ leaves the same plane invariant as $\theta^{m\,k_0}=\left(\theta^{k_0}\right)^m$. A power $\theta^k$ of the twist $\theta$ can only fix one point, one plane or three planes. Since $\theta^{k_0}$ leaves one plane invariant, it follows that $\theta^{k_0 m}$ leaves at least the same plane invariant as $\theta^{k_0}$. Hence, $\theta^{k_0 m}$ can either fix exactly one plane, or three planes. If $\theta^{k_0 m}$ leaves three planes invariant, it is possible to choose a basis of $\mathds{R}^6$ which is left invariant under the action of $\theta^{k_0 m}$. Therefore, $\mathds{R}^6$ is left invariant and we can infer that $\theta^{k_0 m}=1$. Thus, $\theta^{k_0 m}$ either leaves exactly the same plane invariant as $\theta^{k_0}$ or is the identity. If $\theta^{k_0 m}$ is the identity, it follows that $\theta^{\gcd(k_0,N)}$ is also the identity. Since there exists an integer $\lambda\in\mathds{Z}$ such that $k_0=\lambda\,\gcd(k_0,N)$ we can infer that $\theta^{k_0}=1$. This is a contradiction to the fact that $\theta^{k_0}$ leaves exactly one plane invariant.   

Let us define
\begin{equation}\label{closed4}
 \mathcal{O}(k_0,N):=\{\left.(\theta^{k_1},\theta^{k_2})\right|k_{1,2}=\gamma_{1,2} \,\gcd(k_0,N)\,,\;\gamma_{1,2}\in\mathds{Z}\}\setminus \mathcal{M}\,,
\end{equation}
where $\mathcal{M}:=\{ (\theta^{\gamma_1 N},\theta^{\gamma_2 N}|\gamma_1,\gamma_2\in\mathds{Z} \}$ and $\setminus$ denotes the difference between two sets.

If we introduce an equivalence relation on $\{(\theta^k,\theta^l)|k,l\in\mathds{Z}\}$ as $\sim\,:\; (\theta^k,\theta^l)\sim(\theta^{k'},\theta^{l'}):\Leftrightarrow k\equiv k'\!\mod\! N \,\wedge\, l\equiv l'\!\mod\! N$, we can observe that $\mathcal{O}(k_0,N)/\!\! \sim$ is finite. Let us denote the equivalence-class of $(\theta^k,\theta^l)$ by $\left[(\theta^k,\theta^l)\right]$. Furthermore, let
\begin{equation}
\label{eqast1}
	\left[(\theta^k,\theta^l)\right]\ast V:= (\theta^k,\theta^l) \ast V\,.
\end{equation}
Since any element $(g,h)\in\left[(\theta^k,\theta^l)\right]$ can be written as $\left(\theta^{k+\alpha_k N},\theta^{l+\alpha_l N}\right)$ and 
\[
	\left(\theta^{k+\alpha_k N},\theta^{l+\alpha_l N}\right)\ast V = \left(\theta^{ak-cl+(a\alpha_k-c\alpha_l)N},\theta^{-bk+dl+(-b\alpha_k+d\alpha_l)N}\right)=
\]
\begin{equation}
\label{eqast2}
	\,\quad\qquad\qquad\qquad\qquad =(\theta^{ak-cl},\theta^{dl-bk})=(\theta^k,\theta^l)\ast V
\end{equation}
for
\begin{equation}
\label{eqast3}
	V=\begin{pmatrix} a&b\\c&d \end{pmatrix}\,,
\end{equation}
equation \eqref{eqast1} is well-defined. 

Let $\alpha\in\mathds{Z}$ be defined via $N=\alpha\,\gcd(k_0,N)$. From equation \eqref{ast2} it follows that
\begin{align*}
 &\left[(\theta^{\gamma_1\,\gcd(k_0,N)},\theta^{\gamma_2\,\gcd(k_0,N)})\right]\ast S=\left[(\theta^{\gamma_2\,\gcd(k_0,N)},\theta^{(\alpha-\gamma_1)\gcd(k_0,N)})\right]\in\mathcal{O}(k_0,N)/\!\! \sim\,,\\
 &\left[(\theta^{\gamma_1\,\gcd(k_0,N)},\theta^{\gamma_2\,\gcd(k_0,N)})\right]\ast T=\left[(\theta^{\gamma_1\,\gcd(k_0,N)},\theta^{(\gamma_2-\gamma_1)\,\gcd(k_0,N)})\right]\in\mathcal{O}(k_0,N)/\!\! \sim\,.
\end{align*}
Therefore, $\mathcal{O}(k_0,N)/\!\!\sim$ is closed under $\SL{2}{Z}$ and in particular under all subgroups $G'$ of $\SL{2}{Z}$.

Now, let another $l_0\in\mathds{Z}$ be given, with $\theta^{l_0}$ leaving exactly one plane fixed. Then $\mathcal{O}(l_0,N)$ is closed, too. Hence, $\mathcal{O}(k_0,N)\cup\mathcal{O}(l_0,N)$ is closed.

Let $\mathcal{I}$ be the set of all $k\in\mathds{Z}$ with $0<k<N$ and $\theta^k$ leaving exactly one plane invariant. Then we can deduce that
\begin{equation}
	\mathcal{O}:=\bigcup_{k\in\mathcal{I}} \limits \mathcal{O}(k,N)
\end{equation}
is closed under $\SL{2}{Z}$ and in particular under all subgroups $G'$ of $\SL{2}{Z}$.

Now we will show that all elements in a minimally closed subset $\mathcal{O}_0$ of the $\mathcal{N}=2$ sector $\mathcal{O}$ of an orbifold model leave the same plane invariant. Let $x,x'\in\mathcal{O}_0$, $x=(g,h)$ and $x'=(g',h')$. Because of corollary \ref{sym}, there exists a $V\in\SL{2}{Z}$ such that $x'=x\ast V$. Since $x\in\mathcal{O}_0\subset\mathcal{O}$, it follows that $x=(\theta^k,\theta^l)$ with $\theta^k$ and $\theta^l$ leaving the same plane invariant. Because of $Q^k x=x\Leftrightarrow Q^{-k} x = x$, $\theta^k$ and $\theta^{-k}$ leave the same plane invariant for all $k\in\mathds{Z}$. Together with the discussion below equation \eqref{closed3} we deduce that
\begin{equation}
	x\ast V=(\theta^k,\theta^l)\ast V = (\theta^{ak}\theta^{-cl},\theta^{dl}\theta^{-bk}).
\end{equation}
leaves the same plane invariant as $x$.
\end{proof}

Our next step towards proving the main result of this chapter, theorem \ref{part}, is to show that every minimally closed set in an orbifold model contains an element of the form $(1,\theta^k)$. This is summarised in
\begin{lem}\label{specialelement}
 Let $\mathcal{O}$ be a minimally closed set of boundary conditions of an orbifold model and $\theta$ its twist. Then for every $x\in\mathcal{O}$ there exist $m\in\mathds{Z}$ and $V\in\SL{2}{Z}$ such that 
\begin{equation}
x \ast V=(1,\theta^m)\in\mathcal{O}\;.
\end{equation}
\end{lem}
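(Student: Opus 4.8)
The plan is to reduce the statement to a classical fact about $\SL{2}{Z}$, using that a minimally closed set is in particular closed under $\SL{2}{Z}$ (Theorem~\ref{closed}). Thanks to closedness, as soon as we exhibit \emph{any} $V\in\SL{2}{Z}$ for which the first entry of $x\ast V$ becomes $\theta^0=1$, the element $x\ast V$ automatically lies in $\mathcal{O}$, and we may simply let $m$ be the exponent appearing in its second entry. So everything comes down to solving one congruence.

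I would write $x=(\theta^k,\theta^l)$ and fix integer representatives of the exponents (say in $\{0,\dots,N-1\}$), not both zero. By \eqref{ast2},
\begin{equation}
 (\theta^k,\theta^l)\ast\begin{pmatrix}a&b\\c&d\end{pmatrix}=\bigl(\theta^{ak-cl},\,\theta^{dl-bk}\bigr),
\end{equation}
so I must find $\begin{pmatrix}a&b\\c&d\end{pmatrix}\in\SL{2}{Z}$ with $ak-cl\equiv 0\pmod N$. Setting $g:=\gcd(k,l)$ and writing $k=gk'$, $l=gl'$ with $\gcd(k',l')=1$, the choice $a=l'$, $c=k'$ makes $ak-cl=g(l'k'-k'l')=0$ exactly, while B\'ezout's identity supplies $b,d\in\mathds{Z}$ with $l'd-k'b=1$, so that $V=\begin{pmatrix}l'&b\\k'&d\end{pmatrix}$ is a genuine element of $\SL{2}{Z}$. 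For this $V$ one computes $dl-bk=g\,(dl'-bk')=g$, hence $x\ast V=(1,\theta^{g})$, and closedness of $\mathcal{O}$ yields $(1,\theta^{g})\in\mathcal{O}$; the assertion holds with $m=g$.

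Two points deserve a remark rather than real work. First, the degenerate cases $k'=0$ or $l'=0$ only force $V$ to be $\pm\mathds{1}$ or a purely off-diagonal matrix, and the same computation applies verbatim. Second, the genuinely excluded possibility $k\equiv l\equiv 0\pmod N$ cannot occur, since then $x=(1,1)$ would lie in the $\mathcal{N}=4$ sector, whereas $\mathcal{O}$ is contained in the $\mathcal{N}=2$ sector.

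I do not anticipate a real obstacle. The only thing requiring a moment's care is the bookkeeping in \eqref{ast2}: under the $\ast$-action the pair of exponents transforms as $\binom{k}{l}\mapsto\begin{pmatrix}a&-c\\-b&d\end{pmatrix}\binom{k}{l}$, and since this matrix again has determinant $ad-bc=1$, the reduction of $(k,l)$ to $(0,\gcd(k,l))$ is precisely the Euclidean algorithm applied to a vector of $\mathds{Z}^2$. Everything else is the elementary computation sketched above.
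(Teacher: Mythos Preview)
Your proof is correct and follows essentially the same approach as the paper's: both set the first exponent to zero exactly (rather than merely $\equiv 0\pmod N$) by taking $a=l/\gcd(k,l)$, $c=k/\gcd(k,l)$ and then invoke B\'ezout to complete $V$ to an $\SL{2}{Z}$ matrix. You go slightly further than the paper by computing $m=\gcd(k,l)$ explicitly and by addressing the edge cases $k=0$ or $l=0$ and the exclusion of $(1,1)$, which the paper leaves implicit.
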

\begin{proof}
Let x be the boundary condition of an orbifold model. It can be written in terms of the twist $\theta$
 \begin{equation}
  x=(g,h)=(\theta^k,\theta^l)
  \;.
 \end{equation}
 Furthermore, let $V\in\SL{2}{Z}$ be parameterised according to \eqref{ast1}. Then it follows from equation \eqref{ast2} that
 \begin{equation}
  x \ast V=(\theta^k,\theta^l)\ast V=(\theta^{a\,k-c\,l},\theta^{d\,l-b\,k})\,.
 \end{equation}
 Therefore, we have to find a solution of
 \begin{equation}\label{163}
   a\,k-c\,l=\alpha\,N\quad\text{and}\quad a\,d-b\,c=1\quad\text{with}\quad \alpha\in\mathds{Z}\,.
 \end{equation} 
 The choice of $\alpha$ is completely arbitrary. The only important property is that there exists at least one $\alpha$ such that \eqref{163} has a solution. Therefore, let us choose $\alpha =0$. Then \eqref{163} implies 
 \begin{equation}\label{164}
  a\,k=c\,l.
 \end{equation}
 The general solution of equation \eqref{164} is given by
 \begin{equation}\label{165}
  a_\gamma=\frac{l}{\gcd (k,l)}\,\gamma\quad\text{and}\quad c_\gamma=\frac{k}{\gcd (k,l)}\,\gamma\,,
 \end{equation}
 with $\gamma\in\mathds{Z}$. We search for $a,c\in\mathds{Z}$ such that there exist $b,d\in\mathds{Z}$ so that $a\,d-b\,c=1$. Therefore, $a_\gamma$ and $c_\gamma$ have to fulfil $\gcd (a_\gamma,c_\gamma)=1$. This is equivalent to $\gamma=1$. Therefore, we have gained a solution of \eqref{163} and the lemma is proven.
\end{proof}
Next, we will prove two statements. Firstly, we will show that the intersection of two different minimally closed sets contained in a closed set is empty. Secondly, we will prove that a closed set is a unique disjoint union of minimally closed sets.
\begin{lem}\label{splitting}
 Let $\mathcal{O}$ be a closed set under $G$ and let $\mathcal{O}_1,\mathcal{O}_2\subset \mathcal{O}$ be distinct and minimally closed. Then it holds that $\mathcal{O}_1\cap\mathcal{O}_2=\emptyset$.
\end{lem}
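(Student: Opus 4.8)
The plan is to argue by contradiction, exploiting the defining feature of a minimally closed set---that it contains no proper nonempty closed subset---together with the elementary observation that intersections of closed sets are again closed under the action.

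First I would record the trivial but crucial fact that if $\mathcal{O}_1$ and $\mathcal{O}_2$ are both closed under $G$ in the sense of Definition \ref{minset}, then so is $\mathcal{O}_1\cap\mathcal{O}_2$: for any $x\in\mathcal{O}_1\cap\mathcal{O}_2$ and any $V\in G$ one has $x\ast V\in\mathcal{O}_1$ and $x\ast V\in\mathcal{O}_2$, hence $x\ast V\in\mathcal{O}_1\cap\mathcal{O}_2$. Next, suppose toward a contradiction that $\mathcal{O}_1\cap\mathcal{O}_2\neq\emptyset$. Then $\mathcal{O}_1\cap\mathcal{O}_2$ is a nonempty subset of $\mathcal{O}_1$ that is closed under $G$. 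Since $\mathcal{O}_1$ is minimally closed there is no closed $\mathcal{O}'\subsetneq\mathcal{O}_1$, so necessarily $\mathcal{O}_1\cap\mathcal{O}_2=\mathcal{O}_1$, i.e. $\mathcal{O}_1\subseteq\mathcal{O}_2$. Running the same argument with the roles of $\mathcal{O}_1$ and $\mathcal{O}_2$ interchanged gives $\mathcal{O}_2\subseteq\mathcal{O}_1$, whence $\mathcal{O}_1=\mathcal{O}_2$, contradicting the hypothesis that the two sets are distinct. Therefore $\mathcal{O}_1\cap\mathcal{O}_2=\emptyset$.

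An alternative phrasing of the middle step uses Corollary \ref{sym} in place of appealing to minimality directly: pick any $x\in\mathcal{O}_1\cap\mathcal{O}_2$; for an arbitrary $y\in\mathcal{O}_1$, Corollary \ref{sym} produces a $V\in G$ with $y=x\ast V$, and since $\mathcal{O}_2$ is closed and $x\in\mathcal{O}_2$ we get $y=x\ast V\in\mathcal{O}_2$, so $\mathcal{O}_1\subseteq\mathcal{O}_2$; symmetry then yields equality and the same contradiction. I would probably include this remark as it ties the lemma directly to the corollary already proved.

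As for the main obstacle: there is none of substance. The statement is essentially the observation that ``minimally closed'' means ``minimal among nonempty closed subsets'', so any two such sets either coincide or are disjoint, exactly as with orbits of a group action or minimal invariant subspaces. The only point that must be made explicit---and it is entirely routine---is the stability of closedness under intersection; everything else is a two-line contradiction argument.
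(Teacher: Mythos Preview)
Your proof is correct and follows essentially the same approach as the paper: show the intersection is closed, then use minimality to derive a contradiction. You actually spell out the step the paper leaves implicit---namely that the closed intersection, being a subset of each minimally closed $\mathcal{O}_i$, must equal each and hence $\mathcal{O}_1=\mathcal{O}_2$---whereas the paper simply asserts ``this is a contradiction'' after observing $\mathcal{O}_1\cap\mathcal{O}_2$ is closed.
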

\begin{proof}
 Let $V\in G$. Furthermore, let us assume that there exists a $x\in\mathcal{O}$ in such a way that $x\in\mathcal{O}_1\cap\mathcal{O}_2$. Since $\mathcal{O}_1$ \emph{and} $\mathcal{O}_2$ are closed, it follows that $x\ast V\in\mathcal{O}_1\cap\mathcal{O}_2$. This means that $\mathcal{O}_1\cap\mathcal{O}_2$ is closed itself. Since $\mathcal{O}_1$ and $\mathcal{O}_2$ are minimally closed, this is a contradiction.
\end{proof}
This enables us to state following
\begin{prop}\label{sp}
 Let $\mathcal{O}$ be a finite closed set under $G$. Then $\mathcal{O}$ is a unique disjoint union of minimally closed sets.
\end{prop}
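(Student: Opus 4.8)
The plan is to identify the minimally closed subsets of $\mathcal{O}$ with the orbits of the right action $\ast$, and then to exploit the fact that orbits partition any set on which a group acts. Concretely, for existence I would fix $x\in\mathcal{O}$ and consider $x\ast G:=\{x\ast V\mid V\in G\}$. The group-action axioms \eqref{groupaction} immediately give $(x\ast V)\ast W=x\ast(V\cdot W)\in x\ast G$, so $x\ast G$ is closed under $G$, and since $\mathcal{O}$ itself is closed we have $x\ast G\subseteq\mathcal{O}$. To see that $x\ast G$ is \emph{minimally} closed, let $\mathcal{O}'\subseteq x\ast G$ be nonempty and closed; pick $y=x\ast V_0\in\mathcal{O}'$, so that $x=y\ast V_0^{-1}\in\mathcal{O}'$ (here $G$ being a group is essential), and therefore $x\ast G\subseteq\mathcal{O}'$ by closedness of $\mathcal{O}'$; hence $\mathcal{O}'=x\ast G$. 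Letting $x$ range over $\mathcal{O}$, the sets $x\ast G$ cover $\mathcal{O}$, and by Lemma \ref{splitting} any two of them are either equal or disjoint, so selecting one representative from each gives $\mathcal{O}$ as a disjoint union of minimally closed sets; finiteness of $\mathcal{O}$ ensures this union is finite.

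For uniqueness I would take any decomposition $\mathcal{O}=\bigsqcup_{j}\mathcal{N}_j$ into minimally closed subsets and show each $\mathcal{N}_j$ is an orbit. Fixing $j$ and $x\in\mathcal{N}_j$, closedness of $\mathcal{N}_j$ gives $x\ast G\subseteq\mathcal{N}_j$; since $x\ast G$ is itself closed and nonempty and $\mathcal{N}_j$ is minimally closed, we get $\mathcal{N}_j=x\ast G$. Thus every part of any such decomposition is an orbit; conversely each orbit $x\ast G$ lies in the unique part $\mathcal{N}_j$ containing $x$ and, by the same argument, equals it. Hence every decomposition of $\mathcal{O}$ into minimally closed sets coincides with the orbit partition of $(\mathcal{O},\ast)$, which depends only on the data of the action, so the decomposition is unique. (Alternatively, existence can be argued by a descending-chain argument using only finiteness: every nonempty closed subset contains a minimally closed one, so one peels off minimally closed pieces one at a time, with disjointness supplied by Lemma \ref{splitting}; this is the route suggested by the placement of the proposition.)

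I do not expect a genuine obstacle here — the statement is essentially the orbit-decomposition theorem specialised to this setting. The only step requiring a little care is the verification that an orbit cannot be shrunk to a strictly smaller closed subset, i.e. that orbits really are minimal; this rests on the invertibility of the elements of $G$, equivalently on Corollary \ref{sym}, which already records that within a minimally closed set any two elements are related by a single $V\in G$.
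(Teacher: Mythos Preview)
Your argument is correct. You take a different route from the paper, though you essentially anticipate the paper's route in your parenthetical remark.

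The paper does not identify minimally closed sets with orbits. For existence it argues abstractly: a descending chain $\mathcal{O}\supsetneq\mathcal{O}_1\supsetneq\mathcal{O}_2\supsetneq\cdots$ of closed subsets must terminate by finiteness, so every closed set contains a minimally closed one; then it peels such pieces off one at a time, noting that $\mathcal{O}\setminus\mathcal{O}_1$ is again closed and invoking Lemma~\ref{splitting} for disjointness. For uniqueness it supposes a second decomposition $\{\mathcal{O}'_k\}$ genuinely different from $\{\mathcal{O}_k\}$, so that some $\mathcal{O}_k\cap\mathcal{O}'_l$ is nonempty and strictly contained in $\mathcal{O}_k$, and then appeals to Lemma~\ref{splitting} for a contradiction. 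Your approach is more structural: you show directly that $x\ast G$ is closed and minimal, so the minimally closed subsets are exactly the orbits, and uniqueness is then automatic because the orbit partition is canonical. This buys you an explicit description of the pieces (which the paper only obtains implicitly later, via Corollary~\ref{sym}) and a cleaner uniqueness argument; the paper's version, by contrast, stays entirely within the closed-set language and never needs to name the orbits, which fits its stated aim of keeping the discussion general.
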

\begin{proof}
 Let $\mathcal{O}$ be finite and closed. If $\mathcal{O}$ is minimally closed, proposition \ref{sp} is trivial. Now let $\mathcal{O}$ be not minimal. First we will show that every closed set contains at least one minimally closed set. Since $\mathcal{O}$ is closed and not minimal, it has to contain a closed set $\mathcal{O}_1\subsetneq\mathcal{O}$ by definition. Let us assume that $\mathcal{O}$ does not contain a minimally closed set. Then $\mathcal{O}_1\subsetneq \mathcal{O}$ cannot be minimally closed. Hence, $\mathcal{O}_1$ contains a closed set $\mathcal{O}_2\subsetneq \mathcal{O}_1$ which cannot be minimal. Thus, $\mathcal{O}_2$ contains a closed set $\mathcal{O}_3\subsetneq \mathcal{O}_2$ which cannot be minimal, and so on.

Altogether, there exists an infinite sequence of nested sets
\begin{equation}
 \mathcal{O}\supsetneq\mathcal{O}_1\supsetneq \mathcal{O}_2\supsetneq \mathcal{O}_3\supsetneq\ldots\,
\end{equation}
Since the set of boundary conditions is finite, this \emph{infinite} sequence is obviously a contradiction. Therefore $\mathcal{O}$ contains a minimally closed set, which we call $\mathcal{O}_1$.

Next we will show that $\mathcal{O}$ has to be the disjoint union of minimally closed sets. Since $\mathcal{O}$ is closed and $\mathcal{O}_1$ is minimally closed, we can deduce that $\mathcal{O}\setminus \mathcal{O}_1$ has to be closed, too. This closed set has to contain a minimally closed set $\mathcal{O}_2$. From lemma \ref{splitting} it follows that $\mathcal{O}_1\cap \mathcal{O}_2=\emptyset$. Now consider $\mathcal{O}\setminus \left(\mathcal{O}_1\cup \mathcal{O}_2\right)$. It is closed, too. Again, it contains a minimally closed set $\mathcal{O}_3$, with $\mathcal{O}_1\cap \mathcal{O}_2\cap \mathcal{O}_3=\emptyset$, and so on.

Since $\mathcal{O}$ is finite, this construction will terminate at some $n\in\mathds{N}$. Therefore,
\[
 \mathcal{O}=\bigcup_{k=1}^n \limits \mathcal{O}_k\,.
\]
Let us assume that there exists another decomposition $\{\mathcal{O}_k'\}$ of $\mathcal{O}$ which is truly distinct from $\{\mathcal{O}_k\}$. Then we can infer that there exists at least one $\mathcal{O}_k$ and one $\mathcal{O}_l'$ such that $\emptyset\not=\mathcal{O}_k\cap \mathcal{O}_l'\subsetneq \mathcal{O}_k$. Since lemma \ref{splitting} holds, this is a contradiction and we have proven that the decomposition of $\mathcal{O}$ is unique up to ordering ambiguities.
\end{proof}
\begin{defini}
	Let $(G,\cdot)$ be a group, $(G',\cdot)$ be a subgroup of $(G,\cdot)$, $\mathcal{O}$ be a finite set, $x\in\mathcal{O}$ and $\ast$ be a right action of $(G,\cdot)$ on $\mathcal{O}$. Furthermore, let $\mathcal{O}$ be closed under the action of $G$. Then we denote by $G_x$ ($G'_x$) the stabilizer of $(G,\cdot)$ ($(G',\cdot)$) at $x$ under $\ast$, i.e.
	\begin{align}
		&G_x:=\{V\in G\,|\,x\ast V=x\}\\
		&G'_x:=\{P\in G'\,|\,x\ast P=x\}
	\end{align}
\end{defini}
The stabilisers of $G$ and $G'$ at $x$ form a group. This is formulated in
\begin{cor}\label{stabgroup}
	Let $G_x$ and $G'_x$ be the stabilisers of $(G,\cdot)$ and $(G',\cdot)$, respectively, at $x$ under $\ast$. Then $(G_x,\cdot)$ and $(G'_x,\cdot)$ are both groups. Furthermore, $(G'_x,\cdot)$ is a subgroup of $(G_x,\cdot)$.
\end{cor}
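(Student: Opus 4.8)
The plan is to verify the subgroup criterion directly from the two abstract right-action axioms \eqref{groupaction}, without appealing to any special feature of the operation $\ast$ of definition \ref{ast}, so that the statement holds at the stated level of generality. First I would record that $G_x$ is non-empty: by the second axiom $x\ast\mathds{1}=x$, so the identity of $G$ lies in $G_x$. For closure, given $V_1,V_2\in G_x$, the associativity axiom gives $x\ast(V_1\cdot V_2)=(x\ast V_1)\ast V_2=x\ast V_2=x$, hence $V_1\cdot V_2\in G_x$. For inverses, given $V\in G_x$, I would write $x=x\ast\mathds{1}=x\ast(V\cdot V^{-1})=(x\ast V)\ast V^{-1}=x\ast V^{-1}$, so that $V^{-1}\in G_x$. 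Together these show that $(G_x,\cdot)$ is a group.

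For $(G'_x,\cdot)$ the very same three steps apply verbatim, using that the restriction of $\ast$ to $G'\subset G$ is again a right action of $(G',\cdot)$ on $\mathcal{O}$; hence $(G'_x,\cdot)$ is a group as well. To see that it is a subgroup of $(G_x,\cdot)$, I would observe that the defining condition $x\ast P=x$ does not depend on whether $P$ is regarded as an element of $G'$ or of $G$, so that $G'_x=G'\cap G_x$. Being the intersection of the group $G_x$ with the subgroup $G'$ of $G$, the set $G'_x$ contains $\mathds{1}$ and is closed under the operation and under inverses inherited from $G_x$; therefore $(G'_x,\cdot)$ is a subgroup of $(G_x,\cdot)$.

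Since every step is a one-line manipulation of the axioms, there is no genuine obstacle here; the only point requiring a little care is to phrase the argument purely in terms of the right-action axioms \eqref{groupaction}, so that the corollary is established in the generality in which it is stated (and in particular applies to $G=\SL{2}{Z}$ acting on the $\mathcal{N}=2$ sector $\mathcal{O}$ via $\ast$, as well as to any finite-index subgroup $G'=\Gamma'$).
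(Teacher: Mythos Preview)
Your proof is correct and follows essentially the same approach as the paper: both verify the subgroup criterion directly from the right-action axioms, checking closure via $x\ast(V_1\cdot V_2)=(x\ast V_1)\ast V_2=x$ and inverses via $x=x\ast(V\cdot V^{-1})=x\ast V^{-1}$. Your explicit identification $G'_x=G'\cap G_x$ is a slightly cleaner way to phrase the subgroup conclusion than the paper's ``obviously $G'_x\subset G_x$'', but the content is the same.
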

\begin{proof}
	It holds $G_x\subset G$ and $G'_x\subset G'\subset G$. Let $V,V_1,V_2\in G_x$ and $P,P_1,P_2\in G'_x$. Then it follows that $x\ast V=x\Leftrightarrow x=x\ast V^{-1}$ and $x\ast P=x\Leftrightarrow x=x\ast P^{-1}$. Hence, $V\in G_x\Rightarrow V^{-1}\in G_x$ and $P\in G'_x\Rightarrow P^{-1}\in G'_x$. Furthermore, $x\ast(V_1\cdot V_2)=(x\ast V_1)\ast V_2=x\ast V_2=x$ and $x\ast(P_1\cdot P_2)=(x\ast P_1)\ast P_2=x\ast P_2=x$. Therefore, $V_1,V_2\in G_x\Rightarrow V_1\cdot V_2\in G_x$ and $P_1,P_2\in G'_x\Rightarrow P_1\cdot P_2\in G'_x$. Obviously, $G'_x\subset G_x$ and $(G'_x,\cdot)$ is a subgroup of $(G_x,\cdot)$.
\end{proof}
We will choose one element $x\in \mathcal{O}$ and will be interested in group elements $V_1,V_2\in G$ which lead to different elements in $\mathcal{O}$, i.e. $x\ast V_1 \not= x \ast V_2$. This leads us to
\begin{defini}
	Let $(G,\cdot)$, $(G',\cdot)$, $(G_{x},\cdot)$ and $(G'_{x},\cdot)$ be as above. Then we define four equivalence relations $\sim_1$, $\sim_2$, $\sim_3$ and $\sim_4$ as
	\begin{align}
		& \forall\, V_1,V_2\in G \, : \, V_1 \sim_1 V_2:\Leftrightarrow \exists \, g\in G_{x}\,:\, V_1=g\cdot V_2\,,\\
		& \forall\, P_1,P_2\in G' \, : \, P_1 \sim_2 P_2:\Leftrightarrow \exists \, g'\in G'_{x}\,:\, P_1=g'\cdot P_2\,,\\
		& \forall\, V_1,V_2\in G \, : \, V_1 \sim_3 V_2:\Leftrightarrow \exists \, g'\in G'_{x}\,:\, V_1=g'\cdot V_2\,,\\
		& \forall\, V_1,V_2\in G \, : \, V_1 \sim_4 V_2:\Leftrightarrow \exists \, P\in G'\,:\, V_1=P\cdot V_2\,.
	\end{align}
	We denote $G/\!\!\sim_1$ as $G/G_{x}$, $G'/\!\!\sim_2$ as $G'/G'_{x}$, $G/\!\!\sim_3$ as $G/G'_{x}$ and $G/\!\!\sim_4$ as $G/G'$.
\end{defini}
Next we define the action of an equivalence class $[V]$ on an element $x\in\mathcal{O}$.
\begin{defini}
	Let $G/G_{x}$, $G'/G'_{x}$ and $G/G'_{x}$ as above. Then we define for $[V]\in G/G_{x}$, $[P]\in G'/G'_{x}$ and $[C]\in G/G'_{x}$
	\begin{align}
		& x\ast [V]:=x\ast V\,,\\
		& x\ast [P]:=x\ast P\,,\\
		& x\ast [C]:=x\ast C\,.
	\end{align}
\end{defini}
It can be easily verified that these operations are well defined, i.e. they do not depend on the choice of representatives of the equivalence classes. Strictly speaking, we should differentiate between the equivalence classes $[V]_1$, $[V]_2$, $[V]_3$ and $[V]_4$ with respect to the equivalence relations $\sim_1$, $\sim_2$, $\sim_3$ and $\sim_4$. Since it should be clear from the context, we omit the index to simplify notation. Now we are ready to formulate one of the main results of this section, which will enable us to essentially simplify the expression of one-loop threshold corrections:
\begin{thm}\label{speqex}
	Let $g_i$, $V_j$, $P_l$ and $M_k$ be representative systems of $G_x/G'_x$, $G/G_x$, $G'/G'_x$ and $G/G'$, respectively. Then every equivalence class $[C]\in G/G'_x$ includes exactly one element $P_l\cdot M_k$ and exactly one element $g_i\cdot V_j$. 
\end{thm}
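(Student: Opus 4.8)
The plan is to prove both bijections by exhibiting, for each equivalence class $[C]\in G/G'_x$, a canonical representative and then showing uniqueness within each of the two parametrisations $P_l\cdot M_k$ and $g_i\cdot V_j$. Recall that $G'_x\subset G_x\subset G$ and, by definition, $G/G_x$ is indexed by the $V_j$, $G_x/G'_x$ by the $g_i$, $G'/G'_x$ by the $P_l$, and $G/G'$ by the $M_k$. The two essential group-theoretic inputs are the tower identities $[G:G'_x]=[G:G_x][G_x:G'_x]$ and $[G:G'_x]=[G:G'][G':G'_x]$, both of which hold because $G'_x=G'\cap G_x$ and because all indices in sight are finite (the sets are finite since $\mathcal{O}$ is finite and the stabilisers have finite index in $\SL{2}{Z}$ in the intended application). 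These give us the counting $\#\{g_i\cdot V_j\}=\#\{P_l\cdot M_k\}=[G:G'_x]$, so it suffices to prove that each product lands in a distinct class and then invoke a pigeonhole/cardinality argument.

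**The $g_i\cdot V_j$ parametrisation.** First I would show every class $[C]\in G/G'_x$ contains some element of the form $g_i\cdot V_j$. Given $C\in G$, pick the representative $V_j$ with $C\sim_1 V_j$, i.e.\ $C=g\cdot V_j$ for some $g\in G_x$; then write $g=g_i\cdot p$ with $g_i$ the chosen coset representative of $G_x/G'_x$ and $p\in G'_x$, so $C=g_i\cdot p\cdot V_j$. This is not quite $g_i\cdot V_j$, but $p\in G'_x$ acts trivially in the relevant sense: I would check that $p\cdot V_j\sim_3 V_j$ only if $p$ normalises appropriately — more carefully, the cleaner route is to observe $C=(g_i\cdot V_j)\cdot(V_j^{-1}pV_j)$ and argue $V_j^{-1}pV_j\in G'_x$ using that $G'_x$ is the stabiliser of $x$ under $\ast$ together with $x\ast V_j$ having stabiliser a conjugate of $G'_x$. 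Then $[C]=[g_i\cdot V_j]$ in $G/G'_x$. For uniqueness: if $g_i\cdot V_j\sim_3 g_{i'}\cdot V_{j'}$, project to $G/G_x$ (which is well-defined since $G'_x\subset G_x$) to force $[V_j]=[V_{j'}]$ hence $j=j'$; then cancel $V_j$ and deduce $g_i\sim g_{i'}$ modulo $G'_x$ inside $G_x$, hence $i=i'$. Combined with the count $\#\{g_i\cdot V_j\}=[G:G'_x]$, this shows the map $(i,j)\mapsto[g_i\cdot V_j]$ is a bijection onto $G/G'_x$.

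**The $P_l\cdot M_k$ parametrisation.** This is the genuinely analogous argument with $G'$ in place of $G_x$: every $C\in G$ decomposes via its $G/G'$-coset $M_k$ as $C=P\cdot M_k$ with $P\in G'$, then $P=P_l\cdot p'$ with $p'\in G'_x$, giving $C=P_l\cdot p'\cdot M_k$; the conjugation trick $C=(P_l\cdot M_k)\cdot(M_k^{-1}p'M_k)$ reduces matters to knowing $M_k^{-1}p'M_k\in G'_x$ — but here caution is needed, since $G'_x$ need not be normal in $G'$ nor in $G$, so this conjugate may fail to lie in $G'_x$. The fix is to reverse the order of decomposition: write $C\sim_4 M_k$ meaning $C=P\cdot M_k$, and then absorb $P$ by re-choosing within the $\sim_4$-class, noting that left-multiplication by $G'$ is exactly what $\sim_4$ quotients out, while left-multiplication by $G'_x$ is what $\sim_3$ quotients out, and $G'_x\subset G'$. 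So $[C]_3$ is determined by first the $G/G'$-class (giving $k$) and then, within $P\cdot G'_x$-cosets of the $G'$-part, the $G'/G'_x$-class (giving $l$); uniqueness follows by projecting $G/G'_x\to G/G'$ to pin down $k$, then cancelling $M_k$ on the right and comparing in $G'/G'_x$ to pin down $l$.

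**Main obstacle.** The hard part will be handling the non-normality of the stabilisers: neither $G_x$ nor $G'_x$ is normal in $G=\SL{2}{Z}$, so one cannot freely commute representatives past each other, and the naive "conjugation trick" above is where a careless argument breaks. The robust way around it is to phrase everything via the chain of projections $G\twoheadrightarrow G/G'_x\twoheadrightarrow G/G_x$ and $G\twoheadrightarrow G/G'_x\twoheadrightarrow G/G'$ together with the finite multiplicativity of indices, so that one only ever needs: (a) surjectivity of $(i,j)\mapsto[g_iV_j]$ and $(l,k)\mapsto[P_lM_k]$, which is a direct coset-decomposition, and (b) the cardinality match from the tower formula, which upgrades surjectivity to bijectivity for free. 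I would present the proof in that order — tower identity first, then surjectivity of each map, then "bijective by counting" — precisely to sidestep any need for uniqueness computations that would otherwise stumble on non-normality.
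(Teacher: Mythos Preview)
Your uniqueness arguments (project to $G/G_x$ resp.\ $G/G'$ to pin down $V_j$ resp.\ $M_k$, then cancel on the right to pin down $g_i$ resp.\ $P_l$) are correct and are exactly what the paper does. The gap is in your existence step. You decompose $g\in G_x$ as $g=g_i\cdot p$ with $p\in G'_x$, but the paper's equivalence relations are all of the form $V_1\sim V_2\Leftrightarrow V_1=g'\cdot V_2$ with $g'$ in the smaller subgroup, so a representative system of $G_x/G'_x$ satisfies $g=g'\cdot g_i$ (subgroup element on the \emph{left}). With that order one has $C=g\cdot V_j=g'\cdot g_i\cdot V_j$, whence $[C]_3=[g_i V_j]_3$ immediately; likewise $C=P\cdot M_k=g''\cdot P_l\cdot M_k$ gives $[C]_3=[P_l M_k]_3$. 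No conjugation trick is needed, and the ``non-normality obstacle'' you flag is entirely an artefact of having the decomposition backwards.

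Your fallback to a counting argument would work in the intended application where $[G:G']<\infty$, but the theorem as stated makes no such finiteness assumption --- in fact the paper deduces $[G_x:G'_x]<\infty$ as a \emph{consequence} of this theorem in the discussion immediately following it, not as a hypothesis. So the counting route both imports an extra assumption and is unnecessary: once the coset order is corrected, the paper's direct argument (existence via the two-step left-decomposition $C=g'\,g_i\,V_j=g''\,P_l\,M_k$, uniqueness exactly as you already wrote it) goes through with no cardinality input at all.
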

\begin{proof}
Let $C\in G$ and $[C]\in G/G_x$ be the equivalence class which contains $C$. Then there exists a unique $V_j$ such that $[C]=[V_j]$. Therefore, there exists $g\in G_x$ so that $C=g\cdot V_j$. Since $g\in G_x$, it follows that there exists a unique $g_i\in G'_x$ such that $[g]=[g_i]\in G_x/G'_x$. Then there exists a unique $g'\in G'_x$ such that $g=g'\cdot g_i$. Thus, $C$ can be uniquely decomposed as $C=g'\cdot g_i \cdot V_j$.

On the other hand, there exists a unique $M_k$ such that $[C]=[M_k]\in G/G'$. Thus, there exists a unique $P\in G$ such that $C=P\cdot M_k$. Because of $P\in G'$, it follows that there exists a unique $P_l\in G'$ so that $[P]=[P_l]\in G'/G'_x$. Hence, there exists a unique $g''\in G'_x$ such that $P=g''\cdot P_l$. Therefore, $C$ can be uniquely decomposed as $C=g''\cdot P_l \cdot M_k$.

Let us assume $[P_l\cdot M_k]=[P_{l'}\cdot M_{k'}]\in G/G'_x$. Then there exists $g'\in G'_x$ such that $P_l\cdot M_k=g'\cdot P_{l'}\cdot M_{k'}$. Therefore, $M_k\cdot M_{k'}^{-1}=P_l^{-1} \cdot g' \cdot P_{l'}\in G'$. Hence, $[M_k]=[M_{k'}]\in G/G'$. This implies together with the definition of $M_k$ that $M_k=M_{k'}$. Thus, it is true that $P_l=g'\cdot P_{l'}$ and, therefore, $[P_l]=[P_{l'}]\in G'/G'_x$. This is equivalent to $P_l=P_{l'}$. Altogether,
\begin{equation}
	M_k\not= M_{k'} \vee P_l\not= P_{l'} \Leftrightarrow[P_l\cdot M_k]\not=[P_{l'}\cdot M_{k'}]\,.
\end{equation}

Let us now assume $[g_i\cdot V_j]=[g_{i'}\cdot V_{j'}]\in G/G'_x$. Then there exists $g''\in G'_x$ such that $g_i\cdot V_j=g''\cdot g_{i'}\cdot V_{j'}$. Thus, $V_j\cdot V_{j'}^{-1}=g_i^{-1} \cdot g'' \cdot g_{i'}\in G_x$. Therefore, $[V_j]=[V_{j'}]\in G/G_x$. This implies $V_j=V_{j'}$. Hence, it holds $g_i=g''\cdot g_{i'}$ and, therefore, $[g_i]=[g_{i'}]\in G_x/G'_x$. This is equivalent to $g_i=g_{i'}$. Altogether,
\begin{equation}
	g_i\not= g_{i'} \vee V_j\not= V_{j'} \Leftrightarrow[g_i\cdot V_j]\not=[g_{i'}\cdot V_{j'}]\,.
\end{equation}

Since every $C\in G$ can be uniquely decomposed as $C=g'\cdot g_i \cdot V_j=g''\cdot P_l\cdot M_k$, it holds $[C]=[g_i\cdot V_j]=[P_l\cdot M_k]\in G/G'_x$. The equivalence class of $C$ does not change if $g'$ and $g''$ change. Thus, every equivalence class $[C]\in G/G'_x$ contains exactly one element $P_l\cdot M_k$ and exactly one element $g_i\cdot V_j$.
\end{proof}
Let us look at all combinations of $x\ast P_l\cdot M_k$. This is equivalent to all combinations of $x\ast g_i\cdot V_j$. Now, it holds that $x\ast g_i\cdot V_j=x\ast V_j$ for all $g_i$. But all combinations of $x\ast V_j$ are exactly $\mathcal{O}$. This means that we get all $x'\in\mathcal{O}$ equally often. The cardinality of how often every element is counted is given by the cardinality of all $g_i$. Hence, every element is counted $[G_x:G'_x]$ often. In particular it is easy to observe that $[G_x:G'_x]<\infty$. 

We will need
\begin{defini}\label{gensetdef}
	Let $\mathcal{O}$ be a finite set, $\ast$ be a group right action of $(G,\cdot)$ on $\mathcal{O}$ and $(G',\cdot)$ be a subgroup of $(G,\cdot)$. Then we call $\mathcal{O}^G_0(x,G'):=x\ast G'/G'_x:=\{x\ast [P]\,|\,[P]\in G'/G'_x\}$ a generating set of order $[G_x:G'_x]$.
\end{defini}
This definition can be used to state one of the main results of this section. 
\begin{thm}\label{part}
	Let $\mathcal{O}$ be the $\mathcal{N}=2$ sector of some orbifold model. Furthermore, for any minimally closed set $\mathcal{O}_k\subset\mathcal{O}$ let there exist an $x_k=(g_k,h_k)\in\mathcal{O}_k$ such that $\tau_2\,Z^{\text{1-loop}}_{(g_k,h_k)}$ is invariant under $\Gamma'_k\subset \Gamma$, $\Gamma'_k$ being a finite index subgroup of $\Gamma=\PSL{2}{Z}$. Then every minimally closed set $\mathcal{O}_k$ contains a generating set $\mathcal{O}_{0,k}=\mathcal{O}^\PSL{2}{Z}_0(x_k,\Gamma'_k)$ such that all partition functions of $x\in\mathcal{O}_{0,k}$ coincide.
\end{thm}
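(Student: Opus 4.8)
The plan is to split the statement into its two assertions --- that the set $\mathcal{O}_{0,k}=\mathcal{O}^{\PSL{2}{Z}}_0(x_k,\Gamma'_k)$ is contained in $\mathcal{O}_k$, and that the partition functions attached to its elements all agree --- and to settle each using only closedness (Theorem \ref{closed}) and the transformation rule \eqref{219}. By Definition \ref{gensetdef} an element of $\mathcal{O}_{0,k}$ has the form $x_k\ast[P]$ with $[P]\in\Gamma'_k/(\Gamma'_k)_{x_k}$, where $x_k\ast[P]:=x_k\ast P$ for any representative $P$; this is well defined because $(\Gamma'_k)_{x_k}$ fixes $x_k$. In particular $x_k=x_k\ast\mathds{1}\in\mathcal{O}_{0,k}$, so it is enough to show that every element of $\mathcal{O}_{0,k}$ lies in $\mathcal{O}_k$ and carries the same partition function as $x_k$.

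\textbf{Containment.} Since $\mathcal{O}_k$ is minimally closed it is in particular closed, and by Theorem \ref{closed} the ambient $\mathcal{N}=2$ sector is closed under $\SL{2}{Z}$; hence $x_k\ast V\in\mathcal{O}_k$ for every $V\in\SL{2}{Z}$, and applying this to (lifts of) the representatives $P\in\Gamma'_k$ gives $\mathcal{O}_{0,k}\subseteq\mathcal{O}_k$. The point that needs care here is the passage between $\SL{2}{Z}$, on which $\ast$ is literally defined in Definition \ref{ast}, and $\Gamma=\PSL{2}{Z}$: the element $-\mathds{1}$ acts as $(g,h)\mapsto(g^{-1},h^{-1})$, so $\ast$ does not descend to $\PSL{2}{Z}$ on boundary conditions. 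It does descend at the level of partition functions, however, since \eqref{219} applied with $V=-\mathds{1}$ (which acts trivially on $\tau$) gives $\tau_2 Z^{\text{1-loop}}_{(g,h)}=\tau_2 Z^{\text{1-loop}}_{(g^{-1},h^{-1})}$; because only the partition functions enter the conclusion, this ambiguity is harmless, but I would spell it out.

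\textbf{Coincidence of partition functions.} For $x=x_k\ast P\in\mathcal{O}_{0,k}$ with $P\in\Gamma'_k$, the defining property \eqref{219} of $\ast$ yields
\begin{equation*}
\tau_2\,Z^{\text{1-loop}}_x(\tau)=\tau_2\,Z^{\text{1-loop}}_{x_k\ast P}(\tau)=P\cdot\bigl(\tau_2\,Z^{\text{1-loop}}_{x_k}(\tau)\bigr)=(\mathrm{Im}\,P\tau)\,Z^{\text{1-loop}}_{x_k}(P\tau)\,.
\end{equation*}
The hypothesis that $\tau_2 Z^{\text{1-loop}}_{x_k}$ is invariant under $\Gamma'_k$ means exactly that $P\cdot(\tau_2 Z^{\text{1-loop}}_{x_k})=\tau_2 Z^{\text{1-loop}}_{x_k}$ for all $P\in\Gamma'_k$, so the right-hand side is $\tau_2 Z^{\text{1-loop}}_{x_k}(\tau)$. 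Thus every element of $\mathcal{O}_{0,k}$ has partition function equal to $\tau_2 Z^{\text{1-loop}}_{x_k}$, and in particular all of them coincide. The finite-index hypothesis on $\Gamma'_k$ plays no role in this argument; it is needed only so that the order $[\Gamma_{x_k}:(\Gamma'_k)_{x_k}]$ of the generating set, in the sense of Definition \ref{gensetdef}, is finite, which is what makes the reduction of $\Delta_a$ in Section \ref{genset} effective.

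The substance of the argument is therefore small: the step I would be most careful about is the first one, namely keeping the bookkeeping between $\SL{2}{Z}$, $\Gamma=\PSL{2}{Z}$ and the subgroup $\Gamma'_k$ consistent, and being precise about the meaning of \quot{invariance under $\Gamma'_k$} through \eqref{219}. Once those conventions are fixed, both assertions follow at once from Theorem \ref{closed} and \eqref{219}, with no computation involved.
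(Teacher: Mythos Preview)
Your proof is correct and follows the same approach as the paper's: both invoke closedness (Theorem \ref{closed}) and the definition of a generating set, with the coincidence of partition functions coming from \eqref{219} together with the assumed $\Gamma'_k$-invariance of $\tau_2 Z^{\text{1-loop}}_{x_k}$. You are in fact more explicit than the paper, which states the conclusion without spelling out the use of \eqref{219}, and your remark on the $\SL{2}{Z}$ versus $\PSL{2}{Z}$ bookkeeping is a genuine subtlety the paper leaves implicit.
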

\begin{proof}
	Theorem \ref{closed} shows that the $\mathcal{N}=2$ sector of some orbifold model is closed under $\SL{2}{Z}$. Thus it is also closed under modular transformations $\Gamma=\PSL{2}{Z}$. Proposition \ref{sp} shows that this closed set is the unique disjoint union of minimally closed sets $\mathcal{O}_k$. Thus, by definition \ref{gensetdef} it holds that $\mathcal{O}^\PSL{2}{Z}_0(x,\Gamma'):=x\ast \Gamma'/\Gamma'_x$ is a generating set in $\mathcal{O}_k$ for all elements $x\in\mathcal{O}_k$ and all modular subgroups $\Gamma'\subset\Gamma$. By assumption it is true that all minimally closed sets contain an element $x_k=(g_k,h_k)\in\mathcal{O}_k$ with $\tau_2\,Z^{\text{1-loop}}_{(g_k,h_k)}$ being invariant under some finite index subgroup $\Gamma'_k$ of the modular group $\Gamma$. Therefore, $\mathcal{O}^\PSL{2}{Z}_0(x_k,\Gamma'_k)$ is a generating set so that all partition functions associated to elements of this generating set coincide.
\end{proof}
Now, the sum \eqref{54}
over all boundary conditions with a fixed plane effectively shrinks to a sum over the generating elements, provided that the integration domain of the integral of $\mathcal{B}_a$ over $R_{\Gamma}$ in \eqref{54} is extended to $R_{\Gamma'}=\bigcup_{l=1}^{[\Gamma:\Gamma']}M_l R_\Gamma$. We will need theorem \ref{part} later to perform the discussion of the threshold corrections to all abelian toroidal orbifold models with arbitrary discrete Wilson lines, without having to assume a certain model.
\section{General Setup and Characteristic Numbers $(\alpha,\beta,\gamma,\delta)\in\mathds{Q}^4$}\label{genset}
In this section, we are going to formulate the problem of computing threshold corrections in orbifold models more precisely. In particular, we will consider the inclusion of discrete Wilson lines, not discussed in earlier work, so far. We show how this influences the various momenta and winding lattices. It will turn out, that the Wilson lines do not change the orientation of the fixed planes but twist them in a sense we will specify precisely.

The starting point of the computation of the moduli dependent part of one-loop gauge threshold corrections $\Delta_a$ is given by \cite{Kaplunovsky}\cite{Dixon_et_al2}\cite{Stieberger4}
\begin{equation}
\label{54}
	\Delta_a=\int_{R_\Gamma} \hypmes{\tau}\,\sum_{(g,h)\in\mathcal{O}}\,b_a^{(g,h)} \tau_2\,Z^{\text{1-\text{loop}}}_{(g,h)}(\tau)-R\,.
\end{equation}
Here $\mathcal{O}$ denotes all boundary conditions on the world sheet which admit a fixed plane, i.e.\ the $\mathcal{N}=2$ sector of the orbifold model. The partition functions $Z^{\text{1-\text{loop}}}_{(g,h)}(\tau)$ associated to the boundary conditions $(g,h)$ are integrated over a fundamental domain of $\Gamma$ and the regulator $R$ is given by
\begin{equation}
\label{regul}
	R=\int_{R_\Gamma} \hypmes{\tau} \sum_{(g,h)\in\mathcal{O}}\,b_a^{(g,h)} \tau_2 \equiv b_a(\mathcal{O})\int_{R_\Gamma} \hypmes{\tau}\,\tau_2.
\end{equation}
$\Delta_a$ is a modular invariant function.

It should be stressed that $b_a^{(g,h)}$ is not modular invariant, although it seems to be a constant. Under a modular transformation $V\in\Gamma$ it transforms as
\begin{equation}
\label{ba1}
	V\; b_a^{(g,h)}
=b_a^{V\ast (g,h)}\,.
\end{equation}
Here, 
\begin{equation}
	\lim_{\tau_2\to\infty}B_a^{(g,h)}=b_a^{(g,h)}
\end{equation}
and the definition of $\mathcal{B}_a$ can be found in \cite{Kaplunovsky}, \cite{Dixon_et_al2}. It should be noted, that the $b_a^{(g,h)}$ correspond to the beta function coefficients of the theory, see the references for more details.

Furthermore, we denote the contribution of all states obeying boundary conditions $(g,h)$ to $\mathcal{B}_a$ by $\mathcal{B}_a^{(g,h)}$. In  \cite{Kaplunovsky} and \cite{Dixon_et_al2} it has also been shown that
\begin{equation}
\label{ba2}
	\mathcal{B}_a^{(g,h)}=b_a^{(g,h)}\; Z_{(g,h)}^{1-\text{loop}}\,.
\end{equation}
Equation \eqref{ba2} and the fact that
\begin{equation}
\label{ba3}
	\lim_{\tau_2\to\infty} Z_{(g,h)}^{1-\text{loop}}(\tau)=1
\end{equation}
 can be used to express $b_a^{V\ast (g,h)}$ through $b_a^{(g,h)}$. 

Theorem \ref{closed} states that $\mathcal{O}$ is closed under $\SL{2}{Z}$. Since $\mathcal{O}$ is closed under $\SL{2}{Z}$, it is clearly closed under modular transformations $\Gamma=\PSL{2}{Z}$. Here $\PSL{2}{Z}$ is constructed out of $\SL{2}{Z}$ by identifying matrices $V$ and $-V$. As a closed set it is a unique disjoint union of minimally closed sets $\mathcal{O}_k$, which is guaranteed by proposition \ref{sp}. Every minimally closed set $\mathcal{O}_k$ contains an element of the form $x_k:=\left(1,\theta^{l_k}\right)$, where $\theta^{l_k}$ leaves exactly one plane fixed here, cf. lemma \ref{specialelement}. By definition \ref{gensetdef} there exists a generating set $\mathcal{O}_0^{\PSL{2}{Z}}(x_k,\Gamma'_k)$ which contains $x_k=\left(1,\theta^{l_k}\right)$. This holds true for any finite index subgroup $\Gamma'_k$ of $\Gamma=\PSL{2}{Z}$. Later it will be convenient to choose the symmetry group of $Z_{\left(1,\theta^{l_k}\right)}^{1-\text{loop}}(\tau)$ as  $\Gamma'_k$. 

Using equation \eqref{219}
, theorem \ref{speqex}, the invariance of the hyperbolic measure under Moebius transformations as well as\footnote{Here, $M_k R_{\Gamma}$ means the action of the modular transformation $M_k$ on the fundamental domain $R_\Gamma\subset\mathds{H}^+$ as a point set.}
\begin{equation}
\label{funddom}
	R_{\Gamma_k'}=\bigcup_{l=1}^{[\Gamma:\Gamma_k']} M_l^k R_{\Gamma}
\end{equation}
for a coset decomposition
\begin{equation}\label{coset}
	\Gamma=\bigcup_{l=1}^{[\Gamma:\Gamma_k']} \Gamma' M_l^k\,,
\end{equation}
it follows that
\[
	\Delta_a+R
=\sum_k \sum_{(g,h)\in\mathcal{O}_k}\,\int_{R_\Gamma} \hypmes{\tau} \tau_2\,\mathcal{B}_a^{(g,h)}(\tau)
\]
\[
\quad\,\,\,=\sum_k\frac{1}{[\PSL{2}{Z}_{x_k}:(\Gamma'_k)_{x_k}]}\sum_{[C]\in \Gamma/(\Gamma'_k)_{x_k}}\, \int_{R_\Gamma} \hypmes{\tau} \tau_2\,\mathcal{B}_a^{x_k\ast [C]}(\tau)
\]
\[
\quad\,\,\,=\sum_k\sum_{l=1}^{[\Gamma:\Gamma'_k]}\frac{1}{[\PSL{2}{Z}_{x_k}:(\Gamma'_k)_{x_k}]}\sum_{(g,h)\in x_k\ast \Gamma'_k/(\Gamma'_k)_{x_k}}\, \int_{R_\Gamma} \hypmes{\tau} \text{Im} (M_l \tau)\,\mathcal{B}_a^{(g,h)}(M_l^k \tau)
\]
\begin{equation}
\quad\,\,\,=\sum_k\frac{1}{[\PSL{2}{Z}_{x_k}:(\Gamma'_k)_{x_k}]}\sum_{(g,h)\in x_k\ast \Gamma_k'/(\Gamma'_k)_{x_k}}b_a^{(g,h)}\int_{R_{\Gamma'_k}}\hypmes{\tau}\, \tau_2 \,Z^{1-\text{loop}}_{(g,h)}(\tau)\label{1}
\end{equation}
The change of boundary conditions acts essentially on $\tau_2 \mathcal{B}_a$. The proof of theorem \ref{closed} shows that each element of a minimally closed set which contains $x_k=(1,\theta^{l_k})$ leaves the same plane invariant as $\left(1,\theta^{l_k}\right)$. According to theorem \ref{part} all partition functions associated to $x_k\ast \Gamma_k'/(\Gamma'_k)_{x_k}=\mathcal{O}^{\PSL{2}{Z}}_0\left(\left(1,\theta^{l_k}\right),\Gamma'_k\right)$ coincide and \eqref{1} simplifies to
\begin{equation}
\label{2}
	\Delta_a+R=\sum_k \frac{1}{[\PSL{2}{Z}_{x_k}:(\Gamma'_k)_{x_k}]}\left[\sum_{(g,h)\in x_k\ast \Gamma_k'/(\Gamma'_k)_{x_k}}b_a^{(g,h)}\right]\int_{R_{\Gamma'_k}}\hypmes{\tau}\,\tau_2\,Z_{\left(1,\theta^{l_k}\right)}^{1-\text{loop}}(\tau)
	\;.
\end{equation}
The sum in front of the integral can be determined under the assumption that one-loop gauge threshold corrections have to be finite.
For convenience, we write this $\mathcal{O}_k$ dependent constant as 
\begin{equation}
\label{3}
	\frac{1}{[\PSL{2}{Z}_{x_k}:(\Gamma'_k)_{x_k}]}\sum_{(g,h)\in x_k\ast \Gamma_k'/(\Gamma'_k)_{x_k}}b_a^{(g,h)}
\equiv
A(\mathcal{O}_k) b_a(\mathcal{O}_k)\,,
\end{equation}
with
\begin{equation}
	b_a(\mathcal{O}_k):=\sum_{(g,h)\in\mathcal{O}_k} b_a^{(g,h)}\,.
\end{equation}
Now \eqref{2} yields
\begin{equation}
\label{4}
 \Delta_a=\sum_k b_a(\mathcal{O}_k) \left(A(\mathcal{O}_k)\,\int_{R_{\Gamma'_k}}\hypmes{\tau}\,\tau_2 \,Z_{\left(1,\theta^{l_k}\right)}^{1-\text{loop}}(\tau)-\int_{R_\Gamma}\hypmes{\tau} \tau_2\right)\,.
\end{equation}
So, we have expressed one-loop gauge threshold corrections by an integral of the partition function which is associated to a special element $(1,\theta^{l_k})$ over a fundamental domain of \emph{any} symmetry group of this partition function. We assumed this symmetry to be of finite group index in $\Gamma$, since we will only be interested in such groups. But what are the symmetries of $\tau_2 \,Z_{\left(1,\theta^{l_k}\right)}^{1-\text{loop}}(\tau)$? We will answer this question when we have obtained a concrete form of the partition function, later.

The partition function associated to certain boundary conditions is by definition given by a sum over all states which obey these boundary conditions. In \cite{Dixon_et_al2} and \cite{Kaplunovsky} it has been shown that only boundary conditions which leave exactly one plane invariant contribute to $\Delta_a$. Therefore, it suffices to look at an invariant sub-lattice---the fixed plane---of the lattice of all states.

Our next step is to construct these invariant sub-lattices. For that purpose we need a representation of the twist in terms of a lattice basis of the (complete) lattice of states. This representation of $\theta$ is given by $Q\in\SO{6}$ with $Q^N=\mathds{1}$ for some $N$, which we call the order of the twist. Let us denote the quantum numbers of all states by $w$, $p$ and $l$, where $w\in\mathds{Z}^6$ denotes a vector in the defining lattice of the theory, the compactification lattice, $p\in\mathds{Z}^6$ denotes a momentum vector, out of the dual lattice, and $l\in\mathds{Z}^{16}$ denotes the momentum in the additional $E_8\times E_8$ lattice.\footnote{Although we will focus on the $E_8\times E_8$ heterotic string, the discussion of the $\SO{32}$ heterotic string follows analogously.} All these vectors are written in a basis of the corresponding lattice. We combine them into a tuple of quantum numbers $u:=(w,p,l)\in\mathds{Z}^{26}$. They are connected to the corresponding left- and right moving physical momenta of the string via \cite{Witten1}
\begin{equation}
\label{8}
\begin{split}
	&p_L= \left(\frac{p}{2}+\left(\mathrm{g}-\mathrm{b}-\frac{1}{4} A^T C A\right)w -\frac{1}{2} A^T C l\,,\,l+A w\right)\quad \text{and}\\
	&p_R= \left(\frac{p}{2}-\left(\mathrm{g}+\mathrm{b}+\frac{1}{4} A^T C A\right)w -\frac{1}{2} A^T C l\,,\,0\right)\;.
\end{split}
\end{equation}
with $C$ denoting the metric on the root lattice under consideration and $A\in\mathds{Q}^{16\times 6}$ being the matrix of discrete Wilson lines. The matrices $\mathrm{g}$ and $\mathrm{b}$ denote the metric tensor and antisymmetric background, which are the most general symmetric and anti-symmetric matrices compatible with the orbifold twist \cite{ErlerKlemm}
\begin{equation}\label{metricdef}
\begin{aligned}Q^T \mathrm{g}\, Q=\mathrm{g} \\ \mathrm{g}^T = \mathrm{g}\end{aligned} \qquad\text{and}\qquad\begin{aligned}Q^T \mathrm{b}\, Q=b\\ b^T = -b\;.\end{aligned}
\end{equation}

To search for the invariant subspaces, we need to give the action of the orbifold twist on the quantum numbers $u$. It is given by \cite{NULLES2}
\begin{equation}
\label{9}
	\mathcal{Q}=\begin{pmatrix} Q							&	0	& 	0									\\
								\xi						&	Q^*	&	\left(\mathds{1}-Q^*\right)A^T C	\\
								A\left(\mathds{1}-Q\right)	&	0	&	\mathds{1}	\end{pmatrix}\,,
\end{equation}
with
\begin{equation}
	\xi = \frac{1}{2} A^T C A (\mathds{1}-Q)+\frac{1}{2}\left(\mathds{1}-Q^*\right)A^T C A \quad \text{and} \quad Q^*:=\left(Q^{-1}\right)^T=SQS^{-1}\,.
\end{equation}
Therefore, we search for powers of $l$ such that (integral) solutions of
\begin{equation}
\label{10}
	\mathcal{Q}^l u = u
\end{equation}
exist and want to determine them. To perform this task it is useful to have a closed formula for $\mathcal{Q}^l$. It can be constructed if we diagonalise $\mathcal{Q}$. For that purpose we define
\begin{align}
\label{11}
	&\hat w := w\,,\\
	&\hat p := p-\frac{1}{2} A^T C A w- A^T C l\quad \text{and}\\
	&\hat l := l + A w\,.
\end{align}
Then a new operator $\hat{\mathcal{Q}}$, which exactly corresponds to $\mathcal{Q}$, acts on $\hat u:=(\hat w,\hat p,\hat l)$  by
\begin{equation}
\label{12}
	\hat{ \mathcal{Q}}\,\hat u=	\begin{pmatrix} 	Q & 0   & 0 \\
													0 & Q^* & 0 \\
													0 & 0   & \mathds{1} \end{pmatrix} 
								\begin{pmatrix} \hat w \\ \hat p \\ \hat l \end{pmatrix}\,.
\end{equation}
A simple calculation shows that
\begin{equation}
\label{13}
		\hat u=\Omega u\quad\text{and}\quad\hat{\mathcal{Q}}=\Omega\,\mathcal{Q} \, \Omega^{-1}\,, \quad\text{with} \quad \Omega = \begin{pmatrix}	\mathds{1}			  &	  0			&0\\
						  -\frac 1 2 A^T C A	&	\mathds{1}&-A^T C\\
						  A						&	0		  &\mathds{1}\end{pmatrix}\,.
\end{equation}
Furthermore, it is easy to verify
\begin{equation}
\label{14}
	\Omega^{-1}=\begin{pmatrix} \mathds{1}			  	&	0			&	0\\
								-\frac 1 2 A^T C A		&	\mathds{1}	&	A^T C\\
								-A						&	0		  	&	\mathds{1}\end{pmatrix}\,.
\end{equation}
If we use $\hat{\mathcal{Q}}^k=\Omega \mathcal{Q}^k \Omega^{-1}$ we get
\begin{equation}
\label{15}
	\mathcal{Q}^k=	\begin{pmatrix} Q^k								&	0		& 	0									\\
									\xi'							&	(Q^*)^k	&	\left(\mathds{1}-(Q^*)^k\right)A^T C	\\
									A\left(\mathds{1}-Q^k\right)	&	0		&	\mathds{1}	\end{pmatrix}\,,
\end{equation}
where
\begin{equation}
	\xi' = \frac{1}{2} A^T C A \left(\mathds{1}-Q^k\right)+\frac{1}{2}\left(\mathds{1}-(Q^*)^k\right)A^T C A\,.
\end{equation}
The system of equations \eqref{10} now reads
\begin{align}
\label{16}
	&Q^k w = w\,,\\
\label{17}
	&\frac{1}{2} A^T C A \left(\mathds{1}-Q^k\right)w+\frac{1}{2}\left(\mathds{1}-(Q^*)^k\right)A^T C A w+\left(Q^*\right)^k p + \left(\mathds{1}-(Q^*)^k\right)A^T C l = p\,,\\
\label{18}
	&A\left(\mathds{1}-Q^k\right)w+l=l\,.
\end{align}
Examining these equations, we can immediately observe that the power of the twist which leaves one plane invariant does not change if we allow for discrete Wilson lines. The power is completely independent of them. But the concrete form of the sub-lattice will change. If a power of the twist admits a fixed plane, this plane can be parameterised with two \emph{real} variables. Since we are interested in the sub-lattice which lies inside this plane, we can parameterise this sub-lattice by two integral valued variables. So, from a more physical point of view, the position or orientation of the fixed planes within the winding and momentum lattices remains unchanged by switching on Wilson lines. However, non-vanishing Wilson lines do deform the lattices. 

Because \eqref{54} is a sum over boundary conditions which admit a fixed plane, it follows that \eqref{16} is solvable and can be parameterised by two integral variables $(n_1,n_2)\in\mathds{Z}^2$ via
\begin{equation}
\label{19}
	w=W \cdot \begin{pmatrix} n_1 \\ n_2\end{pmatrix}\quad \text{where}\quad W\in\mathds{Z}^{6\times 2}\,.
\end{equation}
Then \eqref{17} and \eqref{18} simplify accordingly to
\begin{align}
\label{20}
	\left(\mathds{1}-(Q^*)^k\right)\left(\frac{1}{2}A^T C A w + A^T C l \right)&= \left(\mathds{1}-(Q^*)^k\right)p\\
\label{21}
	l&=l\,.
\end{align}
Equation \eqref{21} is always true and \eqref{20} can be solved by applying Gau\ss ' algorithm, in which we only add and subtract multiples of one row to another. Since the right-hand side of \eqref{20} contains rational numbers only, the result of the algorithm will be a (linear) sum of rational multiples of the variables of $w$, $l$ and two additional variables $m_1$ and $m_2$, which are present since we look at a fixed plane in the dual lattice. Now this parameterised object has to be a subset of the original lattice, which corresponds to the fact that all components of $p$ which solve \eqref{20} have to be integral numbers. Since $(w,p,l)=(0,0,0)$ is a special solution of equations \eqref{16}, \eqref{17} and \eqref{18}, the theory of linear Diophantine equations tells us that the most general solution $(w,p,l)$ is given by,
\begin{equation}
\label{varold}
	\begin{pmatrix} w \\ p \\ l \end{pmatrix} = L \cdot \begin{pmatrix} m_1 \\ m_2 \\ n_1 \\ n_2 \\ l_1 \\ \vdots \\ l_{16} \end{pmatrix}\,,\quad \text{where} \quad L \in \mathds{Z}^{20\times 20}\,.
\end{equation}
We have gained a solution of the system of Diophantine equations \eqref{16} to \eqref{18}. They determine the hatted variables uniquely. Let us observe that \eqref{20} can be written as
\begin{equation}
\label{23}
	\left(\mathds{1}-(Q^*)^k\right)\left(p-\frac{1}{2}A^T C A w - A^T C l \right)= 0 \Leftrightarrow\left(\mathds{1}-(Q^*)^k\right) \hat p = 0 \,.
\end{equation}
Obviously it is more natural to view $\hat w$, $\hat p$ and $\hat l$ as fundamental variables, since they transform as ordinary winding, momentum and lattice vectors under the action of the twist. In these variables the left and right moving momenta read
\begin{equation}
\label{28}
\begin{split}
	&p_L= \left(\frac{\hat p}{2}+\left(\mathrm{g}-\mathrm{b}\right)\hat w \,,\,\hat l\right)\quad \text{and}\\
	&p_R= \left(\frac{\hat p}{2}-\left(\mathrm{g}+\mathrm{b}\right)\hat w \,,\,0\right)\,.
\end{split}
\end{equation}
This expression is formally the same as it is without Wilson lines. Recall that in this case the additional sum over the $E_8\times E'_8$ lattice can be absorbed into the beta function coefficients as was shown in \cite{Dixon_et_al2}.

This can be understood from a physical point of view. The $E_8\times E'_8$ momenta do not influence the partition function of the internal manifold, but they do influence the gauge degrees of freedom, hence, the beta function coefficients. Therefore, all information needed from the $E_8\times E'_8$ lattice is already encoded in the computation of the beta function coefficients.

For the following argumentation we need 
\begin{equation}
\label{26}
	\{ a_1 x_1 + a_2 x_2 + \ldots + a_n x_n | x_1 ,\, \ldots ,\, x_n \in \mathds{Z}\}=\{\text{gcd}(a_1,\,\ldots,\,a_n)k|k\in\mathds{Z}\}\,,
\end{equation}
for given $a_1,\,\ldots,\,a_n\in\mathds{Z}$. This can be seen if we observe that $c=a_1 x_1 + a_2 x_2 + \ldots + a_n x_n$ is solvable if and only if $\text{gcd}(a_1,\,\ldots,\,a_2)$ divides $c$.\footnote{This can be seen as getting all equations when going through all possible solutions.}

Now if we solve \eqref{23} for $\hat p$ we get as a solution a plane lying in the six dimensional space representing momentum vectors. Its parametrisation is given by the one which we would have gained if there were no Wilson lines present. Therefore, if we put the solutions $w$, $p$, and $l$ into \eqref{23} we get, using above argument, some sub-lattice of this plane. It differs from the intersection of the plane which solves \eqref{23} with the lattice $\mathds{Z}^6$, which corresponds to all quantized momenta. Sometimes it can happen that the sub-lattice not only contains all points of this intersection, but even more, i.e.\ the momentum $\hat p$ becomes fractional. In that sense Wilson lines deform the lattices and if we switch off the Wilson lines, the lattices will be of the shape as it has to be without them.

Next, we will show that the hatted momentum and the hatted winding lattice can each be parameterised via two integral variables.\footnote{Above they were parameterised by twenty integral variables.} 

Solving \eqref{16}, \eqref{20} and \eqref{21} for vanishing Wilson lines results in a fixed plane in momentum space and a fixed plane in winding space. Let us parameterise the former through two integral variables $m_1$, $m_2\in\mathds{Z}$ and the latter through $n_1$, $n_2\in\mathds{Z}$. Switching on the Wilson lines results, with the condition of $p$ being a vector in $\mathds{Z}^6$, in a deformation of the variables according to\footnote{Note that, although $\hat w = w$, the condition of $p\in \mathds{Z}^6$ in equation \eqref{23} causes an interdependence of $w=\hat w$ and $l$. Therefore, the solution for $\hat w$ depends, in general, not only on $n_1,n_2$ but also on $l_1,\ldots ,l_{16}$.}
\begin{equation}
\label{latticesub1}
	\begin{pmatrix} \hat n_1 \\ \hat n_2 \end{pmatrix} =\begin{pmatrix} \alpha_w & 0 \\ 0 & \beta_w \end{pmatrix} \cdot M_w \cdot \begin{pmatrix} n_1 \\ n_2 \\ l_1 \\ \vdots \\ l_{16} \end{pmatrix}\,, \quad \text{with} \quad M_w \in \mathds{Z}^{2\times 18}\quad \text{and}
\end{equation}
\begin{equation}
\label{latticesub2}
	\begin{pmatrix} \hat m_1 \\ \hat m_2 \end{pmatrix} =\begin{pmatrix} \alpha_p & 0 \\ 0 & \beta_p \end{pmatrix} \cdot M_p \cdot \begin{pmatrix} m_1 \\ m_2 \\ n_1 \\ n_2 \\ l_1 \\ \vdots \\ l_{16} \end{pmatrix}\,, \quad \text{with} \quad M_p \in \mathds{Z}^{2\times 20}\,.
\end{equation}
Here, $\hat n_1$, $\hat n_2$ parametrise the hatted winding lattice, whereas $\hat m_1$, $\hat m_2$ parametrise the hatted momentum lattice (both in the presence of discrete Wilson lines). The diagonal matrix is chosen such that $M_w$ and $M_p$ are integral, indeed. The entries in their rows are relatively prime to the entries in the same row. Thus, we can state that $\hat n_1$ takes all values in $\alpha_w \mathds{Z}$. If $\hat n_2$ and $\hat n_1$ have no variables in common, it immediately follows that $\hat n_2$ takes all values in $\beta_w \mathds{Z}$, independently of $\hat n_1$. If they have common variables, let us express one of those variables in terms of $\hat n_1/\alpha_w=:\hat n'_1$. Thus, a possible parametrisation of the lattice given by \eqref{latticesub1} reads
\begin{equation}
\label{latticesub3}
	\begin{pmatrix} \hat n_1 \\ \hat n_2 \end{pmatrix} = \begin{pmatrix} \alpha_w & 0 \\ 0 & \beta_w \end{pmatrix} \cdot \begin{pmatrix} 1 & 0_{1\times 17} \\ c'_1 & M'_w \end{pmatrix} \cdot \begin{pmatrix} \hat n'_1 \\ n_2 \\ l_1 \\ \vdots \\ l_{16} \end{pmatrix}\,,
\end{equation}  
with $c_1\in\mathds{Q}$, $0_{1\times 17}$ being a $1\times 17$ matrix of zeros and $M'\in \mathds{Z}^{1\times 17}$. Furthermore, let
\begin{equation}
\label{latticesub4}
	M'_w=\begin{pmatrix} c'_2 & c'_3 & \cdots & c'_{18} \end{pmatrix}\,,
\end{equation}
 where $c'_2$, $c'_3$, ..., $c'_{18}\in\mathds{Q}$. 

Let us denote the greatest common divisor of rational numbers $a,b\in\mathds{Q}$ by the rational number $c$, so that $c^{-1}$ is the lowest rational number such that $a\,c^{-1}\in\mathds{Z}$ and $b\,c^{-1}\in\mathds{Z}$. Its concrete form is given by
\begin{prop}
For $a=u/v\in\mathds{Q}$, $b=x/y\in\mathds{Q}$ and $\gcd(u,v)=\gcd(x,y)=1$ it holds
\begin{equation}
\label{latticesub5}
	\gcd\left(\frac{u}{v},\frac{x}{y}\right)=\frac{\gcd(u,x)}{\lcm(v,y)}\,.
\end{equation}
\end{prop}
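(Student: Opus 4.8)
The plan is to verify directly that $c:=\gcd(u,x)/\lcm(v,y)$ has the defining property of $\gcd(a,b)$ given just before the proposition: namely that $a/c,\,b/c\in\mathds{Z}$ and that every positive rational $c'$ with $a/c',\,b/c'\in\mathds{Z}$ divides $c$ (i.e.\ $c/c'\in\mathds{Z}$), which is exactly the statement that $c^{-1}$ is the lowest rational with $ac^{-1},\,bc^{-1}\in\mathds{Z}$. Without loss of generality I would take $u,v,x,y>0$, since both sides of \eqref{latticesub5} depend only on absolute values. Write $d:=\gcd(u,x)$ and $m:=\lcm(v,y)$, so $c=d/m$.

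First I would check that $c$ is a common divisor in the required sense: $a/c=\frac{u}{v}\cdot\frac{m}{d}=\frac{u}{d}\cdot\frac{m}{v}$, which is an integer because $d\mid u$ and $v\mid m$; the same computation with $x,y$ in place of $u,v$ shows $b/c\in\mathds{Z}$.

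Next comes the maximality argument, which is the only place any real work happens. Let $c'=p/q$ be an arbitrary positive rational in lowest terms, $\gcd(p,q)=1$, such that $a/c'=\frac{uq}{vp}\in\mathds{Z}$ and $b/c'=\frac{xq}{yp}\in\mathds{Z}$. From $vp\mid uq$ I would extract two divisibilities using coprimality: since $p\mid uq$ and $\gcd(p,q)=1$, one gets $p\mid u$; since $v\mid uq$ and $\gcd(u,v)=1$, one gets $v\mid q$. The symmetric argument applied to $\frac{xq}{yp}\in\mathds{Z}$ gives $p\mid x$ and $y\mid q$. Hence $p\mid\gcd(u,x)=d$ and $\lcm(v,y)=m\mid q$, so that $c/c'=\frac{d/m}{p/q}=\frac{d}{p}\cdot\frac{q}{m}\in\mathds{Z}$; that is, $c'$ divides $c$. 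Since $c$ is itself a common divisor, it is the greatest one, and therefore $\gcd(u/v,x/y)=d/m=\gcd(u,x)/\lcm(v,y)$.

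The main obstacle, such as it is, is purely bookkeeping: one must be careful that ``greatest common divisor'' of rationals is meant in the divisibility partial order (equivalently, that ``lowest'' for $c^{-1}$ is meant in the same order), and one must invoke the hypotheses $\gcd(u,v)=\gcd(x,y)=1$ at precisely the right moments, since these are exactly what upgrade $v\mid uq$ and $y\mid xq$ to $v\mid q$ and $y\mid q$. No deeper idea is needed; an alternative packaging is to phrase everything in terms of $e:=c^{-1}$ and show that $e=\lcm(v,y)/\gcd(u,x)$ is the least positive rational clearing the denominators of both $a$ and $b$, but the content is identical.
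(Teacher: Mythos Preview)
Your proof is correct and follows essentially the same route as the paper: first check that $\gcd(u,x)/\lcm(v,y)$ is a common divisor of $a$ and $b$, then use the coprimality hypotheses $\gcd(u,v)=\gcd(x,y)=1$ (together with $\gcd(p,q)=1$) to show it is maximal among common divisors. The paper packages the maximality step slightly differently---it writes $\gcd(a,b)=k\cdot\gcd(u,x)/\lcm(v,y)$ for some integer $k$ and then argues $k\mid 1$---but the substance is the same, and your version is arguably the cleaner of the two.
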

\begin{proof}One can prove this statement as follows. Since the right hand side of \eqref{latticesub5} clearly divides both $a$ and $b$, it follows that the right hand side of \eqref{latticesub5} also divides $\gcd(a,b)$. Hence, there exists a $k\in\mathds{Z}$ such that
\begin{equation}
\label{latticesub6}
	\gcd\left(\frac{u}{v},\frac{x}{y}\right)=k\,\frac{\gcd(u,x)}{\lcm(v,y)}
\end{equation}
which yields
\begin{equation}
\label{latticesub7}
	\frac{u}{v}\cdot\frac{\lcm(v,y)}{k\,\gcd(u,x)}\in\mathds{Z}\quad \text{and} \quad \frac{x}{y}\cdot\frac{\lcm(v,y)}{k\,\gcd(u,x)}\in\mathds{Z}\,.
\end{equation}
This is equivalent to
\begin{equation}
\label{latticesub8}
	k\left| \frac{u}{\gcd(u,x)}\cdot\frac{\lcm(v,y)}{v} \right.\quad \text{and} \quad k\left| \frac{x}{\gcd(u,x)}\cdot\frac{\lcm(v,y)}{y} \right.\,.
\end{equation}
Using $\gcd(v,y)\cdot \lcm(v,y)=v\cdot y$ and $\gcd(u,v)=\gcd(x,y)=1$ we infer that $k|1$ and the statement is true.
\end{proof}
Now let us define $\beta'_w:=\gcd(c'_1,c'_2,\ldots,c'_{18})$ and $c_i:=c'_i/\beta'_w\in\mathds{Z}$ for $i=1,2,\ldots,18$. Then the lattice parameterised by \eqref{latticesub3} can be written as
\begin{equation}
\label{latticesub9}
	\begin{pmatrix} \hat n_1 \\ \hat n_2 \end{pmatrix} = \begin{pmatrix} \alpha_w & 0 \\ 0 & \beta_w \,\beta'_w \end{pmatrix} \cdot \begin{pmatrix} 1 & 0 & \cdots & 0 \\ c_1 & c_2 & \cdots & c_{18} \end{pmatrix} \cdot \begin{pmatrix} \hat n'_1 \\ n_2 \\ l_1 \\ \vdots \\ l_{16} \end{pmatrix}\,.
\end{equation}
Our next step to show that the winding lattice and the momentum lattice can each be parameterised via two independent integral variables is 
\begin{prop} The lattice $\{(y_1,y_2)\}$ given by all pairs of $(y_1,y_2)$ with
\begin{equation}
\label{latticesub10}
	\begin{pmatrix} y_1 \\ y_2 \end{pmatrix} = \begin{pmatrix} x_1 \\ c_1\,x_1 + c_2\,x_2 + \ldots + c_{n}\,x_{n} \end{pmatrix}\quad \text{with} \quad c_i,\,x_i\in\mathds{Z}\,,\,\,n\ge 2
\end{equation}
can be parameterised via two integral variables, i.e.\\ there exist $\bar c_1,\bar c_2\in\mathds{Z}$ such that
\begin{equation}
\label{latticesub11}
	\left\{\begin{pmatrix} x_1 \\ c_1\,x_1 + c_2\,x_2 + \ldots + c_{n}\,x_{n} \end{pmatrix}: x_i\in\mathds{Z}\right\}=\left \{ \begin{pmatrix} x_1 \\ \bar c_1 \, x_1 + \bar c_2 \, \chi \end{pmatrix} : x_1,\chi\in\mathds{Z}\right \}
\end{equation}
\end{prop}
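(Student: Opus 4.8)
The plan is to reduce the statement to equation \eqref{26}, which already identifies the set of integer linear combinations $\{c_2 x_2 + \ldots + c_n x_n : x_2,\ldots,x_n \in \mathds{Z}\}$ with $\gcd(c_2,\ldots,c_n)\,\mathds{Z}$. First I would observe that the first coordinate $y_1 = x_1$ is completely unconstrained, so the lattice on the left-hand side of \eqref{latticesub11} fibres over $\mathds{Z}$, the fibre over a fixed $x_1$ being $\{x_1\} \times \bigl(c_1 x_1 + \{c_2 x_2 + \ldots + c_n x_n : x_2,\ldots,x_n \in \mathds{Z}\}\bigr)$; likewise the right-hand side fibres over $\mathds{Z}$ with fibre $\{x_1\}\times\{\bar c_1 x_1 + \bar c_2\chi : \chi\in\mathds{Z}\}$. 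Hence it suffices to match the two sets fibrewise, for each fixed $x_1$.

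Next I would apply \eqref{26} to the $(n-1)$-tuple $(c_2,\ldots,c_n)$ to rewrite the inner set as $\gcd(c_2,\ldots,c_n)\,\mathds{Z}$. Setting $\bar c_1 := c_1$ and $\bar c_2 := \gcd(c_2,\ldots,c_n)$, the fibre over $x_1$ on the left becomes $\{x_1\} \times \{c_1 x_1 + \bar c_2 \chi : \chi \in \mathds{Z}\}$, which is exactly the fibre over $x_1$ on the right. Taking the union over all $x_1 \in \mathds{Z}$ then yields the asserted equality of sets. Both inclusions follow from \eqref{26}: the inclusion $\subseteq$ because $c_2 x_2 + \ldots + c_n x_n$ is always a multiple of $\bar c_2$, and $\supseteq$ because every multiple $\bar c_2\chi$ is realised as $c_2 x_2 + \ldots + c_n x_n$ for suitable integers $x_2,\ldots,x_n$.

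Two minor points deserve a remark rather than real work. The degenerate case $c_2 = \ldots = c_n = 0$ is handled by the convention $\gcd(0,\ldots,0) = 0$, which gives $\bar c_2 = 0$ and reproduces $\{(x_1, c_1 x_1) : x_1 \in \mathds{Z}\}$; and the case $n = 2$ needs no separate treatment, since \eqref{26} is already valid for one-element tuples. I do not expect a genuine obstacle here: the proposition is essentially a bookkeeping reformulation of \eqref{26}, organised according to the trivial behaviour of the first coordinate, and the only thing to be careful about is to compare the two lattices fibrewise over the same index $x_1$ rather than attempting to match generators directly.
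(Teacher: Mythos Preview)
Your proof is correct and follows essentially the same route as the paper: both set $\bar c_1=c_1$ and $\bar c_2=\gcd(c_2,\ldots,c_n)$, the paper invoking B\'ezout's lemma directly where you invoke the equivalent statement \eqref{26}. Your fibrewise presentation and the remarks on degenerate cases are more explicit than the paper's terse argument, but there is no substantive difference.
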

\begin{proof}
We have to show that there exist $\bar c_1,\bar c_2\in\mathds{Z}$ such that for all $x_2,x_3\ldots,x_n\in\mathds{Z}$ there exists one $\chi\in\mathds{Z}$ in such a way that
\begin{equation}
\label{latticesub12}
	c_1\,x_1+c_2\,x_2+\ldots + c_n\,x_n=\bar c_1\,x_1 + \bar c_2\,\chi
\end{equation}
independently of $x_1$. If we look at the difference of the left hand side and the right hand side of equation \eqref{latticesub12} and use the lemma of B\'ezout, it becomes clear that we have to choose $\bar c_1=c_1$ and $\bar c_2=\gcd(c_2,c_3,\ldots,c_n)$. This proves the proposition.
\end{proof}
Let us apply this proposition to \eqref{latticesub9}. The momentum lattice can, therefore, be written as
\begin{equation}
\label{latticesub13}
\begin{split}
\begin{pmatrix} \hat n_1 \\ \hat n_2 \end{pmatrix} &= \begin{pmatrix} \alpha_w & 0 \\ 0 & \beta_w \,\beta'_w \end{pmatrix} \cdot \begin{pmatrix} \hat n'_1 \\c_1 \,\hat n'_1 + \gcd(c_2,c_3,\ldots,c_n)\hat n'_2\end{pmatrix}=\\
&=\begin{pmatrix} \alpha_w & 0 \\ 0 & \beta_w \,\beta'_w \end{pmatrix}\cdot \begin{pmatrix} 1 & 0 \\ c_1 & \gcd(c_2,c_3,\ldots,c_n) \end{pmatrix} \cdot \begin{pmatrix} \hat n'_1 \\ \hat n'_2 \end{pmatrix}\,.
\end{split}
\end{equation}
Observe that $\gcd(c_2,c_3,\ldots,c_n)=1$ for $n\ge 3$ and $c_2$ for $n=2$. Hence, we have shown that we can always parameterise the hatted winding lattice with two variables. To achieve this we had to redefine the variables of the hatted momentum lattice and of the $E_8\times E_8$ lattice. Next we have to express $n_1,n_2,l_1,\ldots,l_{16}$ in terms of these new variables. This is clearly possible. Then we can apply the same argumentation as above to show that the momentum lattice can also be parameterised via two integral variables. Altogether, we can state that there exist matrices $A_w,A_p\in\mathds{Q}^{2\times 2}$ and $A_l\in\mathds{Q}^{16\times 16}$ such that
\begin{align}
\label{latticesub14}
& \hat w =W\cdot\begin{pmatrix} \hat n'_1 \\ \hat n'_2 \end{pmatrix} = W\cdot A_w \cdot \begin{pmatrix}  n'_1 \\ n'_2 \end{pmatrix}\,,\quad \text{for} \quad n'_1,n'_2\in\mathds{Z}\,,\\
\label{latticesub15}
& \hat p = P \cdot \begin{pmatrix} \hat m_1 \\ \hat m_2 \end{pmatrix} =P\cdot A_p \cdot \begin{pmatrix}  m'_1 \\ m'_2 \end{pmatrix}\,,\quad \text{for} \quad m'_1,m'_2\in\mathds{Z}\quad \text{and}\\
\label{latticesub16}
&\hat l= \begin{pmatrix} \hat l_1 \\ \hat l_2 \\ \vdots \\ \hat l_{16} \end{pmatrix} = A_l \cdot \begin{pmatrix}  l'_1 \\ l'_2 \\ \vdots \\ l'_{16} \end{pmatrix}\,,\quad \text{for} \quad l'_1,l'_2,\ldots,l'_{16}\in\mathds{Z}
\end{align} 
parameterise the fixed planes for non-vanishing Wilson lines. Here, $W$ and $P$ are defined in such a way that they parameterise the fixed planes in momentum and winding lattice for vanishing Wilson lines (cf. eq \eqref{19}). It is crucial that we can choose variables such that all lattices decouple from the other two lattices. In particular, this allows us to absorb the sum over the $E_8\times E'_8$ lattice into the beta function coefficients. Later we will see how to deal with these matrices.

To write down the moduli of the lattice, we have to firstly construct a metric and an antisymmetric tensor through \eqref{metricdef}. Secondly, we have to construct the projection of the metric and antisymmetric tensor field onto the constructed sub-lattice $\hat w$. It is uniquely defined in terms of $\hat w$, $\mathrm{g}$ and $\mathrm{b}$ via
\begin{align}
\label{29}
	&\hat w^T g\, \hat w=\begin{pmatrix} n_1 & n_2 \end{pmatrix}\cdot g^\perp \cdot \begin{pmatrix} n_1 \\ n_2 \end{pmatrix} \quad \text{and}\\
\label{30}
	&\hat w^T b\, \hat w=\begin{pmatrix} n_1 & n_2 \end{pmatrix}\cdot b^\perp \cdot \begin{pmatrix} n_1 \\ n_2 \end{pmatrix}\,,
\end{align}
where $g^\perp$ is symmetric and $b^\perp$ is antisymmetric. Since $b^\perp$ is an antisymmetric $2\times 2$ matrix, it follows that it contains only one independent entry, which we also denote as $b^\perp$. Given the metric and the antisymmetric tensor in the fixed plane parameterised by $\hat w$, we can write down the moduli of this plane\footnote{We work with the convention $\alpha'=2$.}:
\begin{align}
\label{31}
	&T=T_1+\I\, T_2=2\left(\mathrm{b}^\perp+\I\sqrt{\det \mathrm{g}^\perp}\right)\quad \text{and}\\
\label{32}
	&U=U_1+\I\, U_2=\frac{1}{\mathrm{g}^\perp_{11}}\left(\mathrm{g}^\perp_{12}+\I\sqrt{\det \mathrm{g}^\perp}\right)
\end{align}
It is important to note that the moduli of the fixed plane do not change if we switch on discrete Wilson lines. This is true since the power of the twist and the real two-dimensional vector space which contains the fixed plane (the integer variables considered as real ones) remain unchanged.

Now we are able to evaluate the one-loop partition function associated to the boundary conditions $\left(1,\theta^{l_k}\right)$. In the following we will need certain tools, whose derivation we postpone until chapter \ref{chapfour} where their careful development does not disturb the line of reasoning.

Let $\Lambda_k^*$ denote the lattice of all states in momentum space and $\left(\Lambda_k^*\right)^\perp$ the invariant sub-lattice. Then the general expression for the partition function which is associated to $\left( 1,\theta^{l_k}\right)$ is given by \cite{ErlerKlemm}
\begin{equation}
\label{34}
	Z_{\left(1,\theta^{l_k}\right)}^{1-\text{loop}}(\tau)=\sum_{p\in\left(\Lambda_k^*\right)^\perp} q^{\,p^2_L/2}\,{\bar q}^{\,p^2_R/2}\,.
\end{equation}
Without discrete Wilson lines this can be simplified by putting \eqref{28}, \eqref{31} and \eqref{32} into \eqref{34} and using the Smith normal form (SNF) (cf.\ theorem \ref{SNF}) of the matrix $P^T\cdot W$. This would result in a partition function
\begin{equation}
\label{33}
\begin{split}
	Z^{\text{no-WL}}(\tau)=&\sum_{\genfrac{}{}{0pt}{2}{n_1,n_2\in\mathds{Z} }{m_1,m_2\in\mathds{Z}}}\ex{2\pi \I \,\tau(\gamma_1\, m_1 n_1+\gamma_2\, m_2 n_2 )}\\&\times\exp\left[-\frac{\pi\,\tau_2}{T'_2\,U'_2}\left| T' U' \,n_2+T' \,n_1 -\gamma_1\,m_1\,U'+\gamma_2\,m_2\right|^2\right]=\\
	&=\sum_{A}\ex{-2\pi\I\,\tau\,\det A}\exp\left[-\frac{\pi\,\tau_2}{T'_2\,U'_2}\left| \begin{pmatrix} 1 & U' \end{pmatrix} \cdot A \cdot \begin{pmatrix} T' \\ 1 \end{pmatrix}\right|^2\right]\,,
\end{split}
\end{equation}
with $\gamma_1|\gamma_2\in\mathds{Z}$, $T'$ and $U'$ being determined by the SNF and
\begin{equation}
\label{wilsonpart1}
	A=\begin{pmatrix} n_1 & \gamma_2 m_2 \\ n_2 & - \gamma_1 m_1 \end{pmatrix}=\begin{pmatrix} n_1 & 0 \\ n_2 & 0 \end{pmatrix} + \begin{pmatrix} 0 & 1 \\ -1 & 0 \end{pmatrix} \cdot \begin{pmatrix} \gamma_1 & 0 \\ 0 & \gamma_2 \end{pmatrix} \cdot \begin{pmatrix} 0 & m_1 \\ 0 & m_2 \end{pmatrix} \,.
\end{equation}

If we allow for discrete Wilson lines, we have to sum over all hatted variables which give integral momentum $p$, instead of all unhatted variables in \eqref{33}. We showed that this results in a redefinition of momenta and winding according to equations \eqref{latticesub14} - \eqref{latticesub16}. Inserting this redefinition into \eqref{wilsonpart1} yields
\begin{equation}
\label{wilsonpart2}
	A=A_w \cdot \begin{pmatrix} n_1 & 0 \\ n_2 & 0 \end{pmatrix} + \begin{pmatrix} 0 & 1 \\ -1 & 0 \end{pmatrix} \cdot \begin{pmatrix} \gamma_1 & 0 \\ 0 & \gamma_2 \end{pmatrix} \cdot A_p \cdot \begin{pmatrix} 0 & m_1 \\ 0 & m_2 \end{pmatrix}\;.
\end{equation}
Let us denote the greatest common divisor of the entries of $A_w$ and $A_p$ by $\omega_w$ and $\omega_p$, respectively. Then $A'_w:={\omega_w}^{-1} A_w$ and $A'_p:={\omega_p}^{-1} A_p$ are both integral matrices. Equation \eqref{wilsonpart2} can now be written as
\begin{equation}
\label{wilsonpart3}
	A= \omega_w\,A'_w \cdot \begin{pmatrix} n_1 & 0 \\ n_2 & 0 \end{pmatrix} + \omega_p\, \begin{pmatrix} 0 & 1 \\ -1 & 0 \end{pmatrix} \cdot \begin{pmatrix} \gamma_1 & 0 \\ 0 & \gamma_2 \end{pmatrix} \cdot A'_p \cdot \begin{pmatrix} 0 & m_1 \\ 0 & m_2 \end{pmatrix}\,.
\end{equation}
Using theorem \ref{SNF} and the discussion below this theorem, we can infer that there exist two matrices $P_1,Q_1\in\SL{2}{Z}$ and a diagonal matrix $D_1$ with integral components, such that $A'_w={P_1}^{-1} \cdot  D_1 \cdot {Q_1}^{-1}$. Then \eqref{wilsonpart3} is given by
\begin{equation}
\label{wilsonpart4}
	A=P_1^{-1}\cdot \left( \omega_w\,D_1 \cdot {Q_1}^{-1} \begin{pmatrix} n_1 & 0 \\ n_2 & 0 \end{pmatrix} + \omega_p\, S^{-1}\,SP_1 S^{-1} \cdot \begin{pmatrix} \gamma_1 & 0 \\ 0 & \gamma_2 \end{pmatrix} \cdot A'_p \cdot \begin{pmatrix} 0 & m_1 \\ 0 & m_2 \end{pmatrix} \right)\;.
\end{equation}
Using the discussion below equation \eqref{78}, the prefactor ${P_1}^{-1}$ results in a modular transformation of $U$ and, again, we should redefine summation variables $n_1,n_2\in\mathds{Z}$,
\begin{align}
\label{wilsonpart5}
	&\begin{pmatrix} n'_1 \\ n'_2 \end{pmatrix} = {Q_1}^{-1} \cdot \begin{pmatrix} n_1 \\ n_2 \end{pmatrix}\\
\label{wilsonpart6}
	&U''=({P_1}^{-1})^\sharp \,U'\,,\quad \text{with} \quad \begin{pmatrix} a & b \\ c & d \end{pmatrix}^\sharp:= \begin{pmatrix} d & b \\ c & a \end{pmatrix}\,. 
\end{align}
For a general $m\times n$ matrix $A$ it holds in analogy that $(A^\sharp)_{i,j}=A_{n+1-j,m+1-i}$. Observe that $Q_1^{-1} \cdot \mathds{Z}^2=\mathds{Z}^2$ for $Q_1^{-1}\in\SL{2}{Z}$. If we use $S\, M\, S^{-1}=M^*$ for any $\mathds{Z}^{2\times 2}$, we arrive at
\begin{equation}
\label{wilsonpart7}
	A'=\omega_w\,D_1 \begin{pmatrix} n'_1 & 0 \\ n'_2 & 0 \end{pmatrix} + \omega_p\, S^{-1}\,{P_1}^* \cdot \begin{pmatrix} \gamma_1 & 0 \\ 0 & \gamma_2 \end{pmatrix} \cdot A'_p \cdot \begin{pmatrix} 0 & m_1 \\ 0 & m_2 \end{pmatrix}\,.
\end{equation}

Replacing $A$ by $A'$ in \eqref{33} and using the discussion below equations \eqref{62} and \eqref{63}, we can deal with the diagonal matrix $\omega_w\, D_1$, which has rational entries. Hence, we set $\omega_w\, D_1 =: \diag (\alpha_1,\beta_1)$ and rescale the moduli according to
\begin{align}
\label{wilsonpart8}
	&T\longmapsto T'=\alpha_1 \beta_1\,T\\
\label{wilsonpart9}
	&U''\longmapsto U'''=\frac{\beta_1}{\alpha_1}\, U''
\;,
\end{align}
then the matrix $A$ in \eqref{33} is replaced by 
\begin{equation}
\label{wilsonpart10}
	A''=\begin{pmatrix} n'_1 & 0 \\ n'_2 & 0 \end{pmatrix} + \omega_p\,\omega_w\, S^{-1} \cdot D_1 \cdot \,{P_1}^* \cdot \begin{pmatrix} \gamma_1 & 0 \\ 0 & \gamma_2 \end{pmatrix} \cdot A'_p \cdot \begin{pmatrix} 0 & m_1 \\ 0 & m_2 \end{pmatrix}\,,
\end{equation}
where $n'_1,n'_2,m_1,m_2\in\mathds{Z}$. By construction it is true that
\begin{equation}
\label{wilsonpart11}
	A''_p:=D_1 \cdot \,{P_1}^* \cdot \begin{pmatrix} \gamma_1 & 0 \\ 0 & \gamma_2 \end{pmatrix} \cdot A'_p\in\mathds{Z}^{2\times2}\,.
\end{equation}
Thus, again by theorem \ref{SNF} and the discussion below this theorem we can infer that there exist two matrices $P_2,Q_2\in\SL{2}{Z}$ and a diagonal matrix $D_2$ with integral components, such that $A''_p={P_2}^{-1} \cdot  D_2 \cdot {Q_2}^{-1}$ and equation \eqref{wilsonpart10} can be expressed as\footnote{Remember $S A S^{-1}=A^*=\left(A^{-1}\right)^T$ for a matrix $A$.}
\begin{equation}
\label{wilsonpart12}
	A''={P_2}^{T} \left( {P_2}^*\begin{pmatrix} n'_1 & 0 \\ n'_2 & 0 \end{pmatrix} + \omega_p\,\omega_w\, S^{-1}\cdot  D_2 \cdot {Q_2}^{-1} \cdot \begin{pmatrix} 0 & m_1 \\ 0 & m_2 \end{pmatrix}\right)\,.
\end{equation}
Again, we redefine the variables of the momentum- and the winding-lattice, as well as the $U'''$-modulus,
\begin{align}
\label{wilsonpart13}
	&U'''':=\left({P_2}^T\right)^\sharp U'''\,,\\
\label{wilsonpart14}
	&\begin{pmatrix} n''_1 \\ n''_2 \end{pmatrix}:={P_2}^* \cdot \begin{pmatrix} n'_1 \\ n'_2 \end{pmatrix}\quad \text{and}\\
\label{wilsonpart15}
	&\begin{pmatrix} m'_1 \\ m'_2 \end{pmatrix}:={Q_2}^{-1} \cdot \begin{pmatrix} m_1 \\ m_2 \end{pmatrix}\,.
\end{align}
If we now define $\omega_p\,\omega_w\, D_2=:\diag(\gamma,\delta)$ we arrive, finally, at
\begin{equation}
\label{wilsonpart16}
	A'''=\begin{pmatrix} n''_1 & 0 \\ n''_2 & 0 \end{pmatrix} + S^{-1}\cdot  \begin{pmatrix} \gamma & 0 \\ 0 & \delta \end{pmatrix} \cdot \begin{pmatrix} 0 & m'_1 \\ 0 & m'_2 \end{pmatrix}\,.
\end{equation}
Therefore, there exist $\gamma,\delta\in\mathds{Q}$ such that the partition function for non-vanishing Wilson lines can be written as
\begin{equation}
\label{YEHAW}
	\begin{split}
	Z^{\text{WL}}(\tau)=&\sum_{\genfrac{}{}{0pt}{2}{n''_1,n''_2\in\mathds{Z} }{m'_1,m'_2\in\mathds{Z}}}\ex{2\pi \I \,\tau(\gamma\, m'_1 n''_1+\delta\, m'_2 n''_2 )}\\&\times\exp\left[-\frac{\pi\,\tau_2}{T'_2\,U''''_2}\left| T' U'''' \,n_2+T' \,n_1 -\gamma\,m_1\,U''''+\delta\,m_2\right|^2\right]\,.
	\end{split}
\end{equation}

We will look at a more general case. We assume that the most general form of the partition function reads after Poisson resummation (omitting the primes)
\begin{equation}
\label{7}
	\tau_2\,Z_{\left(1,\theta^{l_k}\right)}^{1-\text{loop}}(\tau)=\sum_{A\in\sumset{\alpha}{\beta}{\delta}{\gamma}{\mathds{M}}}\ex{-2\pi \I \,T\det{A}}\,\frac{T_2}{\gamma \delta}\,\exp\left[-\frac{\pi\,T_2}{\tau_2\,U_2}\left| \begin{pmatrix} 1 & U \end{pmatrix} A \begin{pmatrix} \tau\\1 \end{pmatrix}\right|^2\right]\,,
\end{equation}
where
\begin{equation}
\label{genpartf}
\sumset{\alpha}{\beta}{\delta}{\gamma}{\mathds{M}}=\left\{\left.\begin{pmatrix} \alpha n_1 & \frac{1}{\gamma} l_1 \\ \beta n_2 & \frac{1}{\delta} l_2 \end{pmatrix} \right|n_1,\,n_2,l_1,\,l_2\in\mathds{Z}\right\}\quad \text{with} \quad \alpha,\,\beta,\,\gamma,\,\delta\in\mathds{Q}\,.
\end{equation}
This expression allows us to examine the set of possible symmetries $\Gamma'$ of the partition function which is associated with $(1,\theta^{l_k})$. In principle, the concrete form of the symmetry won't be necessary for the computation of one-loop gauge threshold corrections. The only fact which will be of importance is that there \emph{exists} a symmetry $\Gamma'$. We will prove its existence by constructing it explicitly.

Let us examine a modular transformation $P\in\Gamma$ on the variable $\tau$ in \eqref{7} and let $P$ as a transformation on $\tau$ be represented as 
\begin{equation}
\label{36}
	P_\tau = \begin{pmatrix} a & b \\ c & d \end{pmatrix}\quad \text{with} \quad ad-bc=1\,.
\end{equation}
Then the associated modular transformation is given by
\begin{equation}
\label{35}
	P_{\tau}\,\left(\tau_2\,Z_{\left(1,\theta^{l_k}\right)}^{1-\text{loop}}(\tau)\right):=\frac{\tau_2}{\left|c \tau + d\right|^2}\,Z_{\left(1,\theta^{l_k}\right)}^{1-\text{loop}}\left(\frac{a\tau+b}{c\tau+d}\right)=
\end{equation}
\[
	\quad=\sum_{A\in\sumset{\alpha}{\beta}{\delta}{\gamma}{\mathds{M}}}\ex{-2\pi \I \,T\det{A}}\,\frac{T_2}{\gamma \delta}\,\exp\left[-\frac{\pi\,T_2\,\left|c \tau + d\right|^2}{\tau_2\,U_2}\left| \begin{pmatrix} 1 & U \end{pmatrix} A \begin{pmatrix} \frac{a\tau+b}{c\tau+d}\\1 \end{pmatrix}\right|^2\right]=
\]
\[
	\quad=\sum_{A\in\sumset{\alpha}{\beta}{\delta}{\gamma}{\mathds{M}}}\ex{-2\pi \I \,T\det{A\,P}}\,\frac{T_2}{\gamma\delta}\,\exp\left[-\frac{\pi\,T_2}{\tau_2\,U_2}\left| \begin{pmatrix} 1 & U \end{pmatrix} A \begin{pmatrix} a & b \\ c & d \end{pmatrix} \begin{pmatrix} \tau\\1 \end{pmatrix}\right|^2\right]=
\]
\[
	\quad=\sum_{A\in\sumset{\alpha}{\beta}{\delta}{\gamma}{\mathds{M}}\cdot P}\ex{-2\pi \I \,T\det{A}}\,\frac{T_2}{\gamma \delta}\,\exp\left[-\frac{\pi\,T_2}{\tau_2\,U_2}\left| \begin{pmatrix} 1 & U \end{pmatrix} A \begin{pmatrix} \tau\\1 \end{pmatrix}\right|^2\right]\,.
\]
Therefore, a modular transformation on $\tau$ is equivalent to a multiplication of any matrix of $\sumset{\alpha}{\beta}{\delta}{\gamma}{\mathds{M}}$ with $P$ from the right, which means that we replace the domain of summation $\sumset{\alpha}{\beta}{\delta}{\gamma}{\mathds{M}}$ by $\sumset{\alpha}{\beta}{\delta}{\gamma}{\mathds{M}}\cdot P$. Hence, the invariance of the partition function $\tau_2\,Z_{\left(1,\theta^{l_k}\right)}^{1-\text{loop}}(\tau)$ is fulfilled if
\begin{equation}
\label{37}
	\sumset{\alpha}{\beta}{\delta}{\gamma}{\mathds{M}}\cdot P=\sumset{\alpha}{\beta}{\delta}{\gamma}{\mathds{M}}\,.
\end{equation}
Let us write
\begin{equation}
	\begin{pmatrix} \alpha & \gamma \\ \beta & \delta \end{pmatrix} = \begin{pmatrix} \sumfrac{\alpha} & \sumfrac{\gamma} \\ \sumfrac{\beta} & \sumfrac{\delta} \end{pmatrix}\,,
\end{equation}
with $\gcd(u_\alpha,v_\alpha)=\gcd(u_\beta,v_\beta)=\gcd(u_\gamma,v_\gamma)=\gcd(u_\delta,v_\delta)=1$. Then, the above mentioned multiplication reads 
\begin{equation}
\label{38}
\begin{split}
	\begin{pmatrix} \sumfraca{\alpha}\, n_1 & \sumfracb{\gamma}\, l_1 \\ \sumfraca{\beta}\, n_2 &  \sumfracb{\delta}\, l_2 \end{pmatrix} \cdot \begin{pmatrix} a & b \\ c & d \end{pmatrix}=
	\begin{pmatrix} 
		\sumfraca{\alpha}\left(n_1 a+\frac{v_\alpha\, v_\gamma}{u_\alpha\,u_\gamma}l_1 c\right) & \sumfracb{\gamma} \left( \frac{u_\gamma\,u_\alpha}{v_\gamma\,v_\alpha}n_1 b+ l_1 d\right)\\
		\sumfrac{\beta}\left(n_2 a+\frac{v_\beta\, v_\delta}{u_\beta\,u_\delta}l_2 c\right) & \sumfracb{\delta} \left( \frac{u_\delta\,u_\beta}{v_\delta\,v_\beta}n_2 b+ l_2 d\right)
	\end{pmatrix}
\end{split}
\end{equation}
and, therefore,
\begin{align}
\label{39}
	&\nu:=\left.\lcm\left(\frac{v_\gamma\,v_\alpha}{\gcd (u_\gamma\,u_\alpha,v_\gamma\,v_\alpha)},\frac{v_\delta\,v_\beta}{\gcd (u_\delta\,u_\beta,v_\delta\,v_\beta)}\right)\right|b \quad \text{and}\\
	&\mu:=\left.\lcm\left(\frac{u_\alpha\,u_\gamma}{\gcd (u_\alpha\,u_\gamma,v_\alpha\, v_\gamma)},\frac{u_\beta\,u_\delta}{\gcd (u_\beta\,u_\delta,v_\beta\, v_\delta)}\right) \right|c\,.
\end{align}
Thus, we can read off
\begin{equation}
\label{40}
	\Gamma'=\Gamma(\mu,\nu):=\Gamma_0(\mu)\cap\Gamma^0(\nu)\,.
\end{equation}
We will compute the integrals \eqref{4} (with the partition function given in \eqref{7}) by successively reducing the cases of $(\alpha,\beta,\gamma,\delta)$ via rescaling of the moduli $T$ and $U$. Firstly, we will reduce the case of $(\alpha,\beta,\gamma,\delta)\in\mathds{Q}^4$ to $(1,1,\tilde\gamma,\tilde\delta)$ with $\tilde\gamma,\,\tilde\delta\in\mathds{Q}$. This will turn out to be equivalent to $(1,1,\lambda\bar \gamma,\lambda\bar \delta)$ with $\lambda\in\mathds{Q}$ and $\bar\gamma,\bar \delta\in\mathds{Z}$. Secondly, we will reduce the case $(1,1,\lambda\bar\gamma,\lambda\bar\delta)$, if $\gamma<\delta$, to $(1,1,\lambda,\lambda\bar\gamma\bar\delta)$ and, if $\gamma>\delta$, to $(1,1,\lambda\bar\gamma\bar\delta,\lambda)$ . Thirdly, we will reduce $(1,1,\lambda,\lambda\bar\gamma\bar\delta)$ to $(1,1,1,\bar\gamma\bar\delta)$ and $(1,1,\lambda\bar\gamma\bar\delta,\lambda)$ to $(1,1,\bar\gamma\bar\delta,1)$. All these reductions will require different lines of reasoning.

Altogether, we can state that the problem of computing one-loop gauge threshold corrections in heterotic string theory compactified on an arbitrary (abelian, toroidal) orbifold (allowing for arbitrary discrete Wilson lines), is solved by evaluating integrals of the form
\begin{equation}
\label{master}
\begin{split}
	\sumset{\alpha}{\beta}{\delta}{\gamma}{I}(T,U)=\sumset{\alpha}{\beta}{\delta}{\gamma}{A}\,&\int_{R_{\Gamma'}}\hypmes{\tau}\,\sum_{A\in\sumset{\alpha}{\beta}{\delta}{\gamma}{\mathds{M}}}\ex{-2\pi \I \,T\det{A}}\,\frac{T_2}{\gamma \delta}\,\\&\times\exp\left[-\frac{\pi\,T_2}{\tau_2\,U_2}\left| \begin{pmatrix} 1 & U \end{pmatrix} A \begin{pmatrix} \tau\\1 \end{pmatrix}\right|^2\right]-\int_{R_\Gamma}\hypmes{\tau} \tau_2\,,
\end{split}
\end{equation}
with $\sumset{\alpha}{\beta}{\delta}{\gamma}{A}$ being a constant that forces \eqref{master} to be finite. This constant will be determined later. 

In the remainder of the work we will, therefore, be concerned in developing a method of how to deal with a generic integral of the type \eqref{master}.
\section{Computation of Threshold Corrections}\label{chapfour}
In this section we will compute the Integrals of the form \eqref{master}. We will begin this issue by computing the case $\alpha=\beta=\gamma=1$ and $\delta\in\mathds{Z}$. Afterwards, we will show that this result is already enough to compute threshold corrections, because it is possible to reduce the general case $\alpha$, $\beta$, $\gamma$, $\delta\in\mathds{Q}$ to the case $\alpha=\beta=\gamma=1$ and $\delta\in\mathds{Z}$. This will be done in three steps.

\subsection{Computation in case of $\alpha=\beta=\gamma=1$ and $\delta\in\mathds{Z}$}
Let us begin with the computation of $\sumset{1}{1}{\delta}{1}{I}$. In the case under consideration we can deduce from equations \eqref{39} to \eqref{40} that the symmetry  of $\tau_2 Z_{(1,\theta^{l_k})}^{\text{one-loop}}$ is given by $\Gamma'=\Gamma_0(\delta)$ and that the integral is given by
\begin{equation}
\label{102}
\begin{split}
	\sumset{1}{1}{\delta}{1}{I}(T,U)&=\sumset{1}{1}{\delta}{1}{A}\,\int_{R_{\Gamma_0(\delta)}}\hypmes{\tau}\,\sum_{A\in\sumset{1}{1}{\delta}{1}{\mathds{M}}}\ex{-2\pi i \,T\det{A}}\,\frac{T_2}{\delta}\,\\&\times\exp\left[-\frac{\pi\,T_2}{\tau_2\,U_2}\left| \begin{pmatrix} 1 & U \end{pmatrix} A \begin{pmatrix} \tau\\1 \end{pmatrix}\right|^2\right]-\int_{R_\Gamma}\hypmes{\tau} \tau_2\,.
\end{split}
\end{equation}
Furthermore, let us parameterize the matrices $A\in\sumset{1}{1}{\delta}{1}{\mathds{M}}$ as
\begin{equation}
\label{103}
	A=\begin{pmatrix} n_1 & l_1 \\ n_2 & \frac{1}{\delta}\,l_2 \end{pmatrix}\,.
\end{equation}
The main result of this subsection is given by
\begin{thm}
\label{main}
Let $\sumset{1}{1}{\delta}{1}{I}$ be given by \eqref{102}. Then it holds that
\begin{equation}
\label{main1}
	\sumset{1}{1}{\delta}{1}{I}(T,U)=-\sumset{1}{1}{\delta}{1}{A}\,\sum_{d|\delta}\sumset{1}{1}{\delta}{1}{C}(d) \left[\f{}{d}+\ln\left(\frac{8\,\pi\,\ex{1-\gamma_E}}{3\sqrt{3}}\right)\right]\,,
\end{equation}
where
\begin{equation}
\label{main2}
	\sumset{1}{1}{\delta}{1}{C}(d)=\prod_{p|d\,\wedge\,p\left|\frac{\delta}{d}\right.\wedge\,p\in\mathds{P}}\left(1-\frac{1}{p}\right)\quad \text{and} \quad \sumset{1}{1}{\delta}{1}{A}=\left(\sum_{d|\delta}\sumset{1}{1}{\delta}{1}{C}(d)\right)^{-1}\,.
\end{equation}
$\eta$ denotes Dedekind's eta function.
\end{thm}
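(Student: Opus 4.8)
The plan is to evaluate the integral \eqref{102} by the Rankin--Selberg unfolding method of Dixon--Kaplunovsky--Louis, adapted to the congruence subgroup $\Gamma_0(\delta)$. The first step is to decompose the summation set $\sumset{1}{1}{\delta}{1}{\mathds{M}}$ into orbits under the right action $A\mapsto A\,P$ of $\Gamma_0(\delta)$; by the discussion around \eqref{37} this action preserves $\sumset{1}{1}{\delta}{1}{\mathds{M}}$, so the sum splits orbit by orbit into the trivial orbit $\{A=0\}$, the \emph{degenerate} orbits with $\det A=0$, $A\neq0$, and the \emph{non-degenerate} orbits with $\det A\neq0$. The feature that is new relative to the $\delta=1$ case of \cite{Dixon_et_al2} is that, since $\Gamma_0(\delta)\subsetneq\SL{2}{Z}$, each $\SL{2}{Z}$-orbit of rank-one matrices breaks up into several $\Gamma_0(\delta)$-orbits. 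I would show that these sub-orbits are labelled by a divisor $d\mid\delta$ (the invariant being a greatest common divisor of suitable matrix entries with $\delta$), and that a local computation prime by prime attaches to the class $d$ exactly the multiplicative weight $\sumset{1}{1}{\delta}{1}{C}(d)$ of \eqref{main2}. This refinement is the source of the divisor sum in \eqref{main1}.

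Next I would dispose of the trivial orbit together with the regulator. The term $A=0$ contributes $\sumset{1}{1}{\delta}{1}{A}\,\tfrac{T_2}{\delta}\int_{R_{\Gamma_0(\delta)}}\tfrac{d^2\tau}{\tau_2^2}$, which diverges at the cusp with the same leading behaviour as the subtracted $\int_{R_\Gamma}\tfrac{d^2\tau}{\tau_2^2}\,\tau_2$. Cutting off both at $\tau_2<\Lambda$, using the coset description \eqref{funddom} of $R_{\Gamma_0(\delta)}$ as $[\Gamma:\Gamma_0(\delta)]$ copies of $R_\Gamma$ and the value $\tfrac\pi3$ of the regularized volume of $R_\Gamma$, and letting $\Lambda\to\infty$, the requirement that the total be finite fixes $\sumset{1}{1}{\delta}{1}{A}=\big(\sum_{d\mid\delta}\sumset{1}{1}{\delta}{1}{C}(d)\big)^{-1}$; this is also precisely what makes the prefactor $\sumset{1}{1}{\delta}{1}{A}\sum_d\sumset{1}{1}{\delta}{1}{C}(d)$ of the constant term in \eqref{main1} equal to one, as it must to reproduce the $\delta=1$ result.

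For a representative $A_0$ of a degenerate orbit the phase $e^{-2\pi\I T\det A_0}$ is trivial, and its stabilizer in $\Gamma_0(\delta)$ contains a width-$d$ translation subgroup, so that $R_{\Gamma_0(\delta)}$ unfolds onto a strip $\{0\le\tau_1<d,\ \tau_2>0\}$. The $\tau$-integral is then handled by the classical manipulation -- elementary $\tau_1$-integration, the substitution $\tau_2\mapsto1/\tau_2$, and recognition of the remaining lattice sum as a regularized real-analytic Eisenstein series whose $s\to1$ limit is given by the Kronecker limit formulae. This yields the $U$-dependent part $\ln\!\big(\tfrac{U_2}{d}|\eta(U/d)|^4\big)$ of $\f{}{d}$ -- the rescaling of the argument by $d$ coming from the strip width and the accompanying rescaling of the fixed-plane lattice -- together with the constant $\ln\!\big(\tfrac{8\pi\,e^{1-\gamma_E}}{3\sqrt3}\big)$, which collects $\Gamma'(1)=-\gamma_E$ from the Kronecker limit formula, the volume $\tfrac\pi3$ of $R_\Gamma$, and the corner geometry of the fundamental domain.

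The non-degenerate orbits are the main obstacle. Each unfolds $R_{\Gamma_0(\delta)}$ onto the full upper half-plane; after the Gaussian $\tau_1$-integration and the subsequent $\tau_2$-integration one is left with a world-sheet-instanton sum over the orbit representatives, weighted by the phases $e^{-2\pi\I T\det A}$ with the fractional determinants $\det A\in\tfrac1\delta\mathds{Z}$. One must show that, once summed over all orbits and all classes $d$, this sum resums into exactly the remaining $T$-dependent part $\ln\!\big(\tfrac{T_2}{d}|\eta(T/d)|^4\big)$ of $\f{}{d}$ for each $d$ and produces nothing further; keeping track of which $d$-class each fractional determinant belongs to, and of the interplay of the instanton phases across classes, is the delicate step, the remainder being the standard Rankin--Selberg machinery. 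Assembling the three families of orbits with the weights of the first step then gives \eqref{main1}.
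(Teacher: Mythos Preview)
Your route is genuinely different from the paper's. The paper does \emph{not} attempt a direct classification of $\Gamma_0(\delta)$-orbits of $\sumset{1}{1}{\delta}{1}{\mathds{M}}$ labelled by divisors. Instead it runs a recursive inclusion--exclusion on the single entry $l_2$: one first splits off the matrices with $\gcd(l_2,\delta)=1$, for which a direct orbit decomposition under $\Gamma_0(\delta)$ holds with representatives $\left(\begin{smallmatrix}k&j\\0&p/\delta\end{smallmatrix}\right)$ and $\left(\begin{smallmatrix}0&j\\0&p/\delta\end{smallmatrix}\right)$ (Lemmas~\ref{novanish}, \ref{vanish}); the remaining matrices, where some prime factor $\delta_i$ of $\delta$ divides $l_2$, are handled by the inclusion--exclusion Lemma~\ref{choice} together with the reduction Lemma~\ref{red}, which replaces $\delta$ by a proper divisor $\delta/\delta^{(l)}$ at the cost of the index factor $[\Gamma_0(\delta/\delta^{(l)}):\Gamma_0(\delta)]/\delta^{(l)}$. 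One then repeats the whole procedure at the smaller level, and iterates until $\delta'=1$. A combinatorial bookkeeping argument (the ``Claim'' in the proof and equations \eqref{main10}--\eqref{main16}) shows that the accumulated constants attached to each terminal divisor $d$ collapse to the closed form $\sumset{1}{1}{\delta}{1}{C}(d)$. The zero matrix is treated separately and shown, via the identity \eqref{main19}, to supply exactly the missing $T_2$-linear pieces; finiteness then fixes $\sumset{1}{1}{\delta}{1}{A}$. Crucially, every unfolding in the paper is to $\mathds{H}^+$ or to the standard width-one strip $|\tau_1|<\tfrac12$; the rescaling $U\mapsto U/d$ and $T\mapsto T/d$ in $\f{}{d}$ arises from the fractional entry $\delta'^{(l)}/\delta'$ in the representative matrix, not from cusp widths.

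Your cusp-based picture may well work --- the identity $\sumset{1}{1}{\delta}{1}{C}(d)=\phi(\gcd(d,\delta/d))/\gcd(d,\delta/d)$ is exactly the count of cusps of $\Gamma_0(\delta)$ with a given denominator, which is encouraging --- but as written several load-bearing steps are only asserted: the bijection between degenerate $\Gamma_0(\delta)$-orbits and divisor classes with the correct multiplicity, the mechanism by which unfolding at a width-$d$ cusp produces $\eta(U/d)$ (the widths of $\Gamma_0(\delta)$-cusps are $\delta/(d\,\gcd(d,\delta/d))$, not $d$), and the resummation in the non-degenerate sector that you yourself flag as delicate. One small correction: $\SL{2}{Z}$ does not preserve $\sumset{1}{1}{\delta}{1}{\mathds{M}}$ under right multiplication (cf.\ \eqref{37}--\eqref{40}), so speaking of $\SL{2}{Z}$-orbits breaking into $\Gamma_0(\delta)$-orbits is not quite the right framing.
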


\begin{remarks}$ $
\begin{itemize}
\item Note that in the sums above, $d$ runs over \emph{all} divisors of $\delta$, not only its prime factors. Or in other words, $d$ runs over all possible integer numbers which can be constructed from $\delta$ by omitting one of its prime factors. In particular, the set of all divisors $d$ always includes $\delta$ and $1$.
\item The product in the definition of $\sumset{1}{1}{\delta}{1}{C}(d)$ is understood to give $1$ if there exists no $p$ fulfilling the condition $p|d\,\wedge\,p\left|\frac{\delta}{d}\right.\wedge\,p\in\mathds{P}$.
\item Since $\ln\left(\frac{8\,\pi\,\ex{1-\gamma_E}}{3\sqrt{3}}\right)$ is independent of $d$, we could have also evaluated the sum over it:
\[
\sumset{1}{1}{\delta}{1}{A}\,\sum_{d|\delta}\sumset{1}{1}{\delta}{1}{C}(d) \ln\left(\frac{8\,\pi\,\ex{1-\gamma_E}}{3\sqrt{3}}\right) = \ln\left(\frac{8\,\pi\,\ex{1-\gamma_E}}{3\sqrt{3}}\right)
\]
by definition of $\sumset{1}{1}{\delta}{1}{A}$. However, the given form has a more apparent relation to the case $\delta=1$ and we chose it for later convenience.
\end{itemize}
\end{remarks}

The proof of the theorem requires several lemmas, which we have to develop first. They will provide the technical tools to compute \eqref{102}. The concepts we will be using can be summarised as
\begin{enumerate}
\item The domain of integration is the fundamental domain of a modular subgroup.
\item A matrix multiplication $A\cdot P$ corresponds to a modular transformation $P_\tau$ on $\tau$.
\item Imposing divisibility of $l_2$ by a prime factor of $\delta$ results in a reduction $\delta\mapsto\delta'$ with $\delta'\,|\,\delta$.
\item Imposing non-divisibility of the reduced $l_2$ by a prime factor of $\delta'$ results in an integral which can be computed directly using the reference integral.
\item There exists a reference integral to which everything can be traced back.
\end{enumerate}
These points should be clarified. 

The integrals we wish to compute look similar to the ones solved in \cite{Dixon_et_al2}, which correspond to the case $\alpha=\beta=\gamma=\delta=1$. However, several kinds of new problems arise.

Firstly, the domain of integration is no longer the fundamental domain of $\Gamma$ but the fundamental domain of a subgroup $\Gamma_0(\delta)$. This problem already occurs in models with vanishing discrete Wilson lines, where the fixed planes do not lie in a two-dimensional sub-torus of the torus lattice. Thresholds for such models were considered in \cite{Stieberger4}, where the cases $(\alpha,\beta,\gamma,\delta)=(1,1,2,2)$, $(1,1,1,3)$ and $(1,1,2,1)$ occur. There, the problem with the domain of integration was solved by explicitly constructing the necessary fundamental domains $R_{\Gamma_0(2)}$, $R_{\Gamma_0(3)}$. In more general cases (in particular if $\delta\,{\not\in}\,\mathds{P}$), the construction of $R_{\Gamma_0(\delta)}$ is a difficult and complicated task. Furthermore, knowledge of the value of $\delta$ is necessary and the computation of \eqref{102} would have to be performed individually for every model at hand. Therefore, we choose
to use a method which only applies the defining properties of a fundamental domain (point 1 above):
\begin{enumerate}
\item $\Gamma_0(\delta)\,R_{\Gamma_0(\delta)}=\mathds{H}^+\;$ and
\item $\forall\tau_1,\tau_2\in R_{\Gamma_0(\delta)}:\forall P\in\Gamma_0(\delta):P\,\tau_1\not=\tau_2$
\end{enumerate}
Secondly, we will use that multiplying $A$ in \eqref{102} by a matrix $P$ from the right can be reinterpreted as a modular transformation $P_\tau$ acting on $\tau$ (point 2 above). Using this fact, the set of matrices we sum over can be decomposed into orbits under a modular subgroup, giving contributions of the zero matrix, the set of matrices with non-zero determinant and the set of non-zero matrices with vanishing determinant.

Thirdly, in \cite{Dixon_et_al2} the summation has to run over \emph{all} integer matrices, while in our case the summation runs over \emph{special} matrices fulfilling certain divisibility conditions. This causes a naive application of the method in \cite{Dixon_et_al2} to fail for our case. However, in the present work we will develop several methods to express the sums over matrices with divisibility conditions as sums without divisibility conditions (points 3 and 4 above). Hence, we can trace back the most general case to the case solved in reference \cite{Dixon_et_al2} ($\alpha=\beta=\gamma=\delta=1$) which provides the mentioned reference integral (point 4 above) and will act as a building block in the final result. 

The first lemma that we will need to prove theorem \ref{main} provides the reference integral, in particular the case $\alpha=\beta=\gamma=\delta=1$. It's proof can be found in the literature and we won't repeat it here.
\begin{lem}
\label{refint}
Let $\sumset{1}{1}{1}{1}{I}$ be given by \eqref{102} with $\delta=1$. Then it holds
\begin{equation}
\begin{split}	
	\sumset{1}{1}{1}{1}{I}(T,U)&=f_1(T,U)+f_2(T,U)+f_3(T,U)=\\&=-\ln\left(\frac{8\pi\,\ex{1-\gamma_E}}{3\sqrt{3}}\right)-\ln\left(T_2\,\left|\eta\left(T\right)\right|^4\,U_2\,\left|\eta\left(U\right)\right|^4\right)\,,
\end{split}
\end{equation}
with
\begin{align}
	&f_1= \int_{R_{\Gamma}}\frac{d^2\tau}{{\tau_2}^2} T_2\,,\\
	&f_2= \sum_{\genfrac{}{}{0pt}{2}{0\leq j<k }{p\not = 0}} T_2\,\ex{-2\pi i \,T\cdot k p}\,\int_{-\infty}^{\infty}d\tau_1\int_0^{\infty}\frac{d\tau_2}{{\tau_2}^2}\exp\left(-\frac{\pi\, T_2}{\tau_2\,U_2}|k\tau+j+p\,U|^2\right)\quad \text{and}\\
	&f_3= \int_{-1/2}^{+1/2}d\tau_1\int_0^\infty\frac{d\tau_2}{{\tau_2}^2}\left[T_2\sideset{}{'}\sum_{j,p}\exp\left(-\frac{\pi\,T_2}{\tau_2\,U_2}|j+Up|^2\right)-\tau_2\theta_{R_\Gamma}(\tau)\right]\,,
\end{align}
for all $T\,,U\in\mathds{H}^+$.
\end{lem}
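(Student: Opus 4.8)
The statement is the classical Dixon--Kaplunovsky--Louis evaluation of the $\mathcal{N}=2$ one--loop threshold integral, and the plan is to reprove it by the orbit (``unfolding'') method of \cite{Dixon_et_al2}\cite{Kaplunovsky}. The starting point is the observation already recorded around \eqref{37}: right multiplication $A\mapsto A\,P$ of the summation matrices by $P\in\SL{2}{Z}$ is implemented on the integrand of \eqref{102} (with $\delta=1$) by the modular transformation $P_\tau$ acting on $\tau$, so together with the invariance of $\hypmes{\tau}$ the whole integrand is $\Gamma$--invariant. One may therefore decompose $\mathds{Z}^{2\times 2}$ into orbits of the right $\SL{2}{Z}$--action and treat each orbit separately. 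There are exactly three orbit types: the zero matrix, the matrices with $\det A\neq 0$, and the nonzero matrices with $\det A=0$; I claim these produce $f_1$, $f_2$ and $f_3$ respectively, and the subtracted $-\tau_2$ of \eqref{102} is carried along to render $f_3$ finite.

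For the zero matrix the summand reduces to $T_2$ and there is nothing to unfold, giving $f_1=\int_{R_\Gamma}\hypmes{\tau}\,T_2=\tfrac{\pi}{3}T_2$ by $\mathrm{vol}(R_\Gamma)=\tfrac{\pi}{3}$; this is precisely the exponential prefactor $\ex{-\pi T_2/3}$ of $|\eta(T)|^4$. For $\det A\neq 0$ the stabiliser of $A$ under right multiplication is trivial, so summing over an orbit and integrating over $R_\Gamma$ unfolds the domain to all of $\mathds{H}^+$; choosing Hermite--type representatives $A=\left(\begin{smallmatrix} k & j\\ 0 & p\end{smallmatrix}\right)$ with $k\ge 1$, $0\le j<k$ and $p\in\mathds{Z}\setminus\{0\}$ reproduces exactly the $\tau_1\in(-\infty,\infty)$, $\tau_2\in(0,\infty)$ integral of $f_2$. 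One then carries out in turn the Gaussian integral over $\tau_1$, the resulting Bessel--type integral over $\tau_2$, and the finite sum over $j$ (which supplies the $1/k$); all $U$--dependence cancels and the remaining sum over $k$ and $p$ assembles into minus the logarithm of $\prod_{n\ge 1}(1-\ex{2\pi i nT})$, i.e. it completes $-\ln|\eta(T)|^4$ together with $f_1$.

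The delicate piece is $f_3$, coming from the rank--one matrices. Here the stabiliser under right multiplication is $\Gamma_\infty$, the subgroup generated by $T:\tau\mapsto\tau+1$, so the orbit sum unfolds $R_\Gamma$ only to the strip $|\tau_1|\le\tfrac12$, $\tau_2>0$, and the representatives can be chosen so that $A\,(\tau,1)^{\mathrm{T}}$ is independent of $\tau$, producing the primed sum $\sideset{}{'}\sum_{j,p}$ of $f_3$. This integral diverges as $\tau_2\to\infty$, and the counterterm $-\tau_2\,\theta_{R_\Gamma}(\tau)$ -- inherited from the $-\tau_2$ regulator of \eqref{102}, which the other two orbits do not need -- is exactly what is required: a Poisson resummation of the $\tau_2\to\infty$ tail shows that $T_2\sideset{}{'}\sum_{j,p}\ex{-\pi T_2|j+Up|^2/(\tau_2 U_2)}$ grows linearly in $\tau_2$, with unit coefficient, and this growth cancels against the subtraction. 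Doing the trivial $\tau_1$--integral and then the $\tau_2$--integral via an integral representation $\int_0^\infty\frac{dt}{t}(\cdots)$ -- essentially the first Kronecker limit formula -- one meets a $1/(s-1)$ pole that the counterterm removes, leaving $\Gamma'(1)=-\gamma_E$ together with the term $\ln\!\bigl(\sqrt{U_2}\,|\eta(U)|^2\bigr)$; assembling these gives the transcendental constant $\ln\!\bigl(\tfrac{8\,\pi\,\ex{1-\gamma_E}}{3\sqrt{3}}\bigr)$, the full $-\ln\!\bigl(U_2\,|\eta(U)|^4\bigr)$ and the leftover $-\ln T_2$. Collecting $f_1+f_2+f_3$ and using $|\eta(T)|^4=\ex{-\pi T_2/3}\prod_{n\ge 1}|1-\ex{2\pi i nT}|^4$ to merge $f_1$ with $f_2$ then gives the closed form asserted in the lemma.

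I expect the main obstacle to be $f_3$: arranging the regularisation so that the $-\tau_2\,\theta_{R_\Gamma}$ subtraction lands precisely on the divergence, and pinning down the finite constant with the correct $\gamma_E$ and numerical factors (this is where the Kronecker limit formula and the $\zeta$--function bookkeeping must be carried out carefully). A secondary technical point is justifying the unfolding itself, i.e. the absolute convergence of the orbit sums away from the cusp, which is what legitimises the interchange of summation and integration and the replacement of $\int_{R_\Gamma}$ by the integral over the unfolded domain. A more structural alternative would sidestep the explicit $f_3$ integral: show that $\sumset{1}{1}{1}{1}{I}(T,U)+\ln\!\bigl(T_2\,|\eta(T)|^4\,U_2\,|\eta(U)|^4\bigr)$ is annihilated by $\partial_T\partial_{\bar T}$ and by $\partial_U\partial_{\bar U}$ and is invariant under $\Gamma$ in each modulus, hence constant, and then fix that constant from the $T_2\to\infty$ asymptotics -- but even then one honest asymptotic computation is needed to recover the $\ex{1-\gamma_E}$ factor.
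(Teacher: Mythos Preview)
Your proposal is correct and follows exactly the orbit (unfolding) method of \cite{Dixon_et_al2}, which is precisely what the paper invokes: its proof consists of the single sentence ``The proof of this statement can be found in reference \cite{Dixon_et_al2}.'' Your sketch of the three contributions $f_1,f_2,f_3$ also matches the paper's own summary in the remarks immediately following the lemma, so there is nothing to add.
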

\begin{proof} The proof of this statement can be found in reference \cite{Dixon_et_al2}.
\end{proof}

\begin{remarks}$ $
\begin{itemize}
\item The $f_i$ decode the different contributions of orbits to $\sumset{1}{1}{1}{1}{I}(T,U)$. 

\item The first integral is given by the contribution of the zero matrix, the second by all matrices $A$ with non-vanishing determinant, which can be written as
\begin{equation}
	A\in \begin{pmatrix} k & j \\ 0 & p \end{pmatrix} \cdot \Gamma\,,
\end{equation} 
and the third integral by all non-zero matrices $A'$ with vanishing determinant, which can be written as
\begin{equation}
	A'\in \begin{pmatrix} 0 & j \\ 0 & p \end{pmatrix} \cdot \Gamma \,.
\end{equation}
\item The results for $f_1$, $f_2$ and $f_3$ read \cite{Dixon_et_al2}
\begin{align}
&f_1=\frac{\pi}{3}T_2=-4\text{Re}\,\ln\ex{2\pi i\,T\,\frac{1}{24}}\,, \\
&f_2= -4\text{Re} \ln\prod_{n=1}^{\infty} \left(1-{q_T}^n\right)\,,\quad \text{where} \quad q_T=\ex{2\pi i\,T}\,, \quad \text{and}\\
&f_3=-4\text{Re}\ln \eta(U)-\ln(T_2\,U_2)-\ln\left(\frac{8\pi\,\ex{1-\gamma_E}}{3\sqrt{3}}\right)\,.
\end{align}
\item Note that
\begin{equation}
	\sumset{1}{1}{\delta}{1}{I}(T,U)=-\sumset{1}{1}{\delta}{1}{A}\,\sum_{d|\delta}\sumset{1}{1}{\delta}{1}{C}(d)\,\sumset{1}{1}{1}{1}{I}(T/d,U/d) \;.
\end{equation}
\end{itemize}
\end{remarks}

Our method will rely on the following observation: If we look at a matrix
\[
\begin{pmatrix} n_1 & l_1 \\ n_2 & \frac{1}{\delta}\,l_2 \end{pmatrix}
\]
and run through all $l_2\in\mathds{Z}$ then, inevitably, we will hit all integer multiples of $\delta$: $l_2 = n\,\delta$ with a $n\in\mathds{Z}$. The contribution of all these matrices will look like the case $\delta=1$, which is our reference integral. 

However, what about the missing contributions? To incorporate them, it will be useful to look at all integer multiples of divisors of $\delta$. To see this, let $d|\delta$ and have a look at all matrices where $l_2 = n\,d$ with a $n\in\mathds{Z}$. Then, these matrices look like
\[
\begin{pmatrix} n_1 & l_1 \\ n_2 & \frac{d}{\delta}\,l_2 \end{pmatrix}=\begin{pmatrix} n_1 & l_1 \\ n_2 & \frac{1}{\delta/d}\,l_2 \end{pmatrix}
\]
so they look like the contribution of the smaller integer number $\delta/d$ instead of $\delta$. Hence, all the divisors of $\delta$ fractionise the matrix sum into matrix sums corresponding to smaller integer numbers, which we will call the \emph{reductions} of $\delta$. Of course, the method can be reapplied to those reductions again yielding smaller reductions. After finitely many steps we will get several contributions which look like the $\delta=1$ case.

Still, we are missing those integers $l_2$ which are not divisible by any prime factor of $\delta$. However, we can use an elementary trick to incorporate these in a similar manner as above. If the partition into prime factors of $\delta$ is given by
\begin{equation}
\label{primedecomp}
	\delta=\delta_1^{x_1}\cdot \delta_2^{x_2}\cdot\ldots\cdot\delta_n^{x_n}
\end{equation}
then the contribution of matrices not fulfilling any divisibility conditions is
\begin{equation}
\delta_1{\not|}\,l_2\wedge\ldots\wedge \delta_n{\not|}\,l_2
\;.
\end{equation}
However, we can write
\begin{equation}\label{idea123}
\delta_1\not|\,l_2\wedge\ldots\wedge \delta_n\not|\,l_2
=
(\text{all}\;l_2) \;\setminus\; \neg (\delta_1\not|\,l_2\wedge\ldots\wedge \delta_n\not|\,l_2)
=
(\text{all}\;l_2)
\;\setminus\;
\delta_1|\,l_2\vee\ldots\vee \delta_n|\,l_2
\;,
\end{equation}
where we used $\neg(A \wedge B) = (\neg A)\vee (\neg B)$. These contributions can then again be written in terms of contributions $l_2=n\,d$ with $d|\delta$.

Now, let us convert these ideas into practicable lemmas. Lemma \ref{red} will treat in detail the mentioned reduction of contributions $l_2=n\,d$ with $d|\delta$. Afterwards, we will formulate lemma \ref{choice} which establishes the precise connection of idea \eqref{idea123} to the sums we wish to compute. Finally, lemmas \ref{novanish} and \ref{vanish} show how to reformulate the contributions of the different orbits as sums of reductions of $\delta$.

We start with
\begin{lemdef}
\label{red}
Let
\begin{equation}
\sumset{1}{1}{\delta}{1}{\mathcal{I}}
:=
\int_{R_{\Gamma_0(\delta)}}\hypmes{\tau}\,\sum_{A\in\sumset{1}{1}{\delta}{1}{\mathds{M}}}\ex{-2\pi i \,T\det{A}}\,\frac{T_2}{\delta}\,\exp\left[-\frac{\pi\,T_2}{\tau_2\,U_2}\left| \begin{pmatrix} 1 & U \end{pmatrix} A \begin{pmatrix} \tau\\1 \end{pmatrix}\right|^2\right]
\end{equation}
and
\[
\sumset{1}{1}{d|\delta}{1}{\mathds{M}}:=\left\{\left.\begin{pmatrix} n_1 & l_1 \\ n_2 & \frac{1}{\delta}\,l_2 \end{pmatrix}\right|l_2=d\,l'_2\;\, n_1,\, n_2,\, l_1,\, l'_2\in\mathds{Z}\,\right\},
\]
where $d\,|\,\delta$. If we denote the contribution of all matrices $A\in\sumset{1}{1}{d|\delta}{1}{\mathds{M}}$ by $\sumset{1}{1}{d|\delta}{1}{\mathcal{I}}(T,U)$ it holds that
\begin{equation}
\label{red0}
	\sumset{1}{1}{d|\delta}{1}{\mathcal{I}}(T,U)=\frac{\left[\Gamma_0\left(\frac{\delta}{d}\right):\Gamma_0(\delta)\right]}{d}\,\sumset{1}{1}{\frac{\delta}{d}}{1}{\mathcal{I}}(T,U)
	\;,
\end{equation}	
with $\left[\Gamma_0\left(\frac{\delta}{d}\right):\Gamma_0(\delta)\right]$ being the index of $\Gamma_0(\delta)$ in $\Gamma_0\left(\frac{\delta}{d}\right)$.
\end{lemdef}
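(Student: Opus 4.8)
The plan is to reduce \eqref{red0} to two elementary facts: that $\sumset{1}{1}{d|\delta}{1}{\mathds{M}}$ is, as a set of matrices, nothing but $\sumset{1}{1}{\frac{\delta}{d}}{1}{\mathds{M}}$, and that the integrand attached to the latter is invariant under $\Gamma_0(\delta/d)$, so that enlarging the integration domain from $R_{\Gamma_0(\delta)}$ to $R_{\Gamma_0(\delta/d)}$ costs exactly the index $[\Gamma_0(\delta/d):\Gamma_0(\delta)]$.

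First I would make the change of summation variable $l_2 = d\,l'_2$. The lower-right entry of a matrix in $\sumset{1}{1}{d|\delta}{1}{\mathds{M}}$ then reads $\frac{d}{\delta}\,l'_2 = \frac{1}{\delta/d}\,l'_2$, and as $l'_2$ runs over $\mathds{Z}$ this reproduces precisely the lower-right entry of a generic element of $\sumset{1}{1}{\frac{\delta}{d}}{1}{\mathds{M}}$, while the other three entries are unconstrained integers in both cases. Hence the matrix sum defining $\sumset{1}{1}{d|\delta}{1}{\mathcal{I}}$ equals the matrix sum over $\sumset{1}{1}{\frac{\delta}{d}}{1}{\mathds{M}}$, and the integrand of $\sumset{1}{1}{d|\delta}{1}{\mathcal{I}}$ differs from that of $\sumset{1}{1}{\frac{\delta}{d}}{1}{\mathcal{I}}$ only by the prefactor $\frac{T_2}{\delta}$ in place of $\frac{T_2}{\delta/d}$ --- i.e.\ by the global constant $\frac{\delta/d}{\delta}=1/d$ --- and by the domain of integration, $R_{\Gamma_0(\delta)}$ rather than $R_{\Gamma_0(\delta/d)}$.

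Next I would deal with the change of domain. By \eqref{40} applied with $(\alpha,\beta,\gamma,\delta)=(1,1,1,\delta/d)$, the function $\tau_2\,Z_{(1,\theta^{l_k})}^{1-\text{loop}}$ built from $\sumset{1}{1}{\frac{\delta}{d}}{1}{\mathds{M}}$ with its natural prefactor $\frac{T_2}{\delta/d}$ is invariant under $\Gamma_0(\delta/d)$; combined with the invariance of $\hypmes{\tau}$ under Moebius transformations, the whole integrand of $\sumset{1}{1}{\frac{\delta}{d}}{1}{\mathcal{I}}$ is $\Gamma_0(\delta/d)$-invariant. Since $\frac{\delta}{d}\mid\delta$ we have $\Gamma_0(\delta)\subseteq\Gamma_0(\delta/d)$, so a coset decomposition $\Gamma_0(\delta/d)=\bigcup_{l=1}^{[\Gamma_0(\delta/d):\Gamma_0(\delta)]}\Gamma_0(\delta)\,M_l$ gives, exactly as in \eqref{funddom}, $R_{\Gamma_0(\delta)}=\bigcup_{l} M_l\,R_{\Gamma_0(\delta/d)}$ up to sets of measure zero, with $M_l\in\Gamma_0(\delta/d)$. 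Integrating the $\Gamma_0(\delta/d)$-invariant integrand over each tile $M_l\,R_{\Gamma_0(\delta/d)}$ reproduces the integral over $R_{\Gamma_0(\delta/d)}$, so summing over the $[\Gamma_0(\delta/d):\Gamma_0(\delta)]$ cosets and reinstating the factor $1/d$ yields precisely \eqref{red0}.

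The only points needing care are bookkeeping: that it is the fundamental domain of the \emph{smaller} group $\Gamma_0(\delta)$ which tiles into translates of the fundamental domain of the \emph{larger} group $\Gamma_0(\delta/d)$ (the orientation of \eqref{funddom}), that the symmetry statement \eqref{40} is being invoked with $\delta$ replaced by the integer $\delta/d$, and that the three discrepancies isolated above (matrix set, prefactor, domain) are combined with the correct multiplicities. None of these is a genuine obstacle; the lemma is essentially a reindexing argument glued to the modular invariance already established in Section~\ref{genset}.
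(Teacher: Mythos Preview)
Your proof is correct and follows essentially the same approach as the paper's own proof: both identify $\sumset{1}{1}{d|\delta}{1}{\mathds{M}}$ with $\sumset{1}{1}{\delta/d}{1}{\mathds{M}}$ via the substitution $l_2=d\,l'_2$, extract the factor $1/d$ from the prefactor, and then use the $\Gamma_0(\delta/d)$-invariance of the resulting integrand together with the tiling $R_{\Gamma_0(\delta)}=\bigcup_k M_k\,R_{\Gamma_0(\delta/d)}$ to produce the index factor. Your version is, if anything, more explicit about the bookkeeping than the paper's brief argument.
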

\begin{proof}
To prove this lemma, look at the contribution of all matrices with $l_2=d\,l'_2$ and $d\,|\,\delta$ to \eqref{102}. It is
\begin{equation}
\label{red1}
	A'=\begin{pmatrix} n_1 & l_1 \\ n_2 & \frac{d}{\delta} l'_2 \end{pmatrix}=\begin{pmatrix} n_1 & l_1 \\ n_2 & \frac{1}{\frac{\delta}{d}} l'_2 \end{pmatrix}\,.
\end{equation}
Since $\Gamma_0(\delta)\subsetneq \Gamma_0\left(\frac{\delta}{d}\right)$ it follows that $R_{\Gamma_0(\delta)}\supsetneq R_{\Gamma_0\left(\frac{\delta}{d}\right)}$ with
\begin{equation}
\label{red2}
	R_{\Gamma_0\left(\delta\right)}=\bigcup_{k=1}^{\left[\Gamma_0\left(\frac{\delta}{d}\right):\Gamma_0(\delta)\right]} M_k\,R_{\Gamma_0(\frac{\delta}{d})}
\end{equation}
and $M_k\in\Gamma_0\left(\frac{\delta}{d}\right)$. Inserting \eqref{red1} into \eqref{102} yields an expression, which is invariant under $\Gamma_0\left(\frac{\delta}{d}\right)$. Using this fact and factoring out $1/d$ of this expression yields equation \eqref{red0} and the lemma is proven.
\end{proof}
Now we make the idea \eqref{idea123} more precise and applicable:
\begin{lemdef}
\label{choice}
	Let $\delta={\delta_1}^{x_1}\cdot {\delta_2}^{x_2}\cdot\ldots\cdot{\delta_n}^{x_n}$ and $\delta^{(l)}:=\delta_{i_1}\cdot\delta_{i_2}\cdot\ldots\cdot\delta_{i_l}\,|\,\delta$ with $\delta_i\in\mathds{P}$, be the product of a choice of $l$ prime factors of $\delta$. Moreover, let $\mathcal{C}_l(\delta)$ be the set of all possible products of choices of $l$ prime factors of $\delta$ and $f$ some function on $\mathds{Z}$. We define $\mathcal{C}_0(\delta):=\{1\}$ and $\mathcal{C}_l(1):=\{1\}$ for all $l$ and $\delta$. Then it holds, at least as a formal sum, that
	\begin{equation}
	\begin{split}
	\label{105}
		\sum_{\delta_1|k\,\vee\ldots\vee\,\delta_n|k}f(k)=\sum_{l=1}^{n}(-1)^{l+1}\,\sum_{\delta^{(l)}\in\,\mathcal{C}_l(\delta)}\sum_{k\in\mathds{Z}} f(\delta^{(l)}\cdot k)\;.
\end{split}	
\end{equation}
\end{lemdef}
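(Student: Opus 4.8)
The plan is to recognise \eqref{105} as an instance of the inclusion--exclusion principle (the combinatorial sieve) applied to the divisibility events $\{k:\delta_i\mid k\}$, and to verify it by first establishing a pointwise identity of indicator functions and then interchanging a \emph{finite} sum over subsets with the sum over $\mathds{Z}$. Since the lemma is only asserted to hold "at least as a formal sum", no question of convergence enters: rearranging a finite linear combination of (formal) sums indexed by $\mathds{Z}$ is always legitimate.

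First I would record the elementary observations on which the reindexing rests. Because $\delta_1,\dots,\delta_n$ are \emph{distinct} primes, any subset of them consists of pairwise coprime numbers, so for $S\subseteq\{1,\dots,n\}$ an integer $m$ is divisible by every $\delta_i$ with $i\in S$ if and only if it is divisible by their product; and by unique factorisation distinct $l$-element subsets produce distinct products. Hence $\mathcal{C}_l(\delta)$ is in natural bijection with the family of $l$-element subsets $S\subseteq\{1,\dots,n\}$, and for the product $\delta^{(l)}_S:=\prod_{i\in S}\delta_i$ one has, for any $f$ on $\mathds{Z}$,
\[
\sum_{k\in\mathds{Z}} f\bigl(\delta^{(l)}_S\cdot k\bigr)=\sum_{\substack{m\in\mathds{Z}\\ \delta^{(l)}_S\mid m}} f(m)=\sum_{m\in\mathds{Z}}\Bigl(\prod_{i\in S}\mathds{1}[\delta_i\mid m]\Bigr)f(m),
\]
the first equality because $k\mapsto\delta^{(l)}_S k$ is a bijection of $\mathds{Z}$ onto the multiples of $\delta^{(l)}_S$, the second by the coprimality just noted.

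Next I would invoke the inclusion--exclusion identity for the indicator of a union: writing $A_i=\{m:\delta_i\mid m\}$ and expanding $\mathds{1}[m\in\bigcup_i A_i]=1-\prod_{i=1}^n(1-\mathds{1}[m\in A_i])$ gives
\[
\mathds{1}\bigl[m\in A_1\cup\cdots\cup A_n\bigr]=\sum_{l=1}^{n}(-1)^{l+1}\sum_{\substack{S\subseteq\{1,\dots,n\}\\|S|=l}}\ \prod_{i\in S}\mathds{1}[\delta_i\mid m].
\]
Multiplying by $f(m)$, summing over $m\in\mathds{Z}$, interchanging the finite sums over $l$ and over $S$ with the sum over $\mathds{Z}$, and finally applying the displayed reindexing together with the bijection $\mathcal{C}_l(\delta)\leftrightarrow\{S:|S|=l\}$ converts the right-hand side into $\sum_{l=1}^n(-1)^{l+1}\sum_{\delta^{(l)}\in\mathcal{C}_l(\delta)}\sum_{k\in\mathds{Z}}f(\delta^{(l)}\cdot k)$, while the left-hand side is by definition $\sum_{\delta_1\mid k\,\vee\cdots\vee\,\delta_n\mid k}f(k)$. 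The degenerate conventions $\mathcal{C}_0(\delta)=\{1\}$ and $\mathcal{C}_l(1)=\{1\}$ are consistent with the formula (for $\delta=1$ the left side is empty and the right side is not used; the $l=0$ term does not appear). This proves \eqref{105}.

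The only delicate point — which I would flag as the (mild) main obstacle — is precisely the interchange of the outer finite sums with the sum over $\mathds{Z}$ when $f$ is not summable; this is exactly why the statement is phrased as a formal identity, valid in the module of formal sums indexed by $\mathds{Z}$ (or, equivalently, under any symmetric partial-summation scheme). In the application to \eqref{102} the function $f$ will come with the exponentially decaying Gaussian factors of the partition-function integrand, so there the formal identity is in fact an honest one and may be used without reservation.
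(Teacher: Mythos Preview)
Your proof is correct and is essentially the same inclusion--exclusion argument as the paper's, only packaged differently: you invoke the algebraic identity $\mathds{1}[m\in\bigcup_i A_i]=1-\prod_i(1-\mathds{1}[m\in A_i])$ and sum, whereas the paper fixes an integer $k$ divisible by exactly $l$ of the $\delta_i$ and verifies directly that it is counted $\sum_{l'=1}^{l}(-1)^{l'+1}\binom{l}{l'}=1$ times on the right-hand side. Your explicit remarks that distinct subsets of primes yield distinct products (so $\mathcal{C}_l(\delta)$ is in bijection with $l$-element subsets) and that $k\mapsto\delta^{(l)}k$ bijects $\mathds{Z}$ onto the multiples of $\delta^{(l)}$ make the reindexing cleaner than in the paper, but the underlying idea is identical.
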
 
\begin{proof}
	We want to show that every number $k$ which fulfills $\delta_1|k\vee\ldots\vee\delta_n|k$ has been counted once and only once in \eqref{105}. To that end let us consider an arbitrary choice of prime factors $\delta^{(l)}=\delta_{i_1}\cdot\delta_{i_2}\cdot\ldots\cdot\delta_{i_l}$. 
How many times has a number $k$ which is divisible by $\delta^{(l)}$, i.e.\ with prime factorization $k={\delta_{i_1}}^{j_1+1}\cdot{\delta_{i_2}}^{j_2+1}\cdot\ldots\cdot{\delta_{i_l}}^{j_l+1}\cdot k'$,  where all $\delta_i\not| \, k'$, been counted by the right-hand sight of equation $\eqref{105}$? 
It has been counted $\binom{l}{1}$ times by summing over all multiples of $\delta_{i_1}$,all multiples of $\delta_{i_2}$,... and all multiples of $\delta_{i_l}$. By subtracting all multiples of $\delta_{i_1}\delta_{i_2}$, all multiples of $\delta_{i_1}\delta_{i_3}$,... and all multiples of $\delta_{i_{n-1}}\delta_{i_n}$ it has been counted $-\binom{l}{2}$ times. In this manner $k$ has been counted $(-1)^{l'+1}\,\binom{l}{l'}$ times by a choice of $l'$ prime factors of $\delta^{(l)}$. Altogether $k$ has been counted
\begin{equation}
\label{106}
	\binom{l}{1}-\binom{l}{2}+\ldots+(-1)^{l+1} \binom{l}{l}=1
\end{equation}
times. Since \eqref{106} holds for arbitrary choices $\delta^{(l)}$, lemma \ref{choice} is proven. 
\end{proof}

To clarify this lemma we consider an example. Let us assume that $\delta=2\cdot 3 \cdot 5^2=150$ and, therefore, we want to compute a sum of some function $f(k)$ over $2|k\vee 3|k\vee 5|k$. Lemma \ref{choice} states that this sum is given by
\begin{equation}
\begin{split}
	\sum_{2|k\,\vee\, 3|k\,\vee\, 5|k}f(k)=&\sum_{k}f(2k) + \sum_k f(3k)+\sum_k f(5k)-\\-&\sum_k f(6k)-\sum_k f(10k)-\sum_k f(15k)+\\+&\sum_k f(30k)\,.
\end{split}
\end{equation}
It should be mentioned that $\mathcal{C}_l(\delta)$ only depends on the prime numbers which divide $\delta$. As an example consider $30=2\cdot 3\cdot 5$ and $150 = 2 \cdot3 \cdot 5^2$, which results in
\[
\begin{split}
	&\mathcal{C}_0(30)=\mathcal{C}_0(150)=\{1\} \\
	&\mathcal{C}_1(30)=\mathcal{C}_1(150)=\{2,3,5\} \\
	&\mathcal{C}_2(30)=\mathcal{C}_2(150)=\{6,10,15\} \\
	&\mathcal{C}_3(30)=\mathcal{C}_3(150)=\{30\}
\end{split}
\] 
To give a procedure to sum over all matrices which satisfy $\delta_1{\not{|}}\,l_2\wedge\ldots\wedge \delta_n{\not|}\,l_2$, we formulate two lemmas. The first one enables us to sum over all matrices with non-vanishing determinant and the second one deals with all non-zero matrices with vanishing determinant. 
\begin{lem}
\label{novanish}
Let $\delta={\delta_1}^{x_1}\cdot {\delta_2}^{x_2}\cdot\ldots\cdot{\delta_n}^{x_n}$ and $\delta^{(l)}=\delta_{i_1}\cdot\delta_{i_2}\cdot\ldots\cdot\delta_{i_l}$, with $\delta_i\in\mathds{P}$, be the product of a choice of $l$ prime factors of $\delta$. Moreover, let $\mathcal{C}_l(\delta)$ be the set of all possible products of choices of $l$ prime factors of $\delta$ and $f$ some function on $\mathds{Z}$. Then it holds, at least as a formal sum, that
\begin{equation}
\label{107}
\begin{split}
	\sum_{\genfrac{}{}{0pt}{2}{A\in\sumset{1}{1}{\delta}{1}{\mathds{M}}\wedge \det A \not =0}{\delta_1\not{\,|\,}l_2\wedge\ldots\wedge \delta_n\not{\,|\,}\,l_2}}f(A)=&\sum_{l=0}^{n}(-1)^{l}\,\sum_{\delta^{(l)}\in\,\mathcal{C}_l(\delta)}\sum_{\genfrac{}{}{0pt}{2}{k,j,p\in\mathds{Z}}{k>j\ge0\wedge p\not = 0}}\sum_{P\in\Gamma_0(\delta)}f\left( \begin{pmatrix} k & j \\ 0 & \frac{\delta^{(l)}}{\delta}\,p \end{pmatrix}\cdot P \right)\,.
\end{split}
\end{equation}
\end{lem}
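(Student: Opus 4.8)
The plan is to establish \eqref{107} by combining two ingredients: (i) a decomposition of the relevant set of matrices into orbits under right multiplication by $\Gamma_0(\delta)$, with an explicit system of upper–triangular representatives, and (ii) the inclusion–exclusion identity of Lemma and Definition \ref{choice} applied to the divisibility constraints. Point (ii) is exactly the device already used around \eqref{idea123}, so the substance lies in (i), which generalizes to level $\Gamma_0(\delta)$ the normal–form decomposition underlying the treatment of the case $\delta=1$ in \cite{Dixon_et_al2}.

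Concretely, write $\mathcal{S}$ for the set of $A=\begin{pmatrix} n_1 & l_1\\ n_2 & \frac{1}{\delta}l_2\end{pmatrix}\in\sumset{1}{1}{\delta}{1}{\mathds{M}}$ with $\det A\neq 0$ and $\delta_i\nmid l_2$ for all $i$. First I would check that $\mathcal{S}$ is stable under $A\mapsto A\,P$ for $P=\begin{pmatrix} a & b\\ c & d\end{pmatrix}\in\Gamma_0(\delta)$: the only nonobvious point is that the new value of $l_2$, namely $\delta n_2 b+l_2 d$, is still prime to every $\delta_i$, and this follows because $\delta_i\mid c$ (as $\delta\mid c$) together with $ad-bc=1$ forces $\delta_i\nmid d$. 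The central claim is then that every $A\in\mathcal{S}$ can be written in exactly one way as
\[
	A=\begin{pmatrix} k & j\\ 0 & \frac{1}{\delta}p\end{pmatrix}\cdot P\,,\qquad k>j\ge 0,\quad p\neq 0,\quad \delta_i\nmid p\ \ \forall i,\quad P\in\Gamma_0(\delta)\,.
\]
For existence I would first clear the lower–left entry: setting $g:=\gcd(n_2,l_2)$, $n_2=g\bar n$, $l_2=g\bar l$, one needs $a,c_0\in\mathds{Z}$ (with $c=\delta c_0$) solving $\bar n\,a=-\bar l\,c_0$ and extending to an element of $\SL{2}{Z}$; this is possible precisely when $\gcd(\bar l,\delta)=1$, which is the arithmetic content of the hypothesis $\delta_i\nmid l_2$. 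The lower–right entry then becomes $\pm g/\delta$, so $p=\pm\gcd(n_2,l_2)$ is automatically prime to every $\delta_i$; using $\begin{pmatrix} 1 & m\\ 0 & 1\end{pmatrix}\in\Gamma_0(\delta)$ to reduce the upper–right entry modulo $k$ into $[0,k)$, and $-\mathds{1}\in\Gamma_0(\delta)$ to arrange $k>0$, brings $A$ to the stated form. Uniqueness is the routine comparison of lower rows and upper–right entries, and each representative has trivial $\Gamma_0(\delta)$–stabilizer since it is invertible over $\mathds{Q}$; hence $\mathcal{S}$ is the disjoint union of the orbits $A_0\,\Gamma_0(\delta)$ and
\[
	\sum_{A\in\mathcal{S}}f(A)=\sum_{\substack{k>j\ge 0\\ p\neq 0,\ \delta_i\nmid p\ \forall i}}\ \sum_{P\in\Gamma_0(\delta)}f\!\left(\begin{pmatrix} k & j\\ 0 & \frac{1}{\delta}p\end{pmatrix}P\right)\,.
\]

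It then remains to remove the condition $\delta_i\nmid p$. Writing it as ``$p\in\mathds{Z}\setminus\{0\}$'' minus ``$\delta_1\mid p\vee\dots\vee\delta_n\mid p$'' and applying \eqref{105} to the latter (with the inner sums over $P$ and over $k,j$ playing the role of the function of $p$) yields, after collecting the $l=0$ term $\delta^{(0)}=1$, the expression $\sum_{l=0}^{n}(-1)^{l}\sum_{\delta^{(l)}\in\mathcal{C}_l(\delta)}\sum_{p\in\mathds{Z}\setminus\{0\}}(\cdots)$ with $p$ replaced by $\delta^{(l)}p$; the exclusion of $p=0$ is harmless on both sides since $\delta_i\mid 0$. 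Under $p\mapsto\delta^{(l)}p$ the entry $\frac{1}{\delta}p$ becomes $\frac{\delta^{(l)}}{\delta}p$, which is exactly the right–hand side of \eqref{107}.

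The main obstacle is the orbit decomposition, and inside it the elementary but crucial observation that $\delta_i\nmid l_2$ for every prime factor $\delta_i$ of $\delta$ is precisely the condition under which the lower–left entry can be annihilated by an element of $\Gamma_0(\delta)$ rather than of the full $\SL{2}{Z}$; the remaining care concerns signs and the $\SL{2}{Z}$ versus $\PSL{2}{Z}$ ambiguity, which is immaterial here because the summand is ultimately a partition function and so depends only on $\pm A$.
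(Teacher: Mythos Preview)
Your proposal is correct and follows essentially the same approach as the paper: both establish the orbit decomposition of $\mathcal{S}$ under right multiplication by $\Gamma_0(\delta)$ with upper--triangular representatives $\begin{pmatrix} k & j\\ 0 & p/\delta\end{pmatrix}$, $k>j\ge 0$, and then apply the inclusion--exclusion identity \eqref{105} to remove the constraint $\delta_i\nmid p$. The paper carries out the orbit decomposition by explicitly solving the Diophantine system $A\cdot P^{-1}=A_0$ (deriving $c,d$ from $n_2 d-l_2 c=0$, then $a,b$ via B\'ezout, and pinning down $\xi$ with a ceiling function), whereas you phrase it more structurally; in particular, the paper spells out the uniqueness argument you label ``routine,'' but the arithmetic core---that $\gcd(l_2,\delta)=1$ is exactly what permits clearing the lower--left entry within $\Gamma_0(\delta)$---is identical.
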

\begin{proof}
To prove this lemma, we use an immediate corollary of lemma \ref{choice}. With the same assumptions it is\footnote{Using $\sum\limits_{\delta_1\not{\,|\,}p\wedge\ldots\wedge \delta_n\not{\,|\,}\,p}\,\ldots=\sum\limits_{p\in\mathds{Z}}\,\ldots - \sum\limits_{\delta_1{\,|\,}p\vee\ldots\vee\delta_n{\,|\,}\,p}$\,\ldots}
\begin{equation}
	\label{108}
		\sum_{\genfrac{}{}{0pt}{2}{\delta_1\not{\,|\,}p\wedge\ldots\wedge \delta_n\not{\,|\,}\,p}{p\not =0}}F(p)=\sum_{l=0}^{n}(-1)^{l}\,\sum_{\delta^{(l)}\in\,\mathcal{C}_l(\delta)}\sum_{\genfrac{}{}{0pt}{2}{p\in\mathds{Z}}{p\not= 0}} F(\delta^{(l)}\cdot p)\,.
\end{equation}
Let us define
\begin{equation}
\label{109}
	F(p):=\sum_{\genfrac{}{}{0pt}{2}{k,j\in\mathds{Z}}{k>j\ge 0}}\sum_{P\in\Gamma_0(\delta)}f\left( \begin{pmatrix} k & j \\ 0 & \frac{p}{\delta} \end{pmatrix}\cdot P\right)\,.
\end{equation}
and parameterise $P$ according to
\begin{equation}
\label{110}
	P=\begin{pmatrix} a & b \\ \delta c & d \end{pmatrix}\,.
\end{equation}
Then \eqref{108} implies 
\begin{equation}
\label{111}
	\sum_{\genfrac{}{}{0pt}{2}{\delta_1\not{\,|\,}p\wedge\ldots\wedge \delta_n\not{\,|\,}\,p}{p\not =0}}F(p)=\sum_{l=0}^{n}(-1)^{l}\,\sum_{\delta^{(l)}\in\,\mathcal{C}_l(\delta)}\sum_{\genfrac{}{}{0pt}{2}{k,j,p\in\mathds{Z}}{k>j\ge0\wedge p\not = 0}}\sum_{P\in\Gamma_0(\delta)}f\left( \begin{pmatrix} k & j \\ 0 & \frac{\delta^{(l)}}{\delta}\,p \end{pmatrix}\cdot P \right)
\end{equation}
Thus, we have to show
\begin{equation}
\label{112}
	\sum_{\genfrac{}{}{0pt}{2}{\delta_1\not{\,|\,}p\wedge\ldots\wedge \delta_n\not{\,|\,}\,p}{p\not =0}}F(p)=\sum_{\genfrac{}{}{0pt}{2}{A\in\sumset{1}{1}{\delta}{1}{\mathds{M}}\wedge \det A \not=0}{\delta_1\not{\,|\,}l_2\wedge\ldots\wedge \delta_n\not{\,|\,}\,l_2}}f(A)\,.
\end{equation}
This is equivalent to prove that every matrix $A\in\sumset{1}{1}{\delta}{1}{\mathds{M}}$, with $\delta_1{\not{|}}\,l_2\wedge\ldots\wedge \delta_n{\not{|}}\,l_2$ and $\det A\not= 0$, can be uniquely decomposed as
\begin{equation}
\label{113}
	A=A_0\cdot P=\begin{pmatrix} n_1 & l_1 \\ n_2 & \frac{1}{\delta} \, l_2 \end{pmatrix}=\begin{pmatrix} k & j \\ 0 & \frac{1}{\delta}\,p \end{pmatrix}\cdot \begin{pmatrix} a & b \\ \delta \, c & d\end{pmatrix}=\begin{pmatrix} k\,a + \delta\,j\,c & k\,b + j\,d \\ p\,c & \frac{1}{\delta}\,p\,d\end{pmatrix}
\end{equation}
Since $a\,d-\delta\,b\,c=1$, we can infer that (using the lemma of B\'{e}zout) $\delta_i {\not|}\, d\,$ for $\delta_i|\delta$ and $\delta_i\in\mathds{P}$. Hence, it holds that $\delta_i{\not|}\, l_2 $ for $\delta_i|\delta$ and $\delta_i\in\mathds{P}$. To solve \eqref{113}, we multiply it with $P^{-1}$ from the right. This results in the system of Diophantine equations
\begin{equation}
\label{114}
	\begin{pmatrix} k & j \\ 0 & \frac{1}{\delta} \, p \end{pmatrix} =\begin{pmatrix} n_1 & l_1 \\ n_2 & \frac{1}{\delta} l_2 \end{pmatrix}\cdot \begin{pmatrix} d & -b \\ -\delta\,c & a  \end{pmatrix}=\begin{pmatrix} n_1\,d-l_1\,\delta\,c & l_1\,a- n_1\,b \\ n_2\,d-l_2\,c & \frac{1}{\delta}\,\left(l_2\,a-\delta\,n_2\,b\right) \end{pmatrix}\,,
\end{equation}
with $a\,d-\delta\,b\,c=1$. Let us look at the two-one component of \eqref{114}. It states
\begin{equation}
\label{115a}
	n_2\,d-l_2\,c=0\,.
\end{equation}
Because of $\gcd(\delta,l_2)=1$ it follows that $\gcd(n_2,l_2)=\gcd(\delta\,n_2,l_2)$ and the most general integral solution of \eqref{115a} is given by
\begin{equation}
\label{115}
	d=\epsilon\,\frac{l_2}{\gcd(\delta \, n_2,l_2)} \quad \text{and} \quad c=\epsilon\,\frac{n_2}{\gcd(\delta \, n_2,l_2)}\,,\quad \text{for} \quad \epsilon\in\mathds{Z}\,.
\end{equation}
Inserting equation \eqref{115} into $a\,d-\delta\,b\,c=1$ yields
\begin{equation}
\label{116}
	l_2\,a-\delta\, n_2\,b=\frac{\gcd(\delta\,n_2,l_2)}{\epsilon}
\end{equation}
Equation \eqref{116} is solvable if and only if
\begin{equation}
\label{117}
	\gcd(\delta\,n_2,l_2)\left|\frac{\gcd(\delta\,n_2,l_2)}{\epsilon}\right.\Leftrightarrow \exists k\in\mathds{Z}\,:\,k\,\gcd(\delta\,n_2,l_2)=\frac{\gcd(\delta\,n_2,l_2)}{\epsilon}
\end{equation}
\[
	\Leftrightarrow k\,\epsilon = 1 \Leftrightarrow k=\epsilon=\pm 1
\]
Putting this into \eqref{116} leads to
\begin{equation}
\label{118}
	l_2\,a-\delta\,n_2\,b=\epsilon\,\gcd(\delta\,n_2,l_2)
\end{equation}
The lemma of B\' ezout tells us that the most general solution of \eqref{118} is given by
\begin{align}
\label{119}
	&a_\xi=a-\epsilon\,\xi\,\frac{\delta\,n_2}{\gcd(l_2,\delta\,n_2)}\quad \text{and}\\
\label{120}
	&b_\xi=b-\epsilon\,\xi\,\frac{l_2}{\gcd(l_2,\delta\,n_2)}\quad \text{for} \quad \xi\in\mathds{Z}\;.
\end{align}
Where $(a,b)$ is a special solution of \eqref{118} which can, for example, be gained by the extended Euclidean algorithm. The other solutions $k$, $j$ and $p$ are determined by \eqref{114} and read
\begin{align}
\label{121}
	\begin{split}
	k&= n_1\,d-l_1\,\delta\,c=n_1\,\epsilon\,\frac{l_2}{\gcd(\delta\,n_2,l_2)}-\delta\,l_1\,\epsilon\,\frac{n_2}{\gcd(\delta\,n_2,l_2)}=\\
	&=\delta\,\epsilon\,\frac{\det A}{\gcd(\delta\,n_2,l_2)}\,,
	\end{split}\\
	j&= l_1\,a-n_1b+\xi\,k \quad \text{and}\\
	\label{122}
	p&=\epsilon\,\gcd(\delta\,n_2,l_2)\not=0\,.
\end{align}
These numbers are integral by construction. Therefore, we have proven the existence of $A_0$ and $P$ in \eqref{113}. Next, we want to show their uniqueness if we require $k>j\ge 0$.

Let now $k>j\ge 0$. From $k>0$ we can infer $\epsilon=\text{sgn} \det{A}$. Then $j\ge 0$ implies together with equations \eqref{121} to \eqref{122}
\begin{equation}
\label{mc11}
 l_1\,a-n_1\,b+\xi k\ge 0\,\Rightarrow\, \xi \ge \frac{n_1\,b-l_1\,a}{\delta|\det A|}\gcd{(\delta\, n_2,l_2)}=: \xi_{\text{min}}\ .
\end{equation}
With $j<k$ it follows that
\begin{equation}
\label{mc12}
 l_1\,a-n_1\,b+\xi\,\ k<k\,\Rightarrow\,\xi<\frac{k+(n_1\,b-l_1\,a)}{k}=1+\xi_{\text{min}}\,.
\end{equation}
Hence,
\begin{equation}
\label{mc13}
 \xi_{\text{min}}\le\xi<\xi_{\text{min}}+1\,.
\end{equation}
Together with $\xi\in\mathds{Z}$ we deduce that
\begin{equation}
\label{mc14}
 \xi=\xi_0(a,b)=\left \lceil \frac{n_1\,b-l_1\,a}{\delta|\det A|}\gcd{(a_{21},a_{22})} \right \rceil\;,
\end{equation}
where $\lceil\cdot\rceil$ denotes the ceiling function\footnote{$\lceil\alpha\rceil:=\min\limits_{n\geq \alpha}\{n\in\mathds{Z}\}$}.

Now we define $a_0$ and $b_0$ through $\xi_0$ via eq. \eqref{119} and \eqref{120}. It remains to show the independence of eq. \eqref{119} and \eqref{120}, with $\xi=\xi_0$, from the special choice of solutions $(a,b)$. To achieve this one has to keep in mind that $\left \lceil x+\xi \right \rceil=\left \lceil x \right \rceil + \xi$ for all $\xi\in\mathds{Z}$. Let $a$ and $b$ be arbitrary solutions of \eqref{118}. Then all solutions $(a_\xi,b_\xi)$ to \eqref{118} can be written in the form \eqref{119} and \eqref{120}. Inserting all these solutions into \eqref{118} with $\xi=\xi_0(a,b)$ we obtain
\[
\begin{split}
 a_0(a_\xi,b_\xi) &:=a_\xi-\xi_0(a_\xi,b_\xi)\,\epsilon\,\frac{\delta\,n_2}{\gcd{(\delta\,n_2,l_2)}}=\\&=a-\epsilon\,\xi\, \frac{\delta\,n_2}{\gcd{(\delta\,n_2,l_2)}}-\left \lceil \frac{n_1\,b-l_1\,a}{\delta|\det A|}\gcd(\delta\,n_2,l_2)\right.-
	\\&\qquad\left. -\epsilon\,\xi\,\frac{n_1\, l_2-\delta\,n_2\, l_1}{\delta\,|\det A|\gcd(\delta\,n_2,l_2)}\gcd(\delta\,n_2,l_2) \right \rceil \epsilon\,\frac{\delta\, n_2}{\gcd(\delta\,n_2,l_2)}=
	\\&=a-\xi_0(a,b) \frac{\delta\,n_2}{\gcd(\delta\,n_2,l_2)}=a_0(a,b)\,.
 \end{split}
\]
Independence of $b_0$ from the solutions $(a,b)$ can be shown analogous and so the lemma is proven.
\end{proof}

Let us illustrate this lemma by assuming $\delta=2\cdot 3 \cdot 5^2=150$ and, therefore, computing a sum of some function $f(A)$ over $\delta_1{\not|}\,l_2\wedge\ldots\wedge \delta_n{\not|}\,l_2$, $\det A {\not =} 0$. Lemma \ref{novanish} states that this sum is given by
\begin{equation}
	\sum_{\genfrac{}{}{0pt}{2}{A\in\sumset{1}{1}{150}{1}{\mathds{M}}\wedge \det A \not = 0}{2\not{\,|\,}l_2\wedge 3 \not{\,|\,} l_2\wedge 5\not{\,|\,}\,l_2}}f(A)=\sum_{\genfrac{}{}{0pt}{2}{n_1,n_2,l_1,l_2\in\mathds{Z}}{2\not{\,|\,}l_2\wedge 3 \not{\,|\,} l_2\wedge 5\not{\,|\,}\,l_2}}f\left( \begin{pmatrix} n_1 & l_1 \\ n_2 & \frac{1}{150}\,l_2 \end{pmatrix} \right)=
\end{equation}
\[
	=\sum_{l=1}^{3}(-1)^{l}\,\sum_{\delta^{(l)}\in\,\mathcal{C}_l(150)}\sum_{\genfrac{}{}{0pt}{2}{k,j,p\in\mathds{Z}}{k>j\ge 0\wedge p\not = 0}}\sum_{P\in\Gamma_0(150)}f\left( \begin{pmatrix} k & j \\ 0 & \frac{\delta^{(l)}}{150}\,p \end{pmatrix}\cdot P \right)=
\]
\[
	=\sum_{\genfrac{}{}{0pt}{2}{k,j,p\in\mathds{Z}}{k>j\ge 0\wedge p\not = 0}}\sum_{P\in\Gamma_0(150)}
	\left[
		f\left( \begin{pmatrix} k & j \\ 0 & \frac{1}{150}\,p \end{pmatrix}\cdot P \right)-\right.
\]
\[
		-f\left( \begin{pmatrix} k & j \\ 0 & \frac{2}{150}\,p \end{pmatrix}\cdot P \right) - f\left( \begin{pmatrix} k & j \\ 0 & \frac{3}{150}\,p \end{pmatrix}\cdot P \right)- f\left( \begin{pmatrix} k & j \\ 0 & \frac{5}{150}\,p \end{pmatrix}\cdot P \right)+
\]
\[
		+ f\left( \begin{pmatrix} k & j \\ 0 & \frac{6}{150}\,p \end{pmatrix}\cdot P \right) + f\left( \begin{pmatrix} k & j \\ 0 & \frac{10}{150}\,p \end{pmatrix}\cdot P \right) + f\left( \begin{pmatrix} k & j \\ 0 & \frac{15}{150}\,p \end{pmatrix}\cdot P \right)-
\]
\[
		-\left. f\left( \begin{pmatrix} k & j \\ 0 & \frac{30}{150}\,p \end{pmatrix}\cdot P \right)\right]
\]
Next we want to state the lemma which enables us to sum over all non-zero matrices $A\in\sumset{1}{1}{\delta}{1}{\mathds{M}}$, with $\delta_1{\not|}\,l_2\wedge\ldots\wedge \delta_n{\not|}\,l_2$ and $\det A = 0$, namely
\begin{lem}
\label{vanish}
	Let $\delta={\delta_1}^{x_1}\cdot {\delta_2}^{x_2}\cdot\ldots\cdot{\delta_n}^{x_n}$ and $\delta^{(l)}=\delta_{i_1}\cdot\delta_{i_2}\cdot\ldots\cdot\delta_{i_l}$, with $\delta_i\in\mathds{P}$, be the product of a choice of $l$ prime factors of $\delta$. Moreover, let $\mathcal{C}_l(\delta)$ be the set of all possible products of choices of $l$ prime factors of $\delta$ and $f$ some function on $\mathds{Z}$. Then it holds, at least as a formal sum,
\begin{equation}
\begin{split}
	\sum_{\genfrac{}{}{0pt}{2}{A\in\sumset{1}{1}{\delta}{1}{\mathds{M}}\wedge A\not=0\wedge \det A =0}{\delta_1\not{\,|\,}l_2\wedge\ldots\wedge \delta_n\not{\,|\,}\,l_2}}f(A)=&\frac{1}{2}\,\sum_{l=0}^{n}(-1)^{l}\,\sum_{\delta^{(l)}\in\,\mathcal{C}_l(\delta)}\sum_{\genfrac{}{}{0pt}{2}{j,p\in\mathds{Z}}{(j,p)\not = (0,0)}}\sum_{P\in\Gamma_0(\delta)/\langle T \rangle}f\left( \begin{pmatrix} 0 & j \\ 0 & \frac{\delta^{(l)}}{\delta}\,p \end{pmatrix}\cdot P \right)\,.
\end{split}
\end{equation}
The homogeneous space $\Gamma_0(\delta)/\langle T \rangle$ is defined by considering $P_1,P_2\in\Gamma_0(\delta)$ equivalent if there exists a $m\in\mathds{Z}$ such that\footnote{Recall that $T:=\begin{pmatrix}1 & 1\\ 0 & 1\end{pmatrix}$} $P_1=T^m\cdot P_2$.
\end{lem}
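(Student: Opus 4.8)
The plan is to prove the identity as a purely combinatorial reorganisation of the matrix sum, in close parallel to the proof of Lemma~\ref{novanish}; this is the vanishing-determinant analogue of the contribution $f_3$ in Lemma~\ref{refint}. First observe that the hypothesis $\delta_1\nmid l_2\wedge\dots\wedge\delta_n\nmid l_2$ is equivalent to $\gcd(l_2,\delta)=1$, and in particular forces $l_2\neq0$, so the second row $(n_2,l_2/\delta)$ of every admissible $A\in\sumset{1}{1}{\delta}{1}{\mathds{M}}$ is nonzero and $A\neq0$ holds automatically. Next I would strip off the divisibility condition by inclusion--exclusion over the reductions $\delta^{(l)}\in\mathcal{C}_l(\delta)$, using the corollary \eqref{108} of Lemma~\ref{choice}. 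Writing $\Phi(m):=\sum_{j\in\mathds{Z}}\sum_{P\in\Gamma_0(\delta)/\langle T\rangle}f\bigl(\begin{pmatrix}0&j\\0&m/\delta\end{pmatrix}P\bigr)$, the right-hand side of the claim equals $\tfrac12\sum_{l=0}^{n}(-1)^l\sum_{\delta^{(l)}\in\mathcal{C}_l(\delta)}\bigl[(\text{part }p\neq0)+(\text{part }p=0)\bigr]$. The $p=0$ part contributes only the matrices $\begin{pmatrix}0&j\\0&0\end{pmatrix}P$ with $j\neq0$, which are independent of $\delta^{(l)}$ and therefore appear with total weight $\sum_{l}(-1)^l\binom{n}{l}=0$ (for $\delta\neq1$; when $\delta=1$, $n=0$ and no such cancellation is needed) and drop out. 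In the $p\neq0$ part the substitution $m=\delta^{(l)}p$ identifies $\sum_{p\neq0}$ with $\sum_{m\in\delta^{(l)}\mathds{Z}\setminus\{0\}}$ bijectively, and \eqref{108} applied to $F=\Phi$ collapses the alternating sum to $\sum_{\gcd(m,\delta)=1}\Phi(m)$. Hence the right-hand side equals $\tfrac12\sum_{\gcd(m,\delta)=1}\sum_{j\in\mathds{Z}}\sum_{P\in\Gamma_0(\delta)/\langle T\rangle}f\bigl(\begin{pmatrix}0&j\\0&m/\delta\end{pmatrix}P\bigr)$.

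It then remains to identify this with the left-hand side via an explicit two-to-one correspondence. For $P=\begin{pmatrix}a&b\\\delta c&d\end{pmatrix}\in\Gamma_0(\delta)$ one computes $\begin{pmatrix}0&j\\0&m/\delta\end{pmatrix}P=\begin{pmatrix}\delta jc&jd\\mc&md/\delta\end{pmatrix}$, which lies in $\sumset{1}{1}{\delta}{1}{\mathds{M}}$, has vanishing determinant, and has lower-right entry $l_2=md$ with $\gcd(l_2,\delta)=\gcd(m,\delta)$ since $\gcd(d,\delta)=1$; thus for $\gcd(m,\delta)=1$ one lands precisely on matrices counted on the left-hand side. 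This matrix is unchanged under $P\mapsto T^tP$, because right-multiplying $\begin{pmatrix}0&j\\0&m/\delta\end{pmatrix}$ by the upper-triangular $T^t$ has no effect when its first column vanishes, so the assignment $(j,m,P\langle T\rangle)\mapsto\begin{pmatrix}0&j\\0&m/\delta\end{pmatrix}P$ is a well-defined map on triples with $\gcd(m,\delta)=1$.

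Conversely, given $A$ with $\det A=0$ and $\gcd(l_2,\delta)=1$, the nonzero vector $(n_2,l_2)$ determines the bottom row of $P$ up to sign: $AP^{-1}$ has vanishing first column iff $(c,d)\parallel(n_2,l_2)$ (using $\det A=0$ to make the two resulting conditions equivalent), and requiring $\gcd(\delta c,d)=1$, which is needed to complete $P$ inside $\Gamma_0(\delta)$, forces $(c,d)=\pm\tfrac1{\gcd(n_2,l_2)}(n_2,l_2)$ --- this is exactly where the coprimality $\gcd(l_2,\delta)=1$ is used. The top row $(a,b)$ is then fixed up to $(a,b)\mapsto(a,b)+t(\delta c,d)$, i.e.\ $P\mapsto T^tP$, and, the first column of $AP^{-1}$ being zero, $j:=(AP^{-1})_{12}$ and $m:=\delta(AP^{-1})_{22}$ are unaffected. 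The two sign choices produce the distinct triples $(j,m,P\langle T\rangle)$ and $(-j,-m,-P\langle T\rangle)$, distinct because $-\mathds{1}\notin\langle T\rangle$. Therefore $(j,m,P\langle T\rangle)\mapsto\begin{pmatrix}0&j\\0&m/\delta\end{pmatrix}P$ is surjective and exactly two-to-one onto $\{A\in\sumset{1}{1}{\delta}{1}{\mathds{M}}:\det A=0,\ \gcd(l_2,\delta)=1\}$; combined with the first part, the right-hand side equals $\tfrac12\cdot2\cdot\sum_{A:\,\det A=0,\;\delta_1\nmid l_2\wedge\dots}f(A)$, which is the left-hand side. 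As $\Gamma_0(\delta)/\langle T\rangle$ is an infinite quotient set, all equalities are to be read as identities of formal sums, as in the statement.

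I expect the main obstacle to be the inverse step of this correspondence: proving that the bottom row of $P$ is determined only up to sign --- the precise point where $\gcd(l_2,\delta)=1$ is indispensable --- and checking that the residual $\langle T\rangle$-ambiguity leaves $j$ and $m$ invariant, which rests on $\det A=0$. By comparison, the inclusion--exclusion bookkeeping of the first part is routine once Lemma~\ref{choice} is available, though the joint constraint $(j,p)\neq(0,0)$ and the cancellation of the zero-second-row contributions must be tracked carefully.
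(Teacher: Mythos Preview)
Your argument is correct and follows essentially the same route as the paper's proof: inclusion--exclusion via Lemma~\ref{choice} to strip the coprimality constraint, followed by a two-to-one orbit decomposition under $\Gamma_0(\delta)/\langle T\rangle$ accounting for the factor $\tfrac12$. The only cosmetic differences are that you parametrise the inverse correspondence through the \emph{second} row of $A$ (fixing $(c,d)\parallel(n_2,l_2)$ and reading off $j,m$), whereas the paper works from the first row (fixing $|j|=\gcd(n_1,l_1)$ and deriving $p$), and that you dispose of the $p=0$ contributions by an explicit binomial cancellation rather than absorbing them into the piecewise definition of $F$; both choices are equivalent and your treatment of the $p=0$ bookkeeping is, if anything, slightly cleaner.
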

\begin{proof}
This lemma can be proven in a similar way as lemma \ref{novanish}. We define a function
\begin{equation}
F(p):=\left\{
		\begin{array}{cl} 	\frac{1}{2}\, \sum_{j\in\mathds{Z}} \limits \sum_{P\in\Gamma_0(\delta)/\langle T \rangle}\limits f\left( \begin{pmatrix} 0 & j \\ 0 & \frac{\delta^{(l)}}{\delta}\,p \end{pmatrix}\cdot P \right)\,, & \mbox{if }p\not= 0\\ 
		\frac{1}{2}\, \sum_{\genfrac{}{}{0pt}{2}{j\in\mathds{Z}}{j\not = 0}}\limits \sum_{P\in\Gamma_0(\delta)/\langle T \rangle}\limits f\left( \begin{pmatrix} 0 & j \\ 0 & \frac{\delta^{(l)}}{\delta}\,p \end{pmatrix}\cdot P \right)\,, &  \mbox{if } p=0
			\end{array}
		\right.
\end{equation}
analogously to \eqref{109}. Then it holds (cf. \eqref{108}) that
\begin{equation}
\sum_{\delta_1\not{\,|\,}p\,\wedge\ldots\wedge\, \delta_n\not{\,|\,}\,p}F(p)=\frac{1}{2}\cdot\sum_{l=0}^{n}(-1)^{l}\,\sum_{\delta^{(l)}\in\,\mathcal{C}_l(\delta)}\sum_{\genfrac{}{}{0pt}{2}{j,p\in\mathds{Z}}{(j,p)\not = (0,0)}}\sum_{P\in\Gamma_0(\delta)/\langle T \rangle}f\left( \begin{pmatrix} 0 & j \\ 0 & \frac{\delta^{(l)}}{\delta}\,p \end{pmatrix}\cdot P \right)
\end{equation}
Thus, we have to show
\begin{equation}
\label{123}
	\sum_{\delta_1\not{\,|\,}p\,\wedge\ldots\wedge\, \delta_n\not{\,|\,}\,p}F(p)=\sum_{\genfrac{}{}{0pt}{2}{A\in\sumset{1}{1}{\delta}{1}{\mathds{M}}\wedge A\not=0\wedge \det A =0}{\delta_1\not{\,|\,}l_2\wedge\ldots\wedge \delta_n\not{\,|\,}\,l_2}}f(A)\,.
\end{equation}
This is equivalent to prove that every matrix $A\in\sumset{1}{1}{\delta}{1}{\mathds{M}}$, with $\delta_1{\not|}\,l_2\wedge\ldots\wedge \delta_n{\not|}\,l_2$, $A\not= 0$ and $\det A= 0$, can be uniquely\footnote{Up to a sign ambiguity, which will turn out to be irrelevant shortly.} decomposed as
\begin{equation}
\label{124}
	A=\begin{pmatrix} n'_1 & l_1 \\ n_2 & \frac{1}{\delta} \, l_2 \end{pmatrix}= \begin{pmatrix} 0 & j \\ 0 & \frac{\delta^{(l)}}{\delta}\,p \end{pmatrix} \cdot \begin{pmatrix} a & b \\ \delta \,c & d \end{pmatrix} =\begin{pmatrix} \delta \, j \, c & j d \\ p\,c & \frac{p\,d}{\delta} \end{pmatrix}\,,
\end{equation}
where $\begin{pmatrix} a & b \\ \delta \,c & d \end{pmatrix}\in\Gamma_0(\delta)/\langle T \rangle$. We will observe that $(j,p)$ and $(-j,-p)$ label the same orbit. Since $\det A = 0$ it follows that
\begin{equation}
\label{125}
	n'_1\,\frac{l_2}{\delta}=n_2\,l_1=:n_0\in\mathds{Z}\,.
\end{equation}
Because of $\delta_i{\not|}\,l_2$ we can infer $\delta|n'_1=\delta\,n_1$. In consideration of \eqref{124}, i.e.\
\begin{equation}
\label{126}
	c= \frac{n_1}{j}\in\mathds{Z} \quad \text{and} \quad d=\frac{l_1}{j}\in\mathds{Z}
\end{equation}
it follows that
\begin{equation}
\label{127}
	j|\gcd(n_1,l_1)\Leftrightarrow \exists\,k\in\mathds{Z}\,:\,j\,k=\gcd(n_1,l_1)\Leftrightarrow j=\frac{\gcd(n_1,l_1)}{k}\,.
\end{equation}
Therefore, using \eqref{124} again,
\begin{equation}
\label{128}
	p=\frac{n_2}{n_1}\,j=\frac{n_2}{n_1}\,\frac{\gcd(n_1,l_1)}{k}=\frac{n_0}{\lcm(n_1,l_1)}\,\frac{1}{k}\,.
\end{equation}
Since the least common multiple of two numbers $n_1$ and $l_1$ is the product of the highest powers of all prime-factors that are present in their prime-factorisations, it follows together with the fact that $n_1$ and $l_1$ are both divisors of $n_0$ that $p\in\mathds{Z}$ for at least $k=\pm 1$.

Since $a\,d-\delta\,b\,c=1$ it is true that $\gcd(c,d)=1$. Thus, $|j|=\gcd(n_1,l_1)\Leftrightarrow k=\pm 1$. Here we can observe the above mentioned sign ambiguity. Both $k=1$ and $k=-1$ yield a consistent solution of \eqref{124} for the same matrix $A$. Hence, we have
\begin{equation}
\label{129}
	\begin{pmatrix} 0 & j \\ 0 & \frac{\delta^{(l)}}{\delta}\,p \end{pmatrix} \cdot \Gamma_0(\delta)= \begin{pmatrix} 0 & -j \\ 0 & -\frac{\delta^{(l)}}{\delta}\,p \end{pmatrix} \cdot \Gamma_0(\delta)\,.
\end{equation}
Let $P_1,P_2\in\Gamma_0(\delta)$. Then it is clear that
\begin{equation}
\label{130}
	\begin{pmatrix} 0 & j \\ 0 & \frac{\delta^{(l)}}{\delta}\,p \end{pmatrix} \cdot P_1=\begin{pmatrix} 0 & j \\ 0 & \frac{\delta^{(l)}}{\delta}\,p \end{pmatrix} \cdot P_2
\end{equation}
if
\begin{equation}
\label{131}
	P_1=\begin{pmatrix} 1 & m \\ 0 & 1 \end{pmatrix}\cdot P_2 \equiv T^m \cdot P_2\,.
\end{equation}
So the lemma is proven.
\end{proof}

Now we can begin the
\begin{proof}[Proof of theorem \ref{main}]
The idea of the proof is as follows. Let $\delta={\delta_1}^{x_1}\cdot\ldots\cdot {\delta_n}^{x_n}$. The problematic part of the sum over all $A\in\sumset{1}{1}{\delta}{1}{\mathds{M}}$ will be the sum over $l_2$, with $A$ parametrised as in \eqref{103}. To perform this summation, we split the sum over all $l_2\in\mathds{Z}$ into two sums $\delta_1{\not|}\,l_2\,\wedge\,\ldots\,\wedge\,\delta_n{\not|}\,l_2 $ and $\delta_1|l_2\,\vee\,\ldots\,\vee\,\delta_n |l_2$. Using lemmas \ref{novanish} and \ref{vanish} the first sum can be decomposed into sums over integrals which can be computed directly via lemma \ref{refint}. Using lemma \ref{choice} the second sum can be decomposed into different sums of the form $d|l_2$, which lead to a reduction via lemma \ref{red} to $\frac{\delta}{d}|\delta$. We will use this procedure successively for all reduced sums until we are left with $\delta'=1$. Firstly, we will apply it to matrices with non-vanishing determinant to get the constants $\sumset{1}{1}{\delta}{1}{C}(d)$. After that we will show that the contribution of the zero matrix takes exactly the value to complete the contribution of the matrices with non-vanishing determinant. At the end we will look at the non-zero matrices with vanishing determinant. The finiteness of this expression fixes the multiplicative constant $\sumset{1}{1}{\delta}{1}{A}$.

One of the results of this section up to now is the fact that every integral which has to be computed arises as the contribution of matrices of the form
\begin{equation}
\label{main4}
	\begin{pmatrix}
		k & j \\
		0 & \frac{\delta'^{(l)}}{\delta'}\, p
	\end{pmatrix}
	\quad \text{and} \quad
	\begin{pmatrix}
		0 & j \\
		0 & \frac{\delta'^{(l)}}{\delta'}\, p
	\end{pmatrix}
\end{equation}
for $\delta'|\delta$ and $\delta'^{(l)}\in\mathcal{C}_l(\delta')$. The general form of these contributions is given by
\begin{equation}
\label{main3}
	\frac{1}{\delta'^{(l)}}\,\sum_{\genfrac{}{}{0pt}{2}{k>j\ge 0}{p\not = 0}} \frac{T_2}{\delta'/\delta'^{(l)}}\,\exp\left(-2\pi i \,T\det \begin{pmatrix} k & j \\ 0 & \frac{\delta'}{\delta'^{(l)}}\,p \end{pmatrix}\right)
\end{equation}
\[
	\quad\times\int_{-\infty}^{\infty}d\tau_1\int_0^{\infty}\frac{d\tau_2}{{\tau_2}^2}\exp\left(-\frac{\pi\, T_2}{\tau_2\,U_2}\left| \begin{pmatrix} 1 & U \end{pmatrix} \cdot \begin{pmatrix} k & j \\ 0 & \frac{\delta'^{(l)}}{\delta'} \,p \end{pmatrix} \cdot \begin{pmatrix} \tau \\1 \end{pmatrix} \right|^2\right)=
\]
\[
	=\frac{1}{\delta'^{(l)}}\,\sum_{\genfrac{}{}{0pt}{2}{0\leq j<k }{p\not = 0}} T_2\,\ex{-2\pi i \,T\, \frac{\delta'^{(l)}}{\delta'}\,k p}\,\int_{-\infty}^{\infty}d\tau_1\int_0^{\infty}\frac{d\tau_2}{{\tau_2}^2}\exp\left(-\frac{\pi\,\frac{\delta'^{(l)}}{\delta'}\, T_2}{\tau_2\,\frac{\delta'^{(l)}}{\delta'}\,U_2}\left|k\tau+j+p\,\frac{\delta'^{(l)}}{\delta'}\,U\right|^2\right)=
\]
\[
	=-\frac{1}{\delta'^{(l)}}\,4\text{Re} \ln\prod_{n=1}^{\infty} \left(1-\ex{2\pi i\,\frac{T}{\delta'/\delta'^{(l)}}\,n}\right)\,,
\]
for matrices with non-vanishing determinant and
\begin{equation}
\label{main5}
		\frac{1}{\delta'^{(l)}}\,\int_{-1/2}^{+1/2}d\tau_1\int_0^\infty\frac{d\tau_2}{{\tau_2}^2}\left[\frac{T_2}{\delta'/\delta'^{(l)}}\sideset{}{'}\sum_{j,p}\exp\left(-\frac{\pi\,T_2}{\tau_2\,U_2}\left|\begin{pmatrix} 1 & U \end{pmatrix}\begin{pmatrix} 0 & j \\ 0 & \frac{\delta'^{(l)}}{\delta}\,p  \end{pmatrix}\begin{pmatrix} \tau \\ 1  \end{pmatrix}\right|^2\right)-\tau_2\theta_{R_\Gamma}(\tau)\right]=
\end{equation}
\[
	=\frac{1}{\delta'^{(l)}}\,\int_{-1/2}^{+1/2}d\tau_1\int_0^\infty\frac{d\tau_2}{{\tau_2}^2}\left[\frac{T_2}{\delta'/\delta'^{(l)}}\sideset{}{'}\sum_{j,p}\exp\left(\frac{\pi\,\frac{\delta'^{(l)}}{\delta'}\,T_2}{\tau_2\,\frac{\delta'^{(l)}}{\delta'}\,U_2}\left|j+\frac{\delta'^{(l)}}{\delta'}U\,p\right|^2\right)-\tau_2\theta_{R_\Gamma}(\tau)\right]=
\]
\[
	=-\frac{1}{\delta'^{(l)}}\left[\,4\text{Re}\ln \eta\left(\frac{U}{\delta'/\delta'^{(l)}}\right)-\ln\left(\frac{T_2}{\delta'/\delta'^{(l)}}\,\frac{U_2}{\delta'/\delta'^{(l)}}\right)-\ln\left(\frac{8\pi\,\ex{1-\gamma_E}}{3\sqrt{3}}\right)\right]
\]
for non-zero matrices with vanishing determinant.

Let us look closer at our proposition, i.e.\ formula \eqref{main1}. The different terms in the sum on the right-hand side equal $\sumset{1}{1}{1}{1}{I}(\frac{T}{d},\frac{U}{d})$ and it is feasible to guess that they stem from a reduction $\delta \to 1$. However, how does the sum over all divisors arise from the reduction?
To see that, observe that at every intermediate step of reduction $d=\delta_1^{x'_1}\cdot\ldots\cdot\delta_n^{x'_n}|\delta$, lemma \ref{novanish} ensures that there is always a contribution of matrices $A$ with $\delta_{i_1}{\not|}\,l_2\,\wedge\,\ldots\,\wedge\,\delta_{i_{n'}}{\not|}\,l_2$ and
\begin{equation}
\label{main6}
	A=\begin{pmatrix} k & j \\ 0 & \frac{1}{d}\,p \end{pmatrix}\,.
\end{equation}
For matrices with vanishing determinant there holds an analogous statement with $k=0$. Since these two assertions are true for every divisor $d$ of $\delta$, this leads to a sum over all these divisors $d$ in \eqref{main}. It remains to compute the prefactors $\sumset{1}{1}{\delta}{1}{C}(d)$ of these summands and the overall constant $\sumset{1}{1}{\delta}{1}{A}$. 

As mentioned, imposing divisibility conditions on $l_2=\delta^{(l)}\, l'_2$, with $\delta^{(l)}\in\mathcal{C}_l(\delta)$, results in a reduction of $\delta$ to $\frac{\delta}{\delta^{(l)}}$. On the resulting sum we can again impose divisibility conditions $l'_2=\delta'^{(l)}\,l''_2$, which gives rise to another reduction, and so on. Now, we will show the following
\begin{claim}
The following two procedures are equivalent
\begin{enumerate}
	\item successively apply lemmas \ref{novanish}, \ref{vanish}, \ref{choice} and \ref{red} to \eqref{102}
	\item sum over all $d_1\,l_2$, $d_2\,l_2$, ..., $d_{\sigma_0(\delta)}\,l_2$, where $\sigma_0(\delta)$ denotes the numbers of divisors of $\delta$, and then apply lemmas \ref{novanish} and \ref{vanish}.
\end{enumerate}
\end{claim}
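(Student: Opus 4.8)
The plan is to reduce the claim to a purely formal, linear identity of sums. Indeed, in \eqref{102} everything downstream of the summation depends on the summation variables only through the matrix $A$, and Lemmas \ref{novanish}, \ref{vanish}, \ref{choice} and \ref{red} are all \emph{linear} identities in the summand. Hence it suffices to prove the claim as an identity of formal sums over the index set $\sumset{1}{1}{\delta}{1}{\mathds{M}}$, checked separately on the three $\SL{2}{Z}$-orbit types ($A=0$, $\det A\neq 0$, and $A\neq 0\wedge\det A=0$); on $A=0$ both procedures are inert, so only the last two cases require work. The four steps I would carry out are: (i) describe the recursion of procedure 1 as a finite tree of reductions; (ii) show that, fully unrolled, it collapses to one signed linear combination indexed by the divisors $d\mid\delta$; (iii) show that procedure 2 produces the very same combination; (iv) conclude by pulling the (linear) Lemmas \ref{novanish}, \ref{vanish} outside the combination.

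I would run (i)--(ii) by strong induction on $\delta$ ordered by divisibility, the base case $\delta=1$ being Lemma \ref{refint} (nothing to reduce). In the inductive step, split the sum over $l_2\in\mathds{Z}$ in \eqref{102} into the part with $\delta_1\nmid l_2\wedge\cdots\wedge\delta_n\nmid l_2$ and its complement $\delta_1\mid l_2\vee\cdots\vee\delta_n\mid l_2$. On the first part one invokes Lemmas \ref{novanish} and \ref{vanish}; on the complement, Lemma \ref{choice} rewrites the sum as $\sum_{l\ge 1}(-1)^{l+1}\sum_{\delta^{(l)}\in\mathcal{C}_l(\delta)}[\text{sum with }l_2=\delta^{(l)}l_2']$, and Lemma \ref{red} turns each inner sum into $\frac{[\modu{\delta/\delta^{(l)}}:\modu{\delta}]}{\delta^{(l)}}\,\sumset{1}{1}{\delta/\delta^{(l)}}{1}{\mathcal{I}}$. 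Because $\delta/\delta^{(l)}<\delta$, the induction hypothesis applies to each such term and puts it into the procedure-2 form; back-substituting yields a procedure-2 expansion of $\sumset{1}{1}{\delta}{1}{\mathcal{I}}$ --- \emph{provided} the coefficient attached to each divisor matches the one procedure 2 uses directly.

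Matching these coefficients is the technical heart, and the step I expect to cost the most care. Two facts drive it. First, the constants from Lemma \ref{red} are multiplicative along a chain of reductions: for $\delta\to\delta/d_1\to\delta/(d_1d_2)$ one has $\frac{1}{d_1}\frac{1}{d_2}=\frac{1}{d_1d_2}$ and, by the tower law for subgroup indices, $[\modu{\delta/d_1d_2}:\modu{\delta/d_1}]\,[\modu{\delta/d_1}:\modu{\delta}]=[\modu{\delta/d_1d_2}:\modu{\delta}]$, so a depth-$k$ chain contributes exactly the constant of the single reduction by the product. Second, the signs and multiplicities with which a fixed divisor $d\mid\delta$ is reached by procedure 1 --- summed over all reduction chains $\delta=\delta_0\supsetneq\cdots\supsetneq\delta_k$ with $\delta_0/\delta_k=d$, each step stripping off a square-free divisor $>1$ --- collapse to the single coefficient used by procedure 2. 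This is a purely combinatorial statement on the Boolean lattice of prime divisors of $d$, of exactly the flavour already settled inside the proof of Lemma \ref{choice} via $\binom{l}{1}-\binom{l}{2}+\cdots+(-1)^{l+1}\binom{l}{l}=1$: the recipe ``condition on $\delta_1\mid l_2$ or not, then recurse on the remaining primes'' is merely a depth-first traversal of the inclusion--exclusion tree, whose leaves biject with the terms of the flat inclusion--exclusion sum over $\mathcal{C}_l(\delta)$.

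With the coefficients shown to agree, both procedures produce the identical signed combination $\sum_{d\mid\delta}c_d\bigl[\text{sum over }\sumset{1}{1}{\delta}{1}{\mathds{M}}\text{ restricted to }d\mid l_2\bigr]$ of matrix sums; since Lemmas \ref{novanish} and \ref{vanish} are linear identities in the summand, applying them termwise (procedure 1) or once to the combined sum (procedure 2) gives the same output, which is the asserted equivalence. I would also record two routine bookkeeping points: the recursion of procedure 1 terminates because each reduction strictly decreases $\delta$ and bottoms out at $\delta=1$, so no infinite regress occurs; and the divisors eventually produced exhaust $\{d:d\mid\delta\}$, matching the $\sigma_0(\delta)$ summands appearing in procedure 2.
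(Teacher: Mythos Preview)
Your proposal is correct and follows essentially the same approach as the paper: both arguments hinge on (a) the path-independence of the multiplicative constant from Lemma~\ref{red}, which you derive from the tower law for indices while the paper reads it off from the explicit formula \eqref{main7}, and (b) the binomial identity $\sum_{l=1}^{l_0}(-1)^{l+1}\binom{l_0}{l}=1$ to show that each divisor of $\delta$ is reached with net multiplicity one when the recursion is unrolled. Your framing via strong induction on $\delta$ and explicit appeal to linearity is a bit more systematic than the paper's direct preimage-counting argument, but the combinatorial core is identical.
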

\begin{proof}
Recall that the reduction of $\delta'$ to $\frac{\delta'}{\delta'^{(l)}}$ gives rise to a multiplicative constant (cf. lemma \ref{red})
\[
	\sumset{1}{1}{\delta'^{(l)}|\delta'}{1}{\mathcal{I}}(T,U)=\frac{\left[\Gamma_0\left(\frac{\delta'}{\delta'^{(l)}}\right):\Gamma_0(\delta')\right]}{\delta'^{(l)}}\,\sumset{1}{1}{\frac{\delta'}{\delta'^{(l)}}}{1}{\mathcal{I}}(T,U)\,.
\]
When reducing by a prime number $p$ which divides $\delta'$ as well as $\frac{\delta'}{\delta'^{(l)}}$ , this constant is $1$. For every prime number $p|\delta'$ which does not divide $\frac{\delta'}{\delta'^{(l)}}$ a factor of $\frac{p+1}{p}$ has to be multiplied, i.e.\
\begin{equation}
\label{main7}
	\frac{\left[\Gamma_0\left(\frac{\delta'}{\delta'^{(l)}}\right):\Gamma_0(\delta')\right]}{\delta'^{(l)}}=\prod_{p|\delta'\,\wedge\,p\cancel{\,|\,}\frac{\delta'}{\delta'^{(l)}}\,\wedge\, p\in\mathds{P}}\frac{p+1}{p}
\end{equation}
It is important that the multiplicative constant which we gain by reducing from $\delta$ to $\delta'$ does not depend on the path on which we did it. It does only depend on $\delta$ and $\delta'$. That means that the constant is the same if we first reduce by $\delta_i$ and then by $\delta_j$, first by $\delta_j$ and then by $\delta_i$ or by $\delta_i\,\delta_j$. By reducing $\delta$ by $\delta_i$ we mean reducing $\delta$ to $\frac{\delta}{\delta_i}$. Thus, to show the claim above, we have to look at the possible ways (with sign) of how we can reduce $\delta$ to $\delta'$.

Let us look at an overall reduction of $\delta$ by $d|\delta$. Furthermore, let us denote the numbers of primes which divide $n\in\mathds{Z}$ by $\#p(n)$. Then the preimage of $\frac{\delta}{d}$ under (direct) reduction is given by the set of all $\frac{\delta}{d}\,\delta^{(l)}$ with $\delta^{(l)}\in\mathcal{C}_l(d)$ and $l\le \# p(d)$.\footnote{Recall that we use lemma \ref{choice} and lemma \ref{red} for reduction.} Reducing $\frac{\delta}{d}\,\delta^{(l)}$ by $\delta^{(l)}$ results in a sign $(-1)^{l+1}$. It is clear that there is only one way of reducing $\delta$ by $\delta_i$, with $\delta_i\in\mathds{P}$. Now let us assume that it is true that we count every divisor $\frac{\delta}{d'}$ exactly once for $d'<d$. If we define $l_0:=\#p(d)$ we can infer how often $\frac{\delta}{d}$ has been counted
\begin{equation}
\label{main8}
	\sum_{l=1}^{l_0}\sum_{\delta^{(l)}\in\mathcal{C}_l(d)}(-1)^{l+1}=\sum_{l=1}^{l_0} {\binom{l_0}{l}} (-1)^{l+1}=1-(1-1)^{l_0}=1.
\end{equation}
Here we used
\begin{equation}
\label{main9}
	|\mathcal{C}_l(d)|={\binom{l_0}{l}}\,.
\end{equation}
Hence, if the assertion is true that we count every divisor $\frac{\delta}{d'}$ exactly once for $d'<d$ it follows that it is also true for $d$. Since it holds for $\mathds{P}\ni\delta_i|\delta$ we can infer that the claim is true.
\end{proof}

Above, we motivated that the result of \eqref{102} is a sum over all divisors $d$ of $\delta$. Afterwards we have shown that it is equivalent to iteratively use lemmas \ref{novanish}, \ref{vanish}, \ref{choice} and \ref{red} or to look at the sum over all restricted sums $l_2=d_1\,l'_2$, $l_2=d_2\,l'_2$, ..., $l_2=d_{\sigma_0(\delta)}\,l'_2$ and apply lemmas \ref{novanish} and \ref{vanish} to these sums. Now, let us look at such a restricted sum $d|l_2$. It is given by
\begin{equation}
\label{main11}
\begin{split}
	\sumset{1}{1}{\delta'|\delta}{1}{\mathcal{I}}(T,U)&=\frac{\left[\Gamma_0\left(\frac{\delta}{\delta'}\right):\Gamma_0(\delta)\right]}{\delta'}\,\int_{R_{\Gamma_0(\delta/\delta')}}\hypmes{\tau}\,\sum_{A\in\sumset{1}{1}{\delta'|\delta}{1}{\mathds{M}}}\ex{-2\pi i \,T\det{A}}\,\frac{T_2}{\delta/\delta'}\,\\&\times\exp\left[-\frac{\pi\,T_2}{\tau_2\,U_2}\left| \begin{pmatrix} 1 & U \end{pmatrix} A \begin{pmatrix} \tau\\1 \end{pmatrix}\right|^2\right]\;.
\end{split}
\end{equation}
Note that this is strictly spoken a formal expression, since it contains infinite contributions. These will eventually cancelled by the regulator from \eqref{102}. We will show explicitly that this is always possible by determining $\sumset{1}{1}{\delta}{1}{A}$ later.

Since $\frac{\delta}{\delta'}\in\mathds{Z}$ we can write $\frac{\delta}{\delta'}={\delta_{i_1}}^{y_1}\cdot\ldots\cdot{\delta_{i_m}}^{y_m}$ for $\delta_{i_j}\in\mathds{P}$. Imposing $\delta_{i_1}{\not|}\, l'_2\wedge\ldots\wedge\delta_{i_m}{\not|}\, l'_2$ results, using lemma \ref{novanish} in integrals of the form
\begin{equation}
\label{main12}
\begin{split}
\mathcal{I}'_{\delta^{(l)}\delta'|\delta}(T,U)&=\frac{\left[\Gamma_0\left(\frac{\delta}{\delta'}\right):\Gamma_0(\delta)\right]}{\delta^{(l)}\,\delta'}\,\int_{R_{\Gamma_0\left(\delta/\delta'\right)}}\hypmes{\tau}\,\sum_{\genfrac{}{}{0pt}{2}{k>j\ge 0}{p\not=0}}\sum_{P\in\Gamma_0\left(\frac{\delta}{\delta'}\right)}\ex{-2\pi i \,T\frac{\delta'^{(l)}\,\delta'}{\delta}}\,\frac{T_2}{\delta/\delta^{(l)}\,\delta'}\,\\&\times\exp\left[-\frac{\pi\,T_2}{\tau_2\,U_2}\left| \begin{pmatrix} 1 & U \end{pmatrix} \begin{pmatrix} k & j \\ 0 & \frac{\delta^{(j)}\,\delta'}{\delta}\,p\end{pmatrix} \cdot P\ \begin{pmatrix} \tau\\1 \end{pmatrix}\right|^2\right]
\end{split}
\end{equation}
for matrices with non-vanishing determinant. The matrix multiplication with $P$ can be interpreted as a modular transformation on $\tau$, cf. eq. \eqref{35}. Since we sum over all matrices $P\in\Gamma_0\left(\frac{\delta}{\delta'}\right)$, we are left with an integral over $\mathds{H}^+$ and are allowed to use lemma \ref{refint}. If we apply the same reasoning to matrices with vanishing determinant (using lemma \ref{vanish}), set $k=0$ and adjust the sum, we have to take into account that not all $P_1,P_2 \in \Gamma_0\left(\frac{\delta}{\delta'}\right)$ yield different matrices, cf. the proof of lemma \ref{vanish}. There we found out that two matrices $P_1,P_2$ with $P_1=T^m\,P_2$ for some $m\in\mathds{Z}$ label the same orbit. Hence, we have to integrate the contributions of matrices with vanishing determinant over $\mathcal{H}^+/\langle T \rangle$, which is given by a stripe $\left\{\tau\in\mathds{H}^+:|\tau_1|<\frac{1}{2}\right\}$. If we denote the contribution of all these matrices by $\mathcal{I}''_{\delta^{(l)}\delta'|\delta}$, we get using \ref{refint}
\begin{equation}
\label{main13}
	\mathcal{I}'_{\delta^{(l)}\delta'|\delta}(T,U)+\mathcal{I}''_{\delta^{(l)}\delta'|\delta}(T,U)+\mathcal{I}_{\text{zero-matrix}}+\text{reg.}=
\end{equation}
\[
	=-\,\frac{\left[\Gamma_0\left(\frac{\delta}{\delta'}\right):\Gamma_0(\delta)\right]}{\delta^{(l)}\,\delta'}\f{}{\delta/\delta'\delta^{(l)}}\;,
\]
where reg.\ denotes the fraction of the regulator from \eqref{102} which cancels the divergent contributions.

Thus, we understood the principal form of the summands in \eqref{main1}. It is interesting that there always appear blocks containing the logarithm of Dedekind $\eta$-functions, very similar to the result for the case $\alpha=\beta=\gamma=\delta=1$.

Our next aim is to show
\[
	\sumset{1}{1}{\delta}{1}{C}(d)=\prod_{p|d\,\wedge\,p\left|\frac{\delta}{d}\right.\,\wedge\,p\in\mathds{P}}\left(1-\frac{1}{p}\right)\,.
\]

Using the above construction, we can show an intermediate result towards the multiplicative constants $\sumset{1}{1}{\delta}{1}{C}(d)$, namely
\begin{equation}
\label{main10}
	\sumset{1}{1}{\delta}{1}{C}(d)=\sum_{l=0}^{\# p\left(\frac{\delta}{d}\right)}\sum_{\delta^{(l)}\in\mathcal{C}_l\left(\frac{\delta}{d}\right)}(-1)^l\,\frac{1}{\delta^{(l)}}\,\prod_{p|\delta\,\wedge\,p\cancel{\,|\,}\delta^{(l)}\cdot d\,\wedge\, p\in\mathds{P}}\frac{p+1}{p}
	\;.
\end{equation}
To do this, let's compute the prefactor of the block which is associated to $d$ being a divisor of $\delta$. Equation \eqref{main10} sums over all preimages of $d$ under reduction of restricted sums via lemmas \ref{novanish} and \ref{vanish}. These preimages are given by the set of all $d\cdot \delta^{(l)}$, with $\delta^{(l)}\in\mathcal{C}_l\left(\frac{\delta}{d}\right)$ and $l\le\#p\left(\frac{\delta}{d}\right)$. Reducing $d\cdot \delta^{(l)}$ by $\delta^{(l)}$ to $d$ via lemmas \ref{novanish} and \ref{vanish} results, firstly, in a sign $(-1)^{l}$. Secondly, if we look at \eqref{main11} and \eqref{main12}, we observe that this contribution causes a prefactor of $\frac{1}{\delta^{(l)}}$. Thirdly, we have to account for the multiplicative constants \eqref{main7} arising from reducing $\delta$ to $d\cdot\delta^{(l)}$ via the claim. This factor is the last product in \eqref{main10}.

By definition it is clear that\footnote{We use $\psi(n)=[\Gamma:\Gamma_0(n)]=n\prod_{p|n\,\wedge\,p\in\mathds{P}}\left(1+\frac{1}{p}\right)$.}
\begin{equation}
\label{main16new}
	\frac{\psi(p\cdot d)}{\psi(d)}\frac{1}{p}=\left\{
		\begin{array}{cl} 1	\,,&  \mbox{if }p|d\\ 
						  \frac{p+1}{p}				\,,&  \mbox{if }p{\not|}\,d
		\end{array}\right.\,.
\end{equation}
This implies together with $p\cancel{\,|\,}\delta^{(l)}\cdot d\Leftrightarrow p\cancel{\,|\,}\delta^{(l)}\wedge\,p\cancel{\,|\,}d$ for all prime numbers $p\in\mathds{P}$ that
\begin{equation}
	\prod_{p|\delta\,\wedge\,p\cancel{\,|\,}\delta^{(l)}\cdot d\,\wedge\, p\in\mathds{P}}\frac{p+1}{p}=
	\prod_{p|\frac{\delta}{d}\,\wedge\,p\cancel{\,|\,}\delta^{(l)}\,\wedge\,p\in\mathds{P}}\frac{\psi(p\cdot d)}{\psi(d)}\frac{1}{p}\,.
\end{equation}
Thus, equation \eqref{main10} is equivalent to
\begin{equation}
\label{mainnew1}
	\sumset{1}{1}{\delta}{1}{C}(d)=\sum_{l=0}^{\# p\left(\frac{\delta}{d}\right)}\sum_{\delta^{(l)}\in\mathcal{C}_l\left(\frac{\delta}{d}\right)}(-1)^l\,\frac{1}{\delta^{(l)}}\,\prod_{p|\frac{\delta}{d}\,\wedge\,p\cancel{\,|\,}\delta^{(l)}\,\wedge\,p\in\mathds{P}}\frac{\psi(p\cdot d)}{\psi(d)}\frac{1}{p}
	\,.
\end{equation}
Multiplying out
\begin{equation}
\label{main14}
	\sumset{1}{1}{\delta}{1}{C}(d)=\prod_{p\left|\frac{\delta}{d}\right.}\limits\left(\frac{\psi(p\cdot d)}{\psi(d)}\frac{1}{p}-\frac{1}{p}\right)\,.
\end{equation}
yields exactly equation \eqref{mainnew1}. If we use
\begin{equation}
\label{main16}
	\frac{\psi(p\cdot d)}{\psi(d)}\frac{1}{p}-\frac{1}{p}=\left\{
		\begin{array}{cl} 1-\frac{1}{p}	\,,&  \mbox{if }p|d\\ 
						  1				\,,&  \mbox{if }p{\not|}\,d
		\end{array}\right.\,,
\end{equation}
we get \eqref{main2}. It is easy to see that $\sumset{1}{1}{\delta}{1}{A}$ is the inverse of the sum over all $\sumset{1}{1}{\delta}{1}{C}(d)$ as follows. The regulator in \eqref{master} has to cancel the divergent contributions from the orbit of non-zero matrices with vanishing determinant. Since it has to regulate all of them (belonging to various divisors of $d$), it must be the inverse of the sum of their multiplicative coefficients.

Hence, we took care of the orbits for non-vanishing matrices. Now let us look at the remaining contribution of the zero matrix. It is given by
\begin{equation}
\label{main17}
	\mathcal{I}_{\text{zero-matrix}}=\int_{R_{\Gamma_0(\delta)}} \hypmes{\tau}\, \frac{T_2}{\delta}=\frac{\pi}{3}\,T_2\, \frac{[\Gamma:\Gamma_0(\delta)]}{\delta}=\frac{\pi}{3}\,T_2\,\prod_{p|\delta\,\wedge\,p\in\mathds{P}}\left(1+\frac{1}{p}\right)
\end{equation}
\[
	\quad= T_2\,\frac{\pi}{3}\sum_{l=0}^{\#p(\delta)}\sum_{\delta^{(l)}\in\mathcal{C}_l(\delta)}\frac{1}{\delta^{(l)}}\,.
\]
To prove theorem \ref{main} it remains to show
\begin{equation}
\label{main18}
	T_2\,\frac{\pi}{3}\,\sum_{d|\delta}\,\frac{1}{d}\prod_{p|d\,\wedge\, p\left|\frac{\delta}{d}\right.}\left(1-\frac{1}{p}\right)=T_2\,\frac{\pi}{3}\sum_{l=0}^{\#p(\delta)}\sum_{\delta^{(l)}\in\mathcal{C}_l(\delta)}\frac{1}{\delta^{(l)}}\,.
\end{equation}
For $\delta=\delta_1\cdot\ldots\cdot\delta_n$, i.e.\ no prime factor $\delta_i$ occurs more than once, this equation is trivial. We will now show that if this equation holds for $\delta$, then it holds for $\delta_i\cdot\delta$, with $\delta_i|\delta$ and $\delta_i\in\mathds{P}$. This will give us the complete proof of theorem \ref{main}.

Let $\delta={\delta_1}^{x_1}\cdot\ldots\cdot{\delta_n}^{x_n}\in\mathds{Z}$, $\delta_i\in\mathds{P}$ with $\delta_i|\delta$ and let \eqref{main18} be true for $\delta$. We will show
\begin{equation}
\label{main19}
	\sum_{d|\delta\cdot\delta_i}\sumset{1}{1}{\delta_i\cdot\delta}{1}{C}(d)\,\frac{1}{d}=\sum_{d|\delta}\sumset{1}{1}{\delta}{1}{C}(d)\,\frac{1}{d}
\end{equation}
by comparing
\begin{equation}
\label{main20}
	C_\delta(d):=\prod_{p|d\,\wedge\,p|\frac{\delta}{d}\,\wedge\,p\in\mathds{P}}\left(1-\frac{1}{p}\right)\quad\text{and}\quad
	C_{\delta_i\delta}(d):=\prod_{p|d\,\wedge\,p|\frac{\delta_i\delta}{d}\,\wedge\,p\in\mathds{P}}\left(1-\frac{1}{p}\right)\,.
\end{equation}
It is clear that if there is some difference between the left-hand and the right-hand side in equation \eqref{main19}, it must be caused by $\delta_i$. Let us consider different cases.

The case $d{\not|}\,\delta$ and $d{\not|}\,\delta_i\cdot\delta$ is not interesting, while the case $d|\delta$ and $d{\not|}\,\delta_i\cdot\delta$ is a contradiction. 

Let now $d{\not|}\,\delta$ and $d|\delta_i\cdot\delta$. We denote the power of $\delta_i$ in the prime factorisation of $\delta$ by $p_{\delta_i}(\delta)=x_i$. Then $p_{\delta_i}(d)=x_i+1$ and $d=d'\cdot{\delta_i}^{x_i+1}$ with $d'\left|\frac{\delta}{{\delta_i}^{x_i+1}}\right.$. 

Since $d{\not|}\,\delta$, the term $C_\delta(d)$ is not present in the left-hand side of \eqref{main19}.

Because of $\delta_i{\not|}\,d'$ and $\delta_i{\not|}\,\frac{\delta\delta_i}{d'{\delta_i}^{x_i+1}}$ it is true that
\begin{equation}
\label{main21}
	C_{\delta\delta_i}(d)=C_{\delta\delta_i}(d'{\delta_i}^{x_i+1})=\prod_{p|d'{\delta_i}^{x_i+1}\,\wedge\,p\left|\frac{\delta\delta_i}{d'{\delta_i}^{x_i+1}}\right.\,\wedge\,p\in\mathds{P}}\left(1-\frac{1}{p}\right)=
\end{equation}
\[
	\quad\,\qquad =\prod_{p|d'\,\wedge\,p\left|\frac{\delta}{d'}\right.\,\wedge\,p\in\mathds{P}}\left(1-\frac{1}{p}\right)=
	C_\delta(d')
\]
Next, we consider the case $d|\delta$ and $d {\not|} \,\delta_i\delta$. From $\delta_i{\not|}\,d$ we deduce $C_\delta(d)=C_{\delta_i\delta}(d)$. If $\delta_i|d$ there are four cases:
\begin{enumerate}
\item  If $\delta_i\left|\frac{\delta}{d}\right.\,\wedge\,\delta_i\left|\frac{\delta_i\delta}{d}\right.$ it is evident that $C_\delta(d)=C_{\delta_i\cdot\delta}$.
\item  Considering $\delta_i\left|\frac{\delta}{d}\right.\,\wedge\,\delta_i\not\left|\frac{\delta_i\delta}{d}\right.$ yields a contradiction.
\item  Setting $\delta_i\not\left|\,\frac{\delta}{d}\right.\,\wedge\,\delta_i\not\left|\,\frac{\delta_i\delta}{d}\right.$ results in $p_{\delta_i}(d)=x_i$ and, thus,
\begin{equation}
\label{main22}
	C_\delta(d)\left(1-\frac{1}{\delta_i}\right)=C_{\delta_i\delta}(d)\,.
\end{equation}
\item  The last case $\delta_i\not\left|\,\frac{\delta}{d}\right.\,\wedge\,\delta_i\not\left|\,\frac{\delta_i\delta}{d}\right.$ is a contradiction since $\delta_i|d$ and $d|\delta$.
\end{enumerate}
Altogether, the difference between the left-hand and the right-hand side of equation \eqref{main19} reads
\begin{equation}
\label{main23}
	\sum_{\genfrac{}{}{0pt}{2}{d|\delta}{p_{\delta_i}(d)=x_i}}\frac{1}{d}\,C_\delta(d)-\sum_{\genfrac{}{}{0pt}{2}{d|\delta_i\delta}{p_{\delta_i}(d)=x_i}}\frac{1}{d}\,C_{\delta_i\delta}(d)-\sum_{\genfrac{}{}{0pt}{2}{d|\delta_i\delta}{p_{\delta_i}(d)=x_i+1}}\frac{1}{d}\,C_{\delta_i\delta}(d)=
\end{equation}
\[
	\quad=\sum_{\genfrac{}{}{0pt}{2}{d|\delta}{p_{\delta_i}(d)=x_i}}\frac{1}{d}\,C_\delta(d)-\left(1-\frac{1}{\delta_i}\right)\sum_{\genfrac{}{}{0pt}{2}{d|\delta}{p_{\delta_i}(d)=x_i}}\frac{1}{d}\,C_\delta(d)-\sum_{\genfrac{}{}{0pt}{2}{d'|\delta}{p_{\delta_i}(d')=0}}\frac{1}{d'{\delta_i}^{x_i+1}}\,C_\delta(d')
\]
\[
	\quad=\frac{1}{\delta_i}\,\sum_{\genfrac{}{}{0pt}{2}{d|\delta}{p_{\delta_i}(d)=x_i}}\frac{1}{d}\,C_\delta(d)-\frac{1}{\delta_i}\sum_{\genfrac{}{}{0pt}{2}{d|\delta}{p_{\delta_i}(d)=x_i}}\frac{1}{d}\,C_\delta(d)=0
\]
This proves \eqref{main19}. If \eqref{main18} is true for $\delta$ it follows that it is true for $\delta_i\delta$, with $\delta_i|\delta$ and $\delta_i\in\mathds{P}$. Since it is true for $\delta=\delta_1\cdot\ldots\cdot\delta_n$ we can infer that it is also true for all $\delta\in\mathds{Z}$ and we have proven theorem \ref{main}. 

\end{proof}
Let us analyse the symmetries of equation \eqref{main1}. The building block
\[
	\f{}{d}
\]
has the symmetry $\Gamma(1/d,d)=\Gamma_0(1/d)\cap \Gamma^0(d)$ acting on $T$ or $U$. Hence, the sum over all these building blocks, i.e. equation \eqref{main1}, is symmetric under
\begin{equation}
\label{mainsym1}
	\bigcap_{d|\delta}\Gamma\left(\frac{1}{d},d\right)=\Gamma^0(\delta)
\end{equation}
acting on $T$ and $U$ independently. But there are more symmetries of eq. \eqref{main1}. If we examine eq. \eqref{main2}, we observe that 
\begin{equation}
\label{mainsym2}
	\sumset{1}{1}{\delta}{1}{C}(d)=\sumset{1}{1}{\delta}{1}{C}\left(\frac{\delta}{d}\right)\,.
\end{equation}
Thus, \eqref{main1} admits the additional involutive symmetries
\begin{align}
	&\sumset{1}{1}{\delta}{1}{\mathfrak{T}}: T\mapsto T'=-\frac{\delta}{T} \quad \text{and}\\
	&\sumset{1}{1}{\delta}{1}{\mathfrak{U}}: U\mapsto U'=-\frac{\delta}{U}
	\;.
\end{align}
This is a generalisation of what is commonly denoted as T-duality. Note in particular, that, although this looks very similar to the usual T-duality, this symmetry is \emph{not} a modular transformation (except for the case $\delta=1$). It also possesses different self-dual points than the common T-duality. We will investigate physical consequences of this fact in another work and for specific models \cite{PalKlapUnpub}.

Last but not least, there is yet another symmetry which interchanges the role of $T$ and $U$. It is given by 
\begin{equation}
\sumset{1}{1}{\delta}{1}{\mathfrak{M}}:(T,U)\mapsto (T',U')=(U,T)
\;.
\end{equation}
It corresponds to the mirror map acting on the fixed plane.

Hence, the complete symmetry group $\sumset{1}{1}{\delta}{1}{\mathfrak{S}}$ of equation \eqref{main1} is given by
\begin{equation}
	\label{mainsym3}
	\sumset{1}{1}{\delta}{1}{\mathfrak{S}}=
\left[\left(\Gamma^0(\delta)\ast \sumset{1}{1}{\delta}{1}{\mathfrak{T}}\right)_T\times\left(\Gamma^0(\delta)\ast \sumset{1}{1}{\delta}{1}{\mathfrak{U}}\right)_U\right] \ast \sumset{1}{1}{\delta}{1}{\mathfrak{M}}\,,
\end{equation}
where $\ast$ denotes the free product of groups.
\subsection{Reduction of $\alpha,\,\beta,\,\gamma,\,\delta\in\mathds{Q}$ to $\alpha=\beta=\gamma=1$ and $\delta\in\mathds{Z}$}

In the last section we computed \eqref{master} for $\alpha=\beta=\gamma=1$ and $\delta\in\mathds{Z}$. Considering the more general case $\alpha=\beta=\gamma=1$, $\delta\in\mathds{Q}$ one faces difficulties which turn out to be so severe that they seem to leave no hope for a direct solution. Matters even get worse considering the most general case $\alpha,\beta ,\gamma ,\delta\in\mathds{Q}$. Summing over all matrices with fractional entries appears to be even more complicated than summing over integers has been. Fortunately, in the special case at hand it is not. As we will show in this section, it is possible to reduce all cases to the case $\alpha=\beta=\gamma=1$, $\delta\in\mathds{Z}$ by transforming the moduli appropriately.
 
 The starting point of our consideration is given by the expression of the one-loop partition function on the world-sheet which is associated to the boundary condition $\left(1,\theta^{l_k}\right)$. The most general form it can take is (cf. \eqref{master} before Poisson resummation)
\begin{equation}
\label{55}
\begin{split}
	Z^{\text{one-loop}}_{\left(1,\theta^{l_k}\right)}\left(\tau\right)=&\sum_{\genfrac{}{}{0pt}{2}{n_1,n_2\in\mathds{Z} }{m_1,m_2\in\mathds{Z}}}\ex{2\pi i \,\tau\left(\gamma m_1 \,\alpha n_1+\delta m_2\,\beta n_2 \right)}\\&\times\exp\left[-\frac{\pi\,\tau_2}{T_2\,U_2}\left| T U \,\beta n_2+T \,\alpha n_1 -\gamma m_1\,U+\delta m_2\right|^2\right]\,.
\end{split}
\end{equation}
If we define
\begin{equation}
\label{56}
	A:=\begin{pmatrix} \alpha n_1 & \,\,\delta m_2 \\\beta n_2 & -\gamma m_1  \end{pmatrix}\,,
\end{equation}
\eqref{55} can be written as
\begin{equation}
\label{57}
	Z^{\text{one-loop}}_{\left(1,\theta^{l_k}\right)}\left(\tau\right)=\sum_{A\in\sumset{\alpha}{\beta}{1/\gamma}{1/\delta}{\mathds{M}}}\ex{-2\pi i \,\tau\, \det{A}}\,\exp\left[-\frac{\pi\,\tau_2}{T_2\,U_2}\left| \begin{pmatrix} 1 & U \end{pmatrix} A \begin{pmatrix} T\\1 \end{pmatrix}\right|^2\right]\,.
\end{equation}
Furthermore, let us define
\begin{equation}
\label{58}
\begin{split}
	\mathcal{H}:&=-\frac{\pi\,\tau_2}{T_2\,U_2} \left|\begin{pmatrix} 1 & U \end{pmatrix} A \begin{pmatrix} T\\1 \end{pmatrix}\right|^2\\&=-\frac{\pi\,\tau_2}{T_2\,U_2}\left| T U \,\beta n_2+T \,\alpha n_1 -\gamma m_1\,U+\delta m_2\right|^2\quad \text{and}
\end{split}
\end{equation}
\begin{equation}
\label{59}
	\mathcal{S}:=-2\pi i \,\tau\, det{A}=2\pi i \,\tau\left(\gamma m_1 \,\alpha n_1+\delta m_2\,\beta n_2\right)\,.
\end{equation}
The first reduction we are going to perform becomes visible if we look at
\begin{equation}
\label{60}
	\mathcal{S}=2\pi i \,\tau\left(\alpha\gamma m_1 \, n_1+\beta\delta m_2\, n_2\right)\,.
\end{equation}
It suggests that one can reduce $\left(\alpha,\beta,\gamma,\delta\right)$ to $\left(1,1,\alpha\gamma,\beta\delta\right)$. To establish this, we have to show that $\mathcal{H}$ can be reduced consistently.

It is
\begin{equation}
\label{61}
	\mathcal{H}=-\frac{\pi\,\tau_2}{T_2\,U_2}\left| T U \,\beta n_2+T \,\alpha n_1 -\gamma m_1\,U+\delta m_2\right|^2=
\end{equation}
\[
	\quad\,=-\frac{\pi\,\tau_2}{\alpha \beta T_2\,\frac{\beta}{\alpha}U_2}\left| \alpha\beta\,T\,\frac{\beta}{\alpha} U \, n_2+\alpha\beta\,T \,n_1 -\alpha\gamma\, m_1\,\frac{\beta}{\alpha}U+\beta\delta\, m_2\right|^2\,.
\]
If we rescale the moduli as
\begin{align}
\label{62}
	&T\longmapsto T'=\alpha \beta\,T\\
\label{63}
	&U\longmapsto U'=\frac{\beta}{\alpha}\, U
\end{align}
we see that it is possible to consistently reduce $\left(\alpha,\beta,\gamma,\delta\right)$ to $\left(1,1,\tilde\gamma,\tilde\delta\right)$, where $\tilde\gamma=\alpha\gamma$ and $\tilde\delta=\beta\delta$.

On the other hand, if we perform the rescaling
\begin{align}
\label{65}
	&T\longmapsto T'=\frac{\alpha' \beta'}{\gamma'\delta'}\,T\quad\text{and}\\
\label{66}
	&U\longmapsto U'=\frac{\gamma'\beta'}{\alpha'\delta'}\, U\,,
\end{align}
we see that the more general reduction
\begin{equation}
\label{64}
	\left(\alpha'\alpha,\beta'\beta,\gamma'\gamma,\delta'\delta\right)\longmapsto \left(\gamma'\alpha,\delta'\beta,\alpha'\gamma,\beta'\delta\right)
\end{equation}
takes place.

The first reduction can be recovered from the second by setting $\gamma'=\delta'=\alpha=\beta=1$ and $\alpha'\,\mapsto\,\alpha$ as well as $\beta'\,\mapsto\,\beta$. 

Therefore, as we showed in section \ref{genset}, the Poisson resummed version of \eqref{55} has to posses the modular symmetry group \eqref{40} after appropriate rescaling of the moduli.

This is important since we integrate $\tau_2 Z_{\left(1,\theta^{l_k}\right)}^{\text{one-loop}}$ over a fundamental domain of this symmetry group.

However, we are still left with the case $(1,1,\tilde\gamma ,\tilde\delta )$ with $\tilde\gamma ,\tilde\delta\in\mathds{Q}$ - posing the mentioned difficulties when trying to sum over matrices with rational entries. To further reduce this case, let us examine how we can deal with common factors in $\tilde\gamma ,\tilde\delta\in\mathds{Q}$. Since $\tilde\gamma$ and $\tilde\delta$ are rational numbers, they can be written
\begin{equation}
\label{67}
	\tilde\gamma=\sumfrac{\tilde\gamma} \quad \text{and} \quad \tilde\delta=\sumfrac{\tilde\delta}\,\qquad\qquad\text{with}\;\gcd{(u_{\tilde\gamma},v_{\tilde\gamma})}=\gcd{(u_{\tilde\delta},v_{\tilde\delta})}=1.
\end{equation}
To determine the sought common factor, we modify \eqref{67} to
\begin{equation}
\label{68}
\begin{split}
	&\tilde\gamma=\frac{\gcd\left(v_{\tilde\delta}\,u_{\tilde\gamma},v_{\tilde\gamma}\,u_{\tilde\delta}\right)}{v_{\tilde\gamma}v_{\tilde\delta}}\,\frac{v_{\tilde\delta}\,u_{\tilde\gamma}}{\gcd\left(v_{\tilde\delta}\,u_{\tilde\gamma},v_{\tilde\gamma}\,u_{\tilde\delta}\right)}\quad \text{and}\\
	&\tilde\delta=\frac{\gcd\left(v_{\tilde\delta}\,u_{\tilde\gamma},v_{\tilde\gamma}\,u_{\tilde\delta}\right)}{v_{\tilde\gamma}v_{\tilde\delta}}\,\frac{v_{\tilde\gamma}\,u_{\tilde\delta}}{\gcd\left(v_{\tilde\delta}\,u_{\tilde\gamma},v_{\tilde\gamma}\,u_{\tilde\delta}\right)}\,.
\end{split}
\end{equation}
If we now define
\begin{equation}
\label{71}
	\lambda:=\frac{\gcd\left(v_{\tilde\delta}\,u_{\tilde\gamma},v_{\tilde\gamma}\,u_{\tilde\delta}\right)}{v_{\tilde\gamma}v_{\tilde\delta}}\,,\quad \overline\gamma:=\frac{v_{\tilde\delta}\,u_{\tilde\gamma}}{\gcd\left(v_{\tilde\delta}\,u_{\tilde\gamma},v_{\tilde\gamma}\,u_{\tilde\delta}\right)}\quad \text{and}\quad\overline\delta:=\frac{v_{\tilde\gamma}\,u_{\tilde\delta}}{\gcd\left(v_{\tilde\delta}\,u_{\tilde\gamma},v_{\tilde\gamma}\,u_{\tilde\delta}\right)}
\end{equation}
equation \eqref{68} yields
\begin{equation}
\label{72}
	\alpha\,\gamma=\tilde\gamma=\lambda\,\overline\gamma \quad \text{and} \quad \beta\,\delta=\tilde\delta=\lambda\,\overline\delta\,.
\end{equation}
By definition \eqref{67} it holds that $\gcd\left(\overline\gamma,\overline\delta\right)=1$. Furthermore, inspection of \eqref{71} yields $\overline\gamma,\overline\delta\in\mathds{Z}$.

Because of \eqref{72} it is clear that the symmetry of $\tau_2 Z_{\left(1,\theta^{l_k}\right)}^{\text{one-loop}}$ can also be expressed in terms of $\lambda$, $\overline\gamma$ and $\overline\delta$. If we write $\lambda$ as 
\begin{equation}
	\lambda=\sumfraca{\lambda}
\end{equation}
and look at a matrix multiplication similar to \eqref{38} we find that\footnote{Recall that $x\,y=\lcm\left(x,y\right)\,\gcd\left(x,y\right)$.} 
\begin{equation}
\label{88}
\begin{split}
	&\lcm\left(\frac{v_\lambda}{\gcd\left(v_\lambda,\overline\gamma\right)},\frac{v_\lambda}{\gcd\left(v_\lambda,\overline\delta\right)}\right)= \frac{v_\lambda}{\gcd\left(\gcd\left(v_\lambda,\overline\gamma\right),\gcd\left(v_\lambda,\overline\delta\right)\right)}=\\&\quad=v_\lambda=:\overline\mu\,|\,b \quad \text{and}
\end{split}
\end{equation}
\begin{equation}
\label{89}
\begin{split}
	&\lcm \left( \frac{u_\lambda \, \overline\gamma}{\gcd\left(u_\lambda\,\overline\gamma,v_\lambda\right)},\frac{u_\lambda\,\overline\delta}{\gcd\left(u_\lambda\,\overline\delta,v_\lambda\right)} \right)=u_\lambda\,\lcm\left( \frac{\overline\gamma}{\gcd\left(\overline\gamma,v_\lambda\right)},\frac{\overline\delta}{\gcd\left(\overline\delta,v_\lambda\right)} \right)=\\
	&\quad=u_\lambda\frac{\overline\gamma}{\gcd\left(\overline\gamma,v_\lambda\right)}\,\frac{\overline\delta}{\gcd\left(\overline\delta,v_\lambda\right)}=\frac{v_\lambda}{\gcd\left(v_\lambda,\overline\gamma\,\overline\delta\right)}\,\lambda\,\overline\gamma\,\overline\delta=:\epsilon\,\lambda\,\overline\gamma\,\overline\delta=:\overline\nu\,|\,c\,,
\end{split}
\end{equation}
where use has been made of $\gcd\left(\overline\gamma,\overline\delta\right)=\gcd\left(u_\lambda,v_\lambda\right)=1$, and by definition $\epsilon\in\mathds{Z}$. This means 
\begin{equation}
\label{90}
	\sumset{1}{1}{\lambda\overline\delta}{\lambda\overline\gamma}{\mathds{M}}\cdot P=\sumset{1}{1}{\lambda\overline\delta}{\lambda\overline\gamma}{\mathds{M}} \quad \text{for} \quad P\in\Gamma\left(\overline\mu,\overline\nu\right)\,.
\end{equation}
In order to get rid of the common factor $\lambda$, we will have to reduce
\begin{equation}
	\left(1,1,\lambda\,\overline\gamma,\lambda\,\overline\delta\right)\longmapsto \left(1,1,\lambda,\lambda\, \overline\gamma\, \overline\delta\right)\quad \text{if} \quad \gamma<\delta\quad \text{and}
\end{equation}
\begin{equation}
	\left(1,1,\lambda\,\overline\gamma,\lambda\,\overline\delta\right)\longmapsto \left(1,1,\lambda \,\overline\gamma\, \overline\delta,\lambda\right)\quad \text{if} \quad \gamma>\delta\,.
\end{equation}
This can be achieved by making use of the Smith normal form \cite{smithnf}. The theorem we will use is given by

\begin{thm}[Smith Normal Form]\nopagebreak
\label{SNF}
	Let $M\in\mathds{Z}^{2\times2}$. Then there exist invertible matrices $P\in\GL{2}{Z}$ and $Q\in\GL{2}{Z}$ such that 
	\begin{equation}
	\label{73}
		P\cdot M\cdot Q = \begin{pmatrix} \gamma_1 & 0 \\ 0 & \gamma_2 \end{pmatrix}\,,
	\end{equation}
	with $\gamma_1|\gamma_2$. The numbers $\gamma_1$ and $\gamma_2$ are the elementary divisors of $M$. 
\end{thm}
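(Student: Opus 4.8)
The proof is the classical Euclidean-algorithm argument specialised to $2\times 2$ matrices, using the fact that elementary row and column operations over $\mathds{Z}$ are realised by left- and right-multiplication by matrices in $\GL{2}{Z}$. Concretely, interchanging the two rows (columns), negating a row (column), and adding an integer multiple of one row (column) to the other are each given by left- (right-) multiplication by a suitable element of $\GL{2}{Z}$, and any composition of such operations again lies in $\GL{2}{Z}$. If $M=0$ we take $P=Q=\mathds{1}$ and $\gamma_1=\gamma_2=0$, so assume $M\neq 0$. I would then introduce the set $S:=\{\,|(PMQ)_{11}|\;:\;P,Q\in\GL{2}{Z},\ (PMQ)_{11}\neq 0\,\}$; since some entry of $M$ is nonzero and can be moved into the upper-left corner by row and column swaps, $S$ is a nonempty set of positive integers, and I set $\gamma_1:=\min S$, attained by $M^{*}:=P_0MQ_0$.

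The heart of the argument is the claim that $\gamma_1$ divides every entry of $M^{*}$. If $\gamma_1\nmid(M^{*})_{12}$, Euclidean division gives $(M^{*})_{12}=q\gamma_1+r$ with $0<r<\gamma_1$; subtracting $q$ times the first column from the second and then interchanging columns produces an element of the $\GL{2}{Z}$-orbit of $M$ whose upper-left entry is $r$, contradicting the minimality of $\gamma_1$. The same reasoning applied to rows rules out $\gamma_1\nmid(M^{*})_{21}$. Assuming these two divisibilities, one clears the $(1,2)$ and $(2,1)$ entries by column and row operations, arriving at $\diag(\gamma_1,d)$; if $\gamma_1\nmid d$, then adding the second row to the first gives a matrix with upper-left entry $\gamma_1$ and $(1,2)$-entry $d$, and one further column reduction yields an orbit element with upper-left entry of absolute value strictly below $\gamma_1$ — again a contradiction. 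Hence $\gamma_1$ divides $(M^{*})_{12}$, $(M^{*})_{21}$ and $(M^{*})_{22}$.

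With the divisibility in hand I would subtract $(M^{*})_{12}/\gamma_1$ times the first column from the second and $(M^{*})_{21}/\gamma_1$ times the first row from the second, reaching $\diag(\gamma_1,\gamma_2)$ with $\gamma_2=(M^{*})_{22}-(M^{*})_{12}(M^{*})_{21}/\gamma_1$, which is congruent to $(M^{*})_{22}$ modulo $\gamma_1$ and therefore divisible by $\gamma_1$; thus $\gamma_1\mid\gamma_2$. Collecting all the row operations into one matrix $P\in\GL{2}{Z}$ and all the column operations into one matrix $Q\in\GL{2}{Z}$ gives \eqref{73}. For the assertion that $\gamma_1,\gamma_2$ are the elementary divisors, I would note that left and right multiplication by $\GL{2}{Z}$ matrices preserves the greatest common divisors of the $1\times 1$ and $2\times 2$ minors, so $\gamma_1=\gcd$ of the entries of $M$ and $\gamma_1\gamma_2=|\det M|$, making $\gamma_1$ and $\gamma_2$ intrinsic invariants of $M$. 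The result is entirely standard and the argument routine; the only mildly delicate points are checking that each failed divisibility test genuinely lowers the positive integer $|(\cdot)_{11}|$, so that minimality is really violated, and verifying that clearing the off-diagonal entries preserves $\gamma_1\mid\gamma_2$. There is no substantive obstacle.
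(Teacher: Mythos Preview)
Your proof is correct and is the standard Euclidean-algorithm argument for the Smith normal form. Note, however, that the paper does not actually prove this theorem: it is stated as a classical result with a reference to the literature (\cite{smithnf}) and then applied, so there is no proof in the paper to compare against. Your argument is exactly the kind of proof one finds in standard references, and the invariance of the determinantal divisors under $\GL{2}{Z}$ equivalence that you mention at the end is precisely what justifies calling $\gamma_1,\gamma_2$ \emph{the} elementary divisors of $M$.
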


The fact that the matrices $P$ and $Q$ are invertible and integral valued is equivalent to their determinant being $1$ or $-1$. Theorem \ref{SNF} tells us that there exist matrices $P$ and $Q$ such that
\begin{equation}
\label{74}
	\begin{pmatrix} \overline\gamma & 0 \\ 0 & \overline\delta  \end{pmatrix}=P^{-1} \cdot \begin{pmatrix} 1 & 0 \\ 0 & \overline\gamma\overline\delta \end{pmatrix} \cdot Q^{-1}\,, 
\end{equation}
where we used $\gcd\left(\overline\gamma,\overline\delta\right)=1$. If $\overline\gamma<\overline\delta$ it follows that $\det P = \det Q = 1$ and $\overline\gamma>\overline\delta$ implies $\det P = \det Q = -1$. In the latter case we use
\begin{equation}
\label{75}
	\begin{pmatrix} \overline\gamma & 0 \\ 0 & \overline\delta  \end{pmatrix} =\left(RP\right)^{-1} \cdot \begin{pmatrix} \overline\gamma\overline\delta & 0 \\ 0 & 1 \end{pmatrix} \cdot \left(QR\right)^{-1}
\end{equation}
to replace \eqref{74}, where we used a matrix
\begin{equation}
\label{76}
	R:=\begin{pmatrix} 0 & 1 \\ 1 & 0  \end{pmatrix}\,.
\end{equation}
Since $\det R = -1$ it is evident that $\det RP=\det QR=1 $. Having reduced to the case $(1,1,\lambda\overline\gamma,\lambda\overline\delta)$, $\mathcal{H}$ reads
\begin{equation}
\label{77}
\begin{split}
	\mathcal{H}&=-\frac{\pi\,\tau_2}{T'_2\,U'_2} \left|\begin{pmatrix} 1 & U' \end{pmatrix}\cdot \begin{pmatrix} n_1 & \,\,\,\lambda\,\overline\delta\, m_2 \\ n_2 & -\lambda\,\overline\gamma\, m_1  \end{pmatrix}\cdot \begin{pmatrix} T'\\1 \end{pmatrix}\right|^2=\\
	 &=-\frac{\pi\,\tau_2}{T'_2\,U'_2} \left|\begin{pmatrix} 1 & U' \end{pmatrix}\cdot \begin{pmatrix} n_1 & \,\,\,\overline\delta\, m_2 \\ n_2 & -\overline\gamma\, m_1  \end{pmatrix}\cdot \begin{pmatrix} 1 & 0 \\ 0 & \lambda \end{pmatrix}\cdot \begin{pmatrix} T'\\1 \end{pmatrix}\right|^2=\\
	 &=-\frac{\pi\,\lambda\tau_2}{\left(T_2/\lambda\right)\,U'_2} \left|\begin{pmatrix} 1 & U' \end{pmatrix}\cdot \begin{pmatrix} n_1 & \,\,\,\overline\delta\, m_2 \\ n_2 & -\overline\gamma\, m_1  \end{pmatrix}\cdot \begin{pmatrix} T'/\lambda\\1 \end{pmatrix}\right|^2\,.
\end{split}
\end{equation}
Let w.l.o.g.\ $\overline\gamma<\overline\delta$. Then it holds that\footnote{Recall that $S:=\begin{pmatrix}0 & -1\\1 & 0\end{pmatrix}$.}
\begin{equation}
\label{78}
\begin{split}
	&\begin{pmatrix} n_1 & \,\,\,\overline\delta\, m_2 \\ n_2 & -\overline\gamma\, m_1  \end{pmatrix}=\begin{pmatrix} n_1 & 0 \\ n_2 & 0 \end{pmatrix} + \begin{pmatrix} 0 & 1 \\ -1 & 0 \end{pmatrix}\cdot \begin{pmatrix} \overline\gamma & 0 \\ 0 & \overline\delta \end{pmatrix}\cdot \begin{pmatrix} 0 & m_1 \\ 0 & m_2 \end{pmatrix}=\\
	&=\left(S^{-1} P S\right)^{-1}\left(\left(S^{-1}PS\right) \cdot \begin{pmatrix} n_1 & 0 \\ n_2 & 0 \end{pmatrix}+ S^{-1} \cdot \begin{pmatrix} 1 & 0 \\ 0 & \overline\gamma\overline\delta \end{pmatrix} \cdot Q^{-1}\cdot \begin{pmatrix} 0 & m_1 \\ 0 & m_2 \end{pmatrix} \right)\;.
\end{split}
\end{equation}
If we define
\begin{align}
\label{79}
 &\begin{pmatrix} n'_1 \\ n'_2 \end{pmatrix}:=\left(S^{-1}PS\right) \cdot \begin{pmatrix} n_1 \\ n_2 \end{pmatrix}\,,\\
 &\begin{pmatrix} m'_1 \\ m'_2 \end{pmatrix}:=Q^{-1}\cdot \begin{pmatrix} m_1 \\ m_2 \end{pmatrix}
\end{align}
and\footnote{Here $\left(\left(S^{-1} P S\right)^{-1}\right)^{\sharp}$ acts as a modular transformation on $U$. For the definition of $\sharp$ cf. \eqref{wilsonpart6}.}
\begin{equation}
 U'':=\left(\left(S^{-1} P S\right)^{-1}\right)^{\sharp} U'\,,
\end{equation}
then the combination of \eqref{59}, \eqref{77} and \eqref{78} yields
\begin{equation}
\label{80}
 \mathcal{S}=-2\pi i \tau \det A = 2\pi i \tau\left( \lambda m'_1 n'_1+ \lambda\,\overline\gamma\,\overline\delta\, m'_2 n'_2\right)
\end{equation}
and
\begin{equation}
\label{81}
 \mathcal{H}=-\frac{\pi\,\tau_2}{T'_2\,U''_2} \left|\begin{pmatrix} 1 & U'' \end{pmatrix}\cdot \begin{pmatrix} n'_1 & \,\,\lambda\overline\gamma\,\overline\delta\, m'_2 \\ n'_2 & - \lambda\, m'_1  \end{pmatrix}\cdot \begin{pmatrix} T''\\1 \end{pmatrix}\right|^2\,.
\end{equation}
If $\overline\gamma>\overline\delta$, we replace $P$ by $RP$ and $Q$ by $QR$
and follow the same line of reasoning.

But now another question arises. What is the summation domain of $n'_1$, $n'_2$, $m'_1$ and $m'_2$? To answer this question let us parameterise $S^{-1}PS$ and $Q^{-1}$ as 
\begin{equation}
\label{82}
 S^{-1}PS=\begin{pmatrix} a_{11} & a_{12} \\ a_{21} & a_{22} \end{pmatrix}
\end{equation}
and
\begin{equation}
\label{83}
 Q^{-1}= \begin{pmatrix} b_{11} & b_{12} \\ b_{21} & b_{22} \end{pmatrix}\,.
\end{equation}
Then the primed variables can be written as 
\begin{equation}
\label{84}
 n'_i=a_{i1}\,n_1+a_{i2}\,n_2\quad \text{and}
\end{equation}
\begin{equation}
\label{85}
 m'_i=b_{i1}\,n_1+b_{i2}\,n_2\,.
\end{equation}
Since $\left(S^{-1}PS\right)\in\SL{2}{Z}$ and $Q^{-1}\in\SL{2}{Z}$
(see comment below theorem \ref{SNF}), 
it follows that $\gcd\left(a_{i1},a_{i2}\right)=\gcd\left(b_{i1},b_{i2}\right)=1$ for $i=1,2$\footnote{This holds true, because the Diophantine equations $\det{\left(S^{-1}PS\right)}=1$ and $\det{Q^{-1}}=1$ posses solutions.}. Since we had to sum the unprimed variables over $\mathds{Z}$, it follows that we have to sum the primed variables over $\mathds{Z}$, too. Again we have to ensure that we did not change the symmetry of $\tau_2 Z_{\left(1,\theta^{l_k}\right)}^{\text{one-loop}}$. Therefore, we have to examine whether $\sumset{1}{1}{\lambda\overline\gamma\overline\delta}{\lambda}{\mathds{M}}\cdot P=\sumset{1}{1}{\lambda\overline\gamma\overline\delta}{\lambda}{\mathds{M}}$ for $P\in\Gamma\left(\overline\mu,\overline\nu\right)$ (cf. \eqref{88}, \eqref{89} and \eqref{90}). This means for all $M_1\in\sumset{1}{1}{\lambda\overline\gamma\overline\delta}{\lambda}{\mathds{M}}$ and $P\in\Gamma\left(\overline\mu,\overline\nu\right)$ there has to exist a matrix $M_2\in\sumset{1}{1}{\lambda\overline\gamma\overline\delta}{\lambda}{\mathds{M}}$ such that $M_1\cdot P = M_2$. Let us look at such a matrix multiplication:
\begin{equation}
\label{91}
 \begin{pmatrix}
  n_1 & \frac{1}{\lambda}\,l_1\\
  n_2 & \frac{1}{\lambda\,\overline{\gamma}\,\overline\delta}\,l_2
 \end{pmatrix} \cdot
 \begin{pmatrix}
  a & b \\
  c & d
 \end{pmatrix}=
 \begin{pmatrix}
  n_1 \, a + l_1 \, \frac{1}{\lambda} \, c & \frac{1}{\lambda} \left(n_1 \, \lambda \, b + l_1 \, d\right)\\
  n_2 \, a + l_2 \, \frac{1}{\lambda \, \overline\gamma \, \overline{\delta}} \, c & \frac{1}{\lambda \, \overline\gamma \, \overline\delta} \left(n_1 \, \lambda \, \overline\gamma \, \overline\delta \, b + l_1 \, d\right)
 \end{pmatrix}
\end{equation}
Here we can read off
\begin{equation}
\label{92}
 v_\lambda\,| \,b \quad \text{and}
\end{equation}
\begin{equation}
\label{93}
 \lcm\left(u_\lambda,\frac{u_\lambda\,\overline\gamma\,\overline\delta}{\gcd\left(\overline\gamma\,\overline\delta\right)}\right)=u_\lambda\frac{\overline\gamma\,\overline\delta}{\gcd\left(\overline\gamma\,\overline\delta\right)}=u_\lambda\,\frac{\overline\gamma}{\gcd\left(\overline\gamma,v_\lambda\right)}\,\frac{\overline\delta}{\gcd\left(\overline\delta,v_\lambda\right)}\,|\,c\,,
\end{equation}
where we used $\lcm\left(z\,x,z\,y\right)=z\lcm\left(x,y\right)$ and $\gcd\left(x\,y,z\right)=\gcd\left(x,z\right)\gcd\left(y,z\right)$ for

$\gcd\left(x,y\right)=1$ and $x,y,z>0$. Since \eqref{92} and \eqref{93} coincides with \eqref{88} and \eqref{89}, the symmetry does not change.

Now we are ready for the last reduction. Let again w.l.o.g $\overline\gamma<\overline\delta$. We will show that
\begin{equation}
\label{69}
 \left(1,1,\lambda,\lambda\,\overline\gamma\,\overline\delta\right)\longmapsto \left(1,1,1,\overline\gamma\,\overline\delta\right)
\end{equation}
if we rescale $T'$ as
\begin{equation}
\label{70}
 T'\longmapsto T''=\frac{T'}{\lambda}\,.
\end{equation}
For that purpose let us look at the partition function which is associated to $\left(1,\theta^{l_k}\right)$. Up to now we have shown that it can be written as
\begin{equation}
\label{86}
 \tau_2 Z_{\left(1,\theta^{l_k}\right)}^{\text{one-loop}}\left(\tau\right)=\sum_{\sumset{1}{1}{\lambda\overline\gamma\overline\delta}{\lambda}{\mathds{M}}}\ex{-2\pi i \,T'\det{A}}\,\frac{T'_2}{\lambda\overline\gamma \lambda\overline\delta}\,\exp\left[-\frac{\pi\,T'_2}{\tau_2\,U''_2}\left| \begin{pmatrix} 1 & U'' \end{pmatrix} A \begin{pmatrix} \tau\\1 \end{pmatrix}\right|^2\right]\,.
\end{equation}
This is equivalent to
\begin{equation}
\label{87}
 \tau_2 Z_{\left(1,\theta^{l_k}\right)}^{\text{one-loop}}\left(\tau\right)=\frac{1}{\lambda}\sum_{A\in\sumset{1}{1}{\overline\gamma\overline\delta}{1}{\mathds{M}}}\ex{-2\pi i \,\left(T'/\lambda\right)\det{A}}\,\frac{T'_2/\lambda}{\overline\gamma\overline\delta}\,\exp\left[-\frac{\pi\,\left(T'_2/\lambda\right)}{\lambda\tau_2\,U''_2}\left| \begin{pmatrix} 1 & U'' \end{pmatrix} A \begin{pmatrix} \lambda\tau\\1 \end{pmatrix}\right|^2\right]
\end{equation}
and nearly what we wanted to achieve. But how can we deal with $\lambda\tau$? First we have to observe that $A$ is of the form
\begin{equation}
\label{94}
 A=\begin{pmatrix} n_1 & l_1 \\ n_2 & \frac{1}{\overline\gamma\,\overline\delta} \end{pmatrix}\,.
\end{equation}
In the last section we gave a procedure to compute a sum over these matrices assuming that the integration domain is $R_{\Gamma_0\left(\overline\gamma\overline\delta\right)}$. One crucial point was the possibility to reinterpret a matrix multiplication of representative matrices with matrices in $\Gamma^0\left(\overline\gamma\overline\delta\right)$ as a modular transformation of $\tau$ by an element in $\Gamma_0\left(\overline\gamma\,\overline\delta\right)$. We only used two universal properties. Firstly, a fundamental domain of $\Gamma_0\left(\overline\gamma\,\overline\delta\right)$ is defined as a maximal inequivalent set of complex numbers $\tau\in\mathds{H}^+$ and by acting with $\Gamma_0\left(\overline\gamma\,\overline\delta\right)$ on $R_{\Gamma_0\left(\overline\gamma\,\overline\delta\right)}$ we get the whole complex plane. Secondly, the contributions of representative matrices with non-vanishing determinant have to be integrated over $\mathds{H}^+$, while those with vanishing determinant over $\mathds{H}^+/\langle T \rangle$. The latter resulted from the fact that two matrices $P_1,P_2\in\Gamma_0\left(\overline\gamma\,\overline\delta\right)$ lead to the same matrix with vanishing determinant, $A_0\cdot P_1=A_0\cdot P_2$, if these two matrices are connected by an element of $\langle T \rangle$. Now, we will show that the same holds true here (up to a multiplicative constant). The crucial point is that we act with a matrix $P\in\Gamma_0\left(\overline\gamma\,\overline\delta\right)$ not on $\tau$, but on $\lambda\tau$ as a modular transformation. This means
\begin{equation}
\label{95}
 P_{\lambda\tau} \lambda\tau=\frac{a\,\lambda\tau+b}{\overline\gamma\,\overline\delta\,c\,\lambda\tau+d}=\lambda\,\frac{a\tau+\frac{b}{\lambda}}{\lambda\,\overline\gamma\,\overline\delta\,\tau+d}=\lambda\, P'_\tau \, \tau\,,
\end{equation}
which shows that we can alternatively act with a transformation $P'\in\Gamma\left(\frac{1}{\lambda},\lambda\,\overline\gamma\,\overline\delta\right)$ on $\tau$ and rescale afterwards by $\lambda$. The next problem which arises is the fact that we integrate the partition function over a fundamental domain of $\Gamma\left(v_\lambda,\epsilon\,\lambda\,\overline\gamma\,\overline\delta\right)$ and not over a fundamental domain of $\Gamma\left(\frac{1}{\lambda},\lambda\,\overline\gamma\,\overline\delta\right)$. This can be resolved by observing
\begin{equation}
\label{96}
 \Gamma\left(v_\lambda,\epsilon\,\lambda\,\overline\gamma\,\overline\delta\right)\subset\Gamma\left(\frac{1}{\lambda},\lambda\,\overline\gamma\,\overline\delta\right)
\end{equation}
and, therefore,
\begin{equation}
\label{97}
 R_{\Gamma\left(v_\lambda,\epsilon\,\lambda\,\overline\gamma\,\overline\delta\right)}=\bigcup_{k=1}^{\left[\Gamma\left(\frac{1}{\lambda},\lambda\,\overline\gamma\,\overline\delta\right):\Gamma\left(v_\lambda,\epsilon\,\lambda\,\overline\gamma\,\overline\delta\right)\right]}\,M_k\,R_{\Gamma\left(\frac{1}{\lambda},\lambda\,\overline\gamma\,\overline\delta\right)}\,,
\end{equation}
with $M_k\in\Gamma\left(\frac{1}{\lambda},\lambda\,\overline\gamma\,\overline\delta\right)$. Thus, the relevant integral to compute one-loop gauge threshold corrections reads
\begin{equation}
\label{98}
 \sumset{\alpha}{\beta}{\delta}{\gamma}{I}(T,U)=\sumset{\alpha}{\beta}{\delta}{\gamma}{A}\,\frac{1}{\lambda}\,\int_{R_{\Gamma\left(v_\lambda,\epsilon\,\lambda\,\overline\gamma\,\overline\delta\right)}}\hypmes{\tau}\,\sum_{A\in\sumset{1}{1}{\overline\gamma\,\overline\delta}{1}{\mathds{M}}}\ex{-2\pi i \,T'/\lambda\det{A}}\,\frac{T'_2/\lambda}{\overline\gamma\, \overline\delta}
\end{equation}
\[
 \qquad\qquad\times\,\exp\left[-\frac{\pi\,T'_2/\lambda}{\lambda\tau_2\,U''_2}\left| \begin{pmatrix} 1 & U'' \end{pmatrix} A \begin{pmatrix} \lambda\tau\\1 \end{pmatrix}\right|^2\right]-\int_{R_\Gamma}\hypmes{\tau} \tau_2=
\]
\[
 \qquad =\sumset{\alpha}{\beta}{\delta}{\gamma}{A}\,\frac{1}{\lambda}\,
 \sum_{k=1}^{\left[\Gamma\left(\frac{1}{\lambda},\lambda \, \overline\gamma \, \overline\delta\right) : \Gamma\left( v_\lambda,\epsilon\,\lambda\,\overline\gamma\,\overline\delta\right)\right]}
 \int_{M_k\,R_{\Gamma\left(\frac{1}{\lambda},\lambda\,\overline\gamma\,\overline\delta\right)}}\hypmes{\tau} \, 
 \sum_{A\in\sumset{1}{1}{\overline\gamma\,\overline\delta}{1}{\mathds{M}}}\ex{-2\pi i \,T'/\lambda\det{A}}\,\frac{T'_2/\lambda}{\overline\gamma\, \overline\delta}
\]
\[
 \quad\qquad\times\,\exp\left[-\frac{\pi\,T'_2/\lambda}{\lambda\tau_2\,U''_2}\left| \begin{pmatrix} 1 & U'' \end{pmatrix} A \begin{pmatrix} \lambda\tau\\1 \end{pmatrix}\right|^2\right]-\int_{R_\Gamma}\hypmes{\tau} \tau_2\,.
\]
Applying the procedure of the last section on this integral results in considering the action of subgroups of $\Gamma\left(\frac{1}{\lambda},\lambda\,\overline\gamma\,\overline\delta\right)$ on $M_k R_{\Gamma\left(\frac{1}{\lambda},\lambda\,\overline\gamma\,\overline\delta\right)}$. By observing that the latter is a fundamental domain of $\Gamma\left(\frac{1}{\lambda},\lambda\,\overline\gamma\,\overline\delta\right)$ for all $k$ it follows that the contributions of the representative matrices with non-vanishing determinant have to be integrated over $\mathds{H}^+$. Moreover, it is evident that two matrices of $\Gamma_0\left(\overline\gamma\,\overline\delta\right)$ lead to the same matrix with vanishing determinant if these two matrices are connected by an element of $\langle T^{1/\lambda} \rangle$.

Above we argued that we have to rescale by $\lambda$ after a modular transformation $P'$ on $\tau$. Since the integration measure $\hypmes{\tau}$ is invariant under scaling and modular transformations of $\tau$, we have to integrate those contributions of matrices with non-vanishing determinant over $\lambda\,\mathds{H}^+=\mathds{H}^+$ and those with vanishing determinant over $\lambda\,(\mathds{H}^+/\langle T^{1/\lambda} \rangle)=\mathds{H}^+/\langle T \rangle$. Here we mean rescaled domains via a transformation $\tau\longmapsto\tau'=\lambda\tau$. As an example consider a rescaled open interval: $\lambda=2$ and $2\,]-1,2[=]-2,4[$.

Therefore, we have shown that the universal properties, mentioned below equation \eqref{94}, are fulfilled. Hence, we gain
\begin{equation}
\label{99}
\begin{split}
\sumset{\alpha}{\beta}{\delta}{\gamma}{I}=&\sumset{\alpha}{\beta}{\delta}{\gamma}{A}\,\frac{\left[\Gamma\left(\frac{1}{\lambda},\lambda \, \overline\gamma \, \overline\delta\right) : \Gamma\left( v_\lambda,\epsilon\,\lambda\,\overline\gamma\,\overline\delta\right)\right]}{\lambda}\,\int_{R_{\Gamma_0\left(\overline\gamma\overline\delta\right)}}\hypmes{\tau}\,\sum_{A\in\sumset{1}{1}{\overline\gamma\,\overline\delta}{1}{\mathds{M}}}\ex{-2\pi i \,T'/\lambda\det{A}}\\
  &\times\,\frac{T'_2/\lambda}{\overline\gamma\, \overline\delta}\,\exp\left[-\frac{\pi\,T'_2/\lambda}{\tau_2\,U''_2}\left| \begin{pmatrix} 1 & U'' \end{pmatrix} A \begin{pmatrix} \tau\\1 \end{pmatrix}\right|^2\right]-\int_{R_\Gamma}\hypmes{\tau} \tau_2
\end{split}
\end{equation}
Moreover, from the construction given in the last section it follows that an overall factor in front of the first integral gets absorbed in $\sumset{\alpha}{\beta}{\gamma}{\delta}{A}$. This is true because of the finiteness of the result. Thus, we have shown that it is possible to reduce \eqref{69} via \eqref{70}. 

For the case $\overline\gamma>\overline\delta$ it follows in complete analogy
\begin{equation}
\label{100}
 \left(1,1,\lambda\,\overline\gamma\,\overline\delta,\lambda\right)\longmapsto \left(1,1,\overline\gamma\,\overline\delta,1\right)
\end{equation}
if we rescale $T'$ as
\begin{equation}
\label{101}
 T'\longmapsto T''=\frac{T'}{\lambda}\,.
\end{equation}

This completes our treatment of the reduction of all cases $(\alpha,\beta,\gamma,\delta)$ to those of the form $(1,1,1,\delta)$.

We have shown that this reduction is always possible and we gave a procedure to achieve this. We started with $(\alpha,\beta,\gamma,\delta)$ and showed that this is equivalent to $(1,1,\tilde\gamma,\tilde\delta)=(1,1,\alpha\gamma,\beta\delta)$ via \eqref{62}, \eqref{63}. By definition it is $(1,1,\tilde\gamma,\tilde\delta)=(1,1,\lambda\bar\gamma,\lambda\bar\delta)$ with $\lambda\in\mathds{Q}$ and $\bar\gamma,\bar\delta\in\mathds{Z}$. Making use of the SNF, we showed that we can transform $U$ by a modular transformation $P$ to $U'=(P^{T})^\sharp\, U$ so that $(1,1,\lambda,\lambda\bar\gamma\bar\delta)$ (w.l.o.g.\ ). We got rid of the factor $\lambda$ by the rescaling $T\mapsto \frac{T}{\lambda}$.

Using
\begin{equation}
\sumset{1}{1}{\bar\gamma}{\bar\delta}{C}
=
\sumset{1}{1}{1}{\bar\delta}{C}\,\sumset{1}{1}{\bar\gamma}{1}{C}
=
\sumset{1}{1}{1}{\bar\delta}{C}\,\sumset{1}{1}{1}{\bar\gamma}{C}
\;,
\end{equation}

we obtain,
\begin{align}
	\nonumber\sumset{\alpha}{\beta}{\delta}{\gamma}{I}=&-\sumset{1}{1}{\bar \delta \bar \gamma}{1}{A}\,\sum_{d|\bar\delta\wedge g | \bar\gamma}\sumset{1}{1}{\bar \delta}{1}{C}(d) \sumset{1}{1}{\bar \gamma}{1}{C}(g) \\ \label{result}&\times
\left[
\ln\left(\frac{T''_2}{g d}\left|\eta\left(\frac{T''}{g d}\right)\right|^4\,\frac{ U''_2}{g d}\left|\eta\left(\frac{U''}{g d}\right)\right|^4\right)
+
\ln\left(\frac{8\,\pi\,\ex{1-\gamma_E}}{3\sqrt{3}}\right)\right]\,,
\end{align}
where
\begin{align}
&T''=\frac{\alpha\beta}{\lambda} T\,,\\
&U''=\frac{\beta}{\alpha} U'\,.
\end{align}

This leads to a symmetry group
\begin{equation}\label{resultsymmetry}
\sumset{\alpha}{\beta}{\delta}{\gamma}{\mathfrak{S}}
=
\left[
\;
\left(
\Gamma\left(\frac{\alpha\beta}{\lambda},\bar\gamma\bar\delta\frac{\lambda}{\alpha\beta}\right)
\ast
\sumset{\alpha}{\beta}{\delta}{\gamma}{\mathfrak{T}}
\right)_T
\times
\left(
\Gamma\left(\frac{\beta}{\alpha},\bar\gamma\bar\delta\frac{\alpha}{\beta}\right)
\ast
\sumset{\alpha}{\beta}{\delta}{\gamma}{\mathfrak{U}}
\right)_{U'}
\;
\right]
\ast
\sumset{\alpha}{\beta}{\delta}{\gamma}{\mathfrak{M}}\,,
\end{equation}
with
\begin{align}
	&\sumset{\alpha}{\beta}{\delta}{\gamma}{\mathfrak{T}}:T\mapsto T'=-\frac{\gamma\delta\lambda^2}{\alpha^2 \beta^2 \,T}\,,\\
	&\sumset{\alpha}{\beta}{\delta}{\gamma}{\mathfrak{U}}:U'\mapsto -\frac{\gamma\delta\alpha^2}{\beta^2\,U'}\,,\\
	&\sumset{\alpha}{\beta}{\delta}{\gamma}{\mathfrak{M}}:(T,U)\mapsto(T',U')=\left(\frac{\lambda}{\alpha^2}U',\frac{\alpha^2}{\lambda} T\right)\,,
\end{align}
$\ast$ denotes the free product of groups.

As mentioned before, the three involutive symmetries $\sumset{\alpha}{\beta}{\delta}{\gamma}{\mathfrak{T}}$, $\sumset{\alpha}{\beta}{\delta}{\gamma}{\mathfrak{U}}$ and $\sumset{\alpha}{\beta}{\delta}{\gamma}{\mathfrak{M}}$ are not modular transformations in general. $\sumset{\alpha}{\beta}{\delta}{\gamma}{\mathfrak{M}}$ corresponds to the mirror map acting on the fixed plane, while the other two correspond to a generalisation of what is usually called T-duality, in the sense that they exchange large with small radii. However, notice that---unlike the usual interpretation of T-duality---these symmetries are not contained in $\PSL{2}{Z}$ and in none of its subgroups. They pose an additional structure to the modular transformations (which form a subgroup of the modular group in general). 

Physical consequences of this observations (e.g.\ self-dual points different from $1$) in concrete models are being investigated \cite{PalKlapUnpub}.

As a remark, this can be regarded as the proof of a conjecture made in \cite{ErlerTdual}: that there always exists an involutive symmetry interchanging large and small radii, in any orbifold model (even with non-vanishing Wilson lines). Until now, its existence could only be shown for simple toy-models.

\section{Conclusions and Outlook}

Our goal in this work has been the calculation of threshold corrections in general abelian toroidal orbifold models, allowing for arbitrary discrete Wilson lines. So far, only threshold corrections in the absence of discrete Wilson lines were known. However, the phenomenologically most promising models possess non-vanishing discrete Wilson lines.

The path followed in our work can be divided in two parts. The first part consists of chapters 2 and 3, in which we aim to reformulate the task at hand in terms of a well-defined technical problem. We were able to show that every orbifold model can be assigned four characteristic numbers\footnote{Note, that we compute the integrals in $\eqref{master}$ for all \emph{rational} numbers $\alpha,\beta,\gamma,\delta$. However, in physical models, the most general case is $(1,1,\gamma',\delta')$ with $\gamma',\delta'\in \mathds{Q}$, which is equivalent to $(\alpha,\beta,\gamma,\delta)\in\mathds{Z}^4$.} $(\alpha,\beta,\gamma,\delta)\in\mathds{Z}^4$ (one set for every fixed plane) which determine a special integral 

\begin{equation}
\begin{split}
	\sumset{\alpha}{\beta}{\delta}{\gamma}{I}(T,U)=\sumset{\alpha}{\beta}{\delta}{\gamma}{A}\,&\int_{R_{\Gamma'}}\hypmes{\tau}\,\sum_{A\in\sumset{\alpha}{\beta}{\delta}{\gamma}{\mathds{M}}}\ex{-2\pi \I \,T\det{A}}\,\frac{T_2}{\gamma \delta}\,\\&\times\exp\left[-\frac{\pi\,T_2}{\tau_2\,U_2}\left| \begin{pmatrix} 1 & U \end{pmatrix} A \begin{pmatrix} \tau\\1 \end{pmatrix}\right|^2\right]-\int_{R_\Gamma}\hypmes{\tau} \tau_2\,,
\end{split}
\tag{\ref{master}}
\end{equation}

(depending on the moduli of the fixed plane). Knowledge of this integral (for every fixed plane) together with the beta function coefficients is enough to calculate the threshold corrections $\Delta_a$.

The second part of our work is devoted to solving these integrals $\sumset{\alpha}{\beta}{\delta}{\gamma}{I}(T,U)$. This problem turns out to be quite difficult, mainly for two reasons: the domain of integration is the fundamental domain of some sub-group of $\PSL{2}{Z}$ and the integrand contains an infinite sum over all matrices which fulfil certain divisibility conditions. The former is difficult to construct in general and it is hard to find a parameterisation for the latter.

Fortunately, it suffices to solve the case $(1,1,1,\delta)$ with $\delta\in\mathds{Z}$, since all other cases can be mapped onto this one by fractional linear transformations of the fixed plane moduli. Still, both problems survive in a less complicated form.

The first one can be circumvented by only using defining properties of a fundamental domain. To tackle the second problem, we had to develop techniques to cope with the divisibility condition in the infinite sum of matrices. The tricks which emanate from this (cf.\ lemmas \ref{red}, \ref{choice}, \ref{novanish} and \ref{vanish}) were unfamiliar to us before and we were not able to find any similar techniques in the literature. Though, we think that it is most improbable that we were the first to ever use such techniques and it would be interesting to find works using them (or similar versions thereof).

Eventually, we obtained the result
\begin{align}
	\nonumber\sumset{\alpha}{\beta}{\delta}{\gamma}{I}=&-\sumset{1}{1}{\bar \delta \bar \gamma}{1}{A}\,\sum_{d|\bar\delta\wedge g | \bar\gamma}\sumset{1}{1}{\bar \delta}{1}{C}(d) \sumset{1}{1}{\bar \gamma}{1}{C}(g) \\ &\times
\left[
\ln\left(\frac{T''_2}{g d}\left|\eta\left(\frac{T''}{g d}\right)\right|^4\,\frac{ U''_2}{g d}\left|\eta\left(\frac{U''}{g d}\right)\right|^4\right)
+
\ln\left(\frac{8\,\pi\,\ex{1-\gamma_E}}{3\sqrt{3}}\right)\right]\;,
\tag{\ref{result}}
\end{align}
with $T''=\frac{\alpha\beta}{\lambda} T$, $U''=\frac{\beta}{\alpha} U'$.

It is particularly interesting how naturally number theoretic notions like prime numbers, greatest common divisor, lowest common multiple, etc.\ appear in our results and proofs. We found this quite surprising and suspect that there are good reasons for this beyond our present understanding of the problem. Therefore, we would be especially interested in understanding what actually was computed in our work from a mathematical point of view. It is known that one-loop string thresholds have a close relationship to the Ray-Singer/analytic torsion \cite{RaySinger}. The result for the case $(1,1,1,1)$ agrees with (the logarithm of) the result of Ray-Singer for a (complex) line-bundle with flat connection and, hence, could be viewed as the analytic torsion of the fixed plane. However, the interpretation of the result for general $(\alpha,\beta,\gamma,\delta)$ remains unclear. Especially the coefficients $\sumset{1}{1}{\delta}{1}{C}(d)$, $\sumset{\alpha}{\beta}{\delta}{\gamma}{A}$ appearing in our results should have some close relationship to bundle cohomology.

Another interesting aspect of our results are the modular symmetries. We obtain the symmetry group
\begin{equation}
\sumset{\alpha}{\beta}{\delta}{\gamma}{\mathfrak{S}}
=
\left[
\;
\left(
\Gamma\left(\frac{\alpha\beta}{\lambda},\bar\gamma\bar\delta\frac{\lambda}{\alpha\beta}\right)
\ast
\sumset{\alpha}{\beta}{\delta}{\gamma}{\mathfrak{T}}
\right)_T
\times
\left(
\Gamma\left(\frac{\beta}{\alpha},\bar\gamma\bar\delta\frac{\alpha}{\beta}\right)
\ast
\sumset{\alpha}{\beta}{\delta}{\gamma}{\mathfrak{U}}
\right)_{U'}
\;
\right]
\ast
\sumset{\alpha}{\beta}{\delta}{\gamma}{\mathfrak{M}}\,,
\tag{\ref{resultsymmetry}}
\end{equation}
where $\ast$ denotes the free product of groups.

Besides the expected occurrence of modular symmetries and the mirror map $\sumset{\alpha}{\beta}{\delta}{\gamma}{\mathfrak{M}}$, there appear two involutive symmetries $\sumset{\alpha}{\beta}{\delta}{\gamma}{\mathfrak{T}}$ and $\sumset{\alpha}{\beta}{\delta}{\gamma}{\mathfrak{U}}$. These exchange small with large radii and, in that sense, are a generalisation of T-duality. It should be stressed that, in contrast to the usual version of T-duality (which is a transformation in $\PSL{2}{Z}$), these are \emph{not} modular symmetries. Work on physical implications of these observations in various models is in progress \cite{PalKlapUnpub}.

In that context, we would also like to point out some side results of our work. In chapter 3 we analysed the momentum and winding lattices of the $\mathcal{N}=2$ sector. We were able to show how discrete Wilson lines effect these lattices and how to parametrise them. This was needed in the context of our work, to be able to characterise every obrifold model (in our sense) by four numbers $\alpha,\beta,\gamma,\delta$. However, we think that these results are of more general use in the context of orbifold model building. They should enable one to determine the spectrum of the $\mathcal{N}=2$ sector of a general orbifold model, i.e.\ with non-factorizable lattice and discrete Wilson lines and might also give hints on how to repeat this analysis for the other sectors of boundary conditions. To the extent of our knowledge, this is still an open problem and we think that our results might be applicable to solve this problem.

Finally, we would like to state that we think that it should be possible to generalise the given procedure in order to be able to calculate even more general integrals than $\sumset{\alpha}{\beta}{\delta}{\gamma}{I}(T,U)$. Necessary for our method to work in principle are the following properties: the domain of integration should be the fundamental domain of some group, this group should be a subgroup of the symmetry group of the integrand, the infinite sum over matrices should be now an (infinite) sum over group elements and one has to know a basic building block to trace everything back to. This might make the method applicable, in principle, to other technical problems.
\section*{Acknowledgement}
We thank Andre Lukas for helpful comments.

C.P.\ thanks Martin Schottenloher and the mathematical institute of the University of Munich (LMU) for hospitality and support during this work.

M.K.\ thanks Michael Ratz and the physics department of the Technical University Munich for hospitality and support during parts of this work. M.K.\ was supported by the Graduiertenkolleg GRK 1054 of the German Research Foundation (DFG) and the Lamb \& Flag scholarship of St John's College Oxford.


\bibliographystyle{unsrt}
\bibliography{mathpaper}

\begin{thebibliography}{10}

\bibitem{stringsonorbifolds1}
Lance~J. Dixon, Jeffrey~A. Harvey, C.~Vafa, and Edward Witten.
\newblock Strings on orbifolds.
\newblock {\em Nucl. Phys.}, B261:678--686, 1985.

\bibitem{stringsonorbifolds2}
Lance~J. Dixon, Jeffrey~A. Harvey, C.~Vafa, and Edward Witten.
\newblock Strings on orbifolds. 2.
\newblock {\em Nucl. Phys.}, B274:285--314, 1986.

\bibitem{ratz1}
Wilfried Buchmuller, Koichi Hamaguchi, Oleg Lebedev, and Michael Ratz.
\newblock {Supersymmetric standard model from the heterotic string}.
\newblock {\em Phys. Rev. Lett.}, 96:121602, 2006.

\bibitem{ratz2}
Wilfried Buchmuller, Koichi Hamaguchi, Oleg Lebedev, and Michael Ratz.
\newblock {Supersymmetric standard model from the heterotic string. II}.
\newblock {\em Nucl. Phys.}, B785:149--209, 2007.

\bibitem{ratz3}
Oleg Lebedev et~al.
\newblock {The Heterotic Road to the MSSM with R parity}.
\newblock {\em Phys. Rev.}, D77:046013, 2008.

\bibitem{Dienes}
Keith~R. Dienes.
\newblock {String Theory and the Path to Unification: A Review of Recent
  Developments}.
\newblock {\em Phys. Rept.}, 287:447--525, 1997.

\bibitem{Kaplunovsky}
Vadim~S. Kaplunovsky.
\newblock {One loop threshold effects in string unification}.
\newblock 1992.

\bibitem{Dixon_et_al2}
Lance~J. Dixon, Vadim Kaplunovsky, and Jan Louis.
\newblock {Moduli dependence of string loop corrections to gauge coupling
  constants}.
\newblock {\em Nucl. Phys.}, B355:649--688, 1991.

\bibitem{Stieberger4}
P.~Mayr and S.~Stieberger.
\newblock {Threshold corrections to gauge couplings in orbifold
  compactifications}.
\newblock {\em Nucl. Phys.}, B407:725--748, 1993.

\bibitem{ErlerSpalinski}
Jens Erler and Michal Spalinski.
\newblock {Modular groups for twisted Narain models}.
\newblock {\em Int. J. Mod. Phys.}, A9:4407--4430, 1994.

\bibitem{ErlerTdual}
J.~Erler, D.~Jungnickel, Hans~Peter Nilles, and M.~Spalinski.
\newblock {Duality symmetry and its anomalies}.
\newblock Talk at the Int. Workshop on Electroweak Physics Beyond the Standard
  Model, Valencia, Spain, Oct 2-5, 1991.

\bibitem{BailinCoxeter}
D.~Bailin, A.~Love, W.~A. Sabra, and Steven Thomas.
\newblock {String loop threshold corrections for Z(N) Coxeter orbifolds}.
\newblock {\em Mod. Phys. Lett.}, A9:67--80, 1994.

\bibitem{LoveDuality}
A.~Love, W.~A. Sabra, and Steven Thomas.
\newblock {Background symmetries in orbifolds with discrete Wilson lines}.
\newblock {\em Nucl. Phys.}, B427:181--202, 1994.

\bibitem{LoveWilson}
A.~Love and S.~Todd.
\newblock {Modular symmetries of threshold corrections for Abelian orbifolds
  with discrete Wilson lines}.
\newblock {\em Nucl. Phys.}, B481:253--288, 1996.

\bibitem{lustshapere}
S.~Ferrara, D.~Lust, Alfred~D. Shapere, and S.~Theisen.
\newblock {Modular Invariance in Supersymmetric Field Theories}.
\newblock {\em Phys. Lett.}, B225:363, 1989.

\bibitem{Ibanez:1986tp}
Luis~E. Ibanez, Hans~Peter Nilles, and F.~Quevedo.
\newblock {Orbifolds and Wilson Lines}.
\newblock {\em Phys. Lett.}, B187:25--32, 1987.

\bibitem{GSW}
M.B. Green, J.~Schwarz, and E.~Witten.
\newblock {\em Superstring Theory I \& II}.
\newblock Cambridge University Press, 1987.

\bibitem{Witten1}
K.~S. Narain, M.~H. Sarmadi, and Edward Witten.
\newblock {A Note on Toroidal Compactification of Heterotic String Theory}.
\newblock {\em Nucl. Phys.}, B279:369, 1987.

\bibitem{ErlerKlemm}
Jens Erler and Albrecht Klemm.
\newblock {Comment on the generation number in orbifold compactifications}.
\newblock {\em Commun. Math. Phys.}, 153:579--604, 1993.

\bibitem{NULLES2}
Jens Erler, Dirk Jungnickel, and Hans~Peter Nilles.
\newblock {Space duality and quantized Wilson lines}.
\newblock {\em Phys. Lett.}, B276:303--310, 1992.

\bibitem{PalKlapUnpub}
M.\ Klaput and C.\ Paleani.
\newblock The impact of discrete wilson lines on the moduli dependence of
  one-loop string thresholds.
\newblock \textit{unpublished}, 2010.

\bibitem{smithnf}
H.J.S. Smith.
\newblock {On systems of linear indeterminate equations and congruences}.
\newblock {\em Collected Math. Papers}, 1:367–409, 1860 (reprint 1979).

\bibitem{RaySinger}
D.~B. Ray and I.~M. Singer.
\newblock {Analytic torsion for complex manifolds}.
\newblock {\em Annals Math.}, 98:154--177, 1973.

\end{thebibliography}

\end{document}